\newcommand{\norm}[1]{\left\| #1 \right\|}
\newcommand{\lsup}[1]{\underset{#1\to\infty}{\overline{\lim}}}
\newcommand{\linf}[1]{\underset{#1\to\infty}{\underline{\lim}}}
\theoremstyle{plain}
\newtheorem{theorem}{Theorem}
\newtheorem{corollary}[theorem]{Corollary}
\newtheorem{lemma}[theorem]{Lemma}
\newenvironment{remark}{\par {\noindent \it \sc Remark.} \small \it }
\begin{document}
\title{Population Level Activity in Large Random Neural Networks }

\author{James MacLaurin, Moshe Silverstein, Pedro Vilanova}
\maketitle
\abstract{We determine limiting equations for large asymmetric `spin glass' networks. The initial conditions are not assumed to be independent of the disordered connectivity: one of the main motivations for this is that allows one to understand how the structure of the limiting equations depends on the energy landscape of the random connectivity. The method is to determine the convergence of the double empirical measure (this yields population density equations for the joint distribution of the spins and fields). An additional advantage to utilizing the double empirical measure is that it yields a means of obtaining accurate finite-dimensional approximations to the dynamics.}
\section{Introduction}

This paper concerns the high-dimensional dynamics of asymmetric random neural networks of the form, for $j \in I_N = \lbrace 1,2,\ldots,N \rbrace$.
\begin{equation}\label{eq: Sompolinsky}
dx^j_t = \big(- x^j_t / \tau+ \beta N^{-1/2} \sum_{k=1}^N J^{jk} \lambda(x^k_t) \big)dt + \sigma_t dW^j_t,
\end{equation}
where $\lambda$ is a Lipschitz function, $\tau$ is a constant, and $\lbrace J^{jk} \rbrace_{j,k \in I_N}$ are sampled independently from a centered normal distribution of variance $1$, $\lbrace W^j_t \rbrace_{j\in I_N}$ are Brownian Motions. We study the convergence of the double empirical measure
\begin{align} \label{eq: double empirical measure}
N^{-1}\sum_{j\in I_N} \delta_{(\mathbf{z}^j_{[0,T]}, \mathbf{G}^j_{[0,T]})} ,
\end{align}
where $G^j_t = N^{-1/2} \sum_{k=1}^N J^{jk} \lambda(x^k_t) \big)$. The dynamics of high-dimensional recurrent neural networks have many applications. They have been heavily applied to  neuroscientific problems: many scholars think that they can be used to explain how the brain balances excitation and inhibition \cite{Brunel2000, Ocker2017,Fasoli2019,Landau2018,Cessac2019}. They have been used to study spatially-extended patterns in the brain \cite{Rosenbaum2014,Rosenbaum2017}. Most recently, it has been recognized that they are of fundamental importance to data science \cite{BenArousMei2019,Alaoui2020,Gamarnik2021,Segadlo2022}. For more applications, see the mongraph of Helias and Dahmen \cite{Helias2020} and the recent survey in \cite{Parisi2023}.

There exist limiting `correlation equations' \cite{Crisanti2018,Helias2020} that describe the effective dynamics of high dimensional random neural networks. These constitute delayed integro-differential equations that have proven very difficult to analyze, particularly over short timescales. A related problem is that the correlation equations have only been determined from initial conditions that are independent of the connectivity. This means that they may not be accurate over longer timescales that diverge with $N$. For example, many scholars are interested in understanding the nature of the limiting dynamics after the system attains a particular state (such as, if it enters an `energy well' of specified characteristics, does it escape?). To address this question, one needs to start the dynamics at a particular point in the energy landscape of the connectivity (and therefore the initial condition is disorder-dependent). 

The literature concerning large $N$ limiting equations for random neural networks has a complex history. Sompolinsky, Crisanti and Sommers anticipated that Path Integral methods would yield limiting dynamical equations \cite{Sompolinsky1988} - the derivation was published in a later work \cite{Crisanti2018}. We refer the reader to the excellent discussion in the monograph of Helias and Dahmen \cite{Helias2020}.

Path Integral methods (as practiced by physicists) yield population density equations by determining where the probability measure for the $N$-dimensional system concentrates. In the probability literature, one of the most powerful means of addressing this question is the theory of Large Deviations \cite{Dembo1998}. Large Deviations theory was used to determine spin glass dynamics in the pioneering papers of Ben Arous and Guionnet \cite{BenArous1995,Arous1997,Guionnet1997a}; they obtained the first rigorous results concerning the large $N$ limit of random neural networks. After this work, Grunwald employed Large Deviations theory to obtain correlation / response equations for random neural networks whose spins flip randomly between discrete states \cite{Grunwald1996}. Moynot and Samuelides studied the non-Gaussian case \cite{Moynot2002}. Faugeras and MacLaurin extended the work of Ben Arous and Guionnet to include correlations in the connectivity \cite{Faugeras2015}. Touboul and Cabana determined limiting equations for spatially-extended systems \cite{Cabana2018,Cabana2018a}. Faugeras, Soret and Tanre \cite{Faugeras2019a} determined novel integral equations to describe the state of these systems.

On a related note, correlation-response equations for symmetric random neural networks were first derived by Crisanti, Horner Sommers \cite{Crisanti1993} and Cugliandolo and Kurchan \cite{Cugliandolo1993}. Ben Arous, Dembo and Guionnet \cite{BenArous2006} proved the accuracy of these correlation / response equations for symmetric random neural networks, employing concentration inequalities. 

Broadly-speaking, this paper follows the approach of Ben Arous and Guionnet \cite{BenArous1995}. We employ the theory of Large Deviations to determine the large $N$ limit of the empirical measure. However the main novelties of our approach are:
\begin{itemize}
\item We employ a general class of connectivity-dependent initial conditions. This unsurprisingly yields a different limiting dynamics as $N\to\infty$. Connectivity-dependent initial conditions were employed in the papers of Ben Arous and Guionnet \cite{Arous1997} (who studied dynamics started at the equilibrium distribution, in the high temperature regime) and Dembo and Subag \cite{Dembo2020}.
\item We study the double empirical measure, that includes information about both the spins and the fields. This has several advantages: it facilitates finite-dimensional approximations to the dynamics that are very accurate, and it facilitates a broader class of disorder-dependent initial condition. For spin-glass dynamics, the Large Deviations of the double empirical measure was determined by Grunwald \cite{Grunwald1996} for jump-Markov systems.
\item We include Replicas (i.e. $M$ copies of the system with the same connectivity, but independent Brownian Motions). This broadens the class of admissible disorder-dependent initial conditions.
\item The function $\lambda$ can be unbounded and the diffusion coefficient $\sigma_t$ can vary in time. The time-varying nature of $\sigma_t$ is essential for studying how periodic environmental noise in the brain shapes the dynamics of random neural networks.
\end{itemize}

\textit{Notation:}
Let $I_N = \lbrace 1,2,\ldots, N \rbrace$ be the set of neuron indices. For any Polish space $\mathcal{X}$, let $\mathcal{P}(\mathcal{X})$ denote all probability measures over $\mathcal{X}$. The space $\mathcal{C}([0,T],\mathbb{R})$ is always endowed with the supremum topology (unless indicated otherwise), i.e.
\[
\norm{x_{[0,T]}} = \sup_{t\in [0,T]} |x_t|
\]
For $\mathbf{y} \in \mathbb{R}^N$, $\norm{\mathbf{y}}$ is the Euclidean norm. 
For any probability measures $\mu$ and $\nu$ over a Polish Space, let $\mathcal{R}(\mu || \nu )$ denotes the relative entropy of measure $\mu$ with respect to $\nu$. For any two measures on the same metric space with metric $d$, $d_W(\cdot,\cdot)$ indicates the Wasserstein distance, i.e.
\begin{equation}
d_W(\mu,\nu) = \inf_{\zeta} \mathbb{E}^{\zeta}\big[d(x,y) \big],
\end{equation}
where the infimum is taken over all $\zeta$ on the product space such that the marginal of the first variable is equal to $\mu$ and the marginal of the second variable is equal to $\nu$. In the particular case that $\mu,\nu \in \mathcal{C}([0,T],\mathbb{R}^M)$, the distance is (unless otherwise indicated) $d(x,y) = \sup_{t\in [0,T]} \sup_{p\in I_M} \big| x^p_t - y^p_t \big|$.

For any $\mu \in \mathcal{P}\big( \mathcal{C}([0,T],\mathbb{R}^M)^2 \big)$, we write $\mu^{(1)} , \mu^{(2)} \in  \mathcal{P}\big( \mathcal{C}([0,T],\mathbb{R}^M) \big)$ to be the marginals over (respectively) the first $M$ variables and last $M$ variables.

\section{Outline of Model and Main Results}

We are going to rigorously determine the limiting dynamics of multiple replicas (with identical connections $\mathbf{J}$, but with independent initial conditions and independent Brownian Motions).  We let the superscript $a$ denote replica $a \in I_M = \lbrace 1,2,\ldots,M \rbrace$, and consider the system
\begin{align}
dz^{a,j}_t =& \big( -  z^{a,j}_t / \tau + G^{a,j}_t  \big)dt + \sigma_t dW^{a,j}_t \text{ where }\label{eq: SDE 1 continuous}\\
G^{a,j}_t =& N^{-1/2}\sum_{k \in I_N} J^{jk}\lambda(z^{a,k}_t).
\end{align}
We assume that $\lambda \in \mathcal{C}^2(\mathbb{R})$: this means in particular that there is a constant $C_{\lambda}$ such that $|\lambda(x) - \lambda(y)| \leq C_{\lambda}|x-y|$. The noise intensity $t \to \sigma_t$ is taken to be continuous and non-random, and such that for constants $\underline{\sigma} $ and $\bar{\sigma}$, 
\begin{equation}
0 < \underline{\sigma} \leq \sigma_t \leq \bar{\sigma}.
\end{equation}
Our major motivation for time-varying diffusivity lies in neuroscience: often synaptic noise exhibits particular rhythms. It has been of major interest how these rhythms shape pattern formation \cite{Brunel2003}.

The connectivities $\lbrace J^{jk}\rbrace$ are taken to be independent centered Gaussian variables, with variance
\[
\mathbb{E}\big[ J^{jk} J^{lm} \big] = \delta(j,l)\delta(k,m) .
\]
Let $\gamma^N \in \mathcal{P}\big( \mathbb{R}^{N^2} \big)$ be their joint probability law. There are two cases for the initial conditions  $\lbrace z^{a,j}_0\rbrace_{j\in I_N,a\in I_M}$ that are considered in this paper.

\subsection{Assumptions on the Initial Conditions}

\subsubsection{Case 1: Connectivity-Dependent Initial Conditions }
\label{subsection connectivity dependent initial conditions}
\vspace{.8em}
The probability law of the initial conditions is assumed to be such that for any measurable set $\mathcal{A} \subset \mathbb{R}^{MN}$,
\begin{align}
\mathbb{P}\big( \mathbf{Z}_0 \in \mathcal{A} \big) = \int_{\mathcal{A}} \rho^N_{\mathbf{J}}(\mathbf{x}) d\mathbf{x},
\end{align}
where the probability density $\rho^N_{\mathbf{J}}  :\mathbb{R}^{MN} \to \mathbb{R}^+$ is defined as follows.

Let $\mu^N$ be the uniform Lebesgue Measure over the set $\mathbb{R}^{MN}$. Conditionally on a realization $\mathbf{J}$ of the random connections, let the probability density $\rho^N_{\mathbf{J}}  :\mathbb{R}^{MN} \to \mathbb{R}^+$ be such that for $\delta_N > 0$ (with $\delta_1 = 1$ and $\delta_N$ decreasing to $0$ as $N\to\infty$), there exists some $\kappa \in \mathcal{P}\big( \mathbb{R}^{2M} \big)$ such that
\begin{align}
\rho^N_{\mathbf{J}}(\mathbf{g}) =& \chi\big\lbrace d_W\big( \hat{\mu}^N(\mathbf{z}_0,\mathbf{G}_0) , \kappa\big) \leq \delta_N \big\rbrace / Z^N_{\mathbf{J}} \text{ where }\\
Z^N_{\mathbf{J}} =& \mu^N \big( d_W\big( \hat{\mu}^N(\mathbf{z}_0,\mathbf{G}_0) , \kappa\big) \leq \delta_N \big) \text{ and } \label{eq: Z N J definition} \\
  \hat{\mu}^N(\mathbf{z}_0,\mathbf{G}_0) =& N^{-1}\sum_{j\in I_N} \delta_{\mathbf{z}^j_0 , \mathbf{G}^j_0} \in \mathcal{P}\big(\mathbb{R}^{2M} \big)
\end{align}

Roughly speaking, we need to assume that as $N\to\infty$, the law of $\mathbf{z}^N_0$ behaves like its annealed average. Its assumed that (i) $\kappa$ has a finite second moment in each of its variables, and (ii) we have the bound
\begin{align}
\linf{N} N^{-1}\log  \mathbb{E}[ Z^N_{\mathbf{J}}] > -\infty,  \label{eq: bounded moment 1} 
\end{align}
It is also assumed that for any $\epsilon > 0$,
\begin{align}
\lsup{N} N^{-1} \log \mathbb{P}\big( \big| N^{-1}\log Z^N_{\mathbf{J}} - \mathbb{E}[ Z^N_{\mathbf{J}}] \big| \geq \epsilon \big) < 0.\label{eq: bounded moment 2}
\end{align}
Define $\mathfrak{V}_0, \tilde{\mathfrak{V}}_0 \in \mathbb{R}^{M \times M}$ to be the covariance matrix with entries, $p,q\in I_M$,
\begin{align}
\mathfrak{V}_{0}^{pq} &= \mathbb{E}^{\kappa}\big[ \lambda(z^{p}_0)\lambda(z^{q}_0)    \big] \label{eq: V 0} \\
\tilde{\mathfrak{V}}_{0}^{pq} &= \mathbb{E}^{\kappa}\big[  z^{p}_0 z^{q}_0   \big].\label{eq: tilde V 0} 
\end{align}
It is also assumed that both $\mathfrak{V}_0$ and $\tilde{\mathfrak{V}}_0$ are invertible.

\subsubsection{Case 2: Connectivity-Independent Initial Conditions }
One can also assume that the initial conditions $(z^{j}_0)_{j\in I_N}$ are (i) independent of the connectivity, and (ii) sampled independently from a $\mathbb{R}^M$-valued probabilistic distribution of bounded variance. This distribution is written as $\hat{\kappa} \in \mathcal{P}(\mathbb{R}^M)$.

\subsection{Main Result}
Our main result is that the empirical measure converges to a fixed point of a mapping $\Phi: \mathcal{U} \to \mathcal{U}$. Here $\mathcal{U} \subset  \mathcal{P}\big( \mathcal{C}([0,T],\mathbb{R}^M)^2 \big)$ is defined in \eqref{eq: U definition}, consisting of (i) a broad class of measures with nice regularity properties, and (ii) such that the empirical measure inhabits $\mathcal{U}$ with unit probability.

For any $\mu \in \mathcal{U}$, in the case of connectivity-dependent initial conditions $\Phi(\mu)$ is specified as follows. It is defined to be the law of Gaussian random variables $\big(z^p_t , G^p_t\big)_{p\in I_M \fatsemi t\in [0,T]}$ such that (i) $(z^p_0,G^p_0)_{p\in I_M}$ are distributed according to $\kappa$, and (ii) conditionally on the initial conditions,  $\big( G^{p}_t \big)_{p\in I_M \fatsemi t\in [0,T]}$ is a  Gaussian system such that $\mathbb{E}[G^{p}_s] = \mathfrak{m}^{p}_{s}(\mu,\mathbf{G}_0)$ and the conditional variance is   
\begin{align}
\mathbb{E}\big[ \big( G^{p}_s  - \mathfrak{m}^{p}_{s}(\mu,\mathbf{G}_0) \big) \big( G^{q}_t -  \mathfrak{m}^{q}_{t}(\mu,\mathbf{G}_0)\big) \; | \; \mathbf{G}_0, \mathbf{z}_0 \big]= \mathfrak{W}^{\mu,pq}_{st}.
\end{align}
Here
\begin{align}
\mathfrak{W}^{\mu,pq}_{st} &= \sum_{a,b\in I_M} \mathbb{E}^{\mu}\big[ \lambda(z^{p}_s) \lambda(z^{a}_0) \big] \mathbb{E}^{\mu}\big[ \lambda(z^{q}_t) \lambda(z^{b}_0 \big]  \big(\mathfrak{V}_{\mu,0}^{-1}\big)^{ab}\label{eq: W definition beta 00} \\
\mathfrak{V}_{\mu,0}^{pq} &= \mathbb{E}^{\mu}\big[ \lambda(z^{p}_0)\lambda(z^{q}_0) \big] \\
\mathfrak{m}^{p}_{s}(\mu,\mathbf{g}) &= \sum_{a,b\in I_M} \mathbb{E}^{\mu}\big[ \lambda(z^{p}_s) \lambda(z^{a}_0 \big] \big(\mathfrak{V}_{\mu,0}^{-1}\big)^{ab} g^{b} \label{eq: conditional gaussian 1 beta 00} 
\end{align}
Letting $\big(W^p_{[0,T]}\big)_{p\in I_M}$ be Brownian Motions that are independent of $\mathbf{G}^{\mu}$ , we define $(z^p_t)_{p\in I_M \fatsemi t\in [0,T]}$ to be the strong solution to the stochastic differential equation
\begin{align} \label{eq: R nu definition initial}
dz^p_t = \big( - \tau^{-1} z^p_t + G^{\mu,p}_t \big) dt + \sigma_t dW^p_t.
\end{align}
In the case of connectivity-independent initial conditions, $\Phi$ is defined as follows. One first defines $\big( G^{p}_t \big)_{p\in I_M \fatsemi t\in [0,T]}$ to be a centered Gaussian system such that
\[
\mathbb{E}\big[ G^p_t G^q_s \big] =  \mathbb{E}^{\mu}\big[ \lambda(z^p_t) \lambda(z^q_s) \big].
\]
$(z^p_0)_{p\in I_M}$ is independent of $\big( G^{p}_t \big)_{p\in I_M \fatsemi t\in [0,T]}$ and distributed according to $\hat{\kappa}$. For Brownian Motions $\big(W^p_{[0,T]}\big)_{p\in I_M}$, that are independent of $\mathbf{G}^{\mu}$, $z^p_t$ is the strong solution of \eqref{eq: R nu definition initial}.


\begin{theorem}\label{Theorem Main}
The mapping $\Phi$ is well-defined for all $\mu \in \mathcal{U}$. Furthermore there exists a unique probability measure $\xi \in \mathcal{P}\big( \mathcal{C}([0,T],\mathbb{R}^M)^2 \big)$ such that with unit probability,
\begin{equation}
\lim_{N\to\infty} N^{-1}\sum_{j\in I_N} \delta_{(\mathbf{z}^j_{[0,T]}, \mathbf{G}^j_{[0,T]})} = \xi.
\end{equation}
$\xi$ is the unique measure such that $\Phi(\xi) = \xi$. Furthermore, 
\begin{align}
\xi = \lim_{n\to\infty} \xi^{(n)},
\end{align}
where $\xi^{(n+1)} = \Phi( \xi^{(n)})$ and $\xi^{(1)}$ is any measure in $\mathcal{U}$.

\end{theorem}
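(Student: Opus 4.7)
The plan is to follow the Ben Arous–Guionnet strategy, adapted to handle the double empirical measure and the disorder-dependent initial conditions. The argument splits into four largely independent pieces: (a) well-definedness of $\Phi$, (b) exponential tightness of the double empirical measure, (c) identification of limit points as fixed points of $\Phi$ via a Large Deviations argument, and (d) uniqueness of the fixed point via a contraction.

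For (a), fix $\mu \in \mathcal{U}$. Invertibility of $\mathfrak{V}_{\mu,0}$ (built into the definition of $\mathcal{U}$ via the hypotheses on $\kappa$) ensures $\mathfrak{m}^p_s$ and $\mathfrak{W}^{\mu,pq}_{st}$ in \eqref{eq: W definition beta 00}–\eqref{eq: conditional gaussian 1 beta 00} are defined; the formula for $\mathfrak{W}$ is precisely the Schur-complement conditional covariance of a jointly Gaussian vector conditioned on its $t=0$ coordinates, hence is automatically positive semi-definite. A Kolmogorov-continuity estimate on $\mathfrak{W}^{\mu,pq}_{st}$ (inherited from path regularity of elements of $\mathcal{U}$) provides a continuous modification of $\mathbf{G}^\mu$. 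Then \eqref{eq: R nu definition initial} is a linear SDE with Lipschitz drift and deterministic diffusion $\sigma_t$, so it admits a unique strong solution; a Gronwall bound on $\mathbb{E}\|z^p_{[0,T]}\|^2$ places $\Phi(\mu)$ back in $\mathcal{U}$.

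For (b) and (c): condition on $\mathbf{J}$ and apply Girsanov to remove each drift $G^{a,j}_t$ from \eqref{eq: SDE 1 continuous}, reducing the reference law to a product of Ornstein–Uhlenbeck measures with Radon–Nikodym derivative exponentially quadratic in the fields. The connections $\mathbf{J}$ enter only linearly inside these exponentials, so Fubini and the Gaussian integral over $\gamma^N$ express the annealed density of $(\mathbf{z}_{[0,T]}, \mathbf{G}_{[0,T]})$ as a functional purely of the double empirical measure $\hat\mu^N$. Together with the bounds \eqref{eq: bounded moment 1}–\eqref{eq: bounded moment 2} on $Z^N_{\mathbf{J}}$, this yields a full Large Deviations Principle on $\mathcal{U}$ with a rate function $I$ whose zeros are exactly the measures $\mu \in \mathcal{U}$ for which the Gaussian law $\Phi(\mu)$ coincides with the joint law of the corresponding empirical object: i.e. $\Phi(\mu) = \mu$. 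Exponential tightness, needed to upgrade weak convergence along subsequences, comes from the same Gaussian moment bounds.

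For (d), I would prove that $\Phi$ is a contraction on $\mathcal{U}$ for a suitable time-weighted Wasserstein metric of the form $d_T(\mu,\nu) = \sup_{s\leq T} e^{-Ks} d_W(\mu_{[0,s]}, \nu_{[0,s]})$ with $K$ large. Couple $\mathbf{G}^\mu$ and $\mathbf{G}^\nu$ through a common Gaussian white noise, and couple \eqref{eq: R nu definition initial} for $\mu$ and $\nu$ through the same $W^p$; Lipschitzness of $\lambda$ and boundedness of $\mathfrak{V}_{\cdot,0}^{-1}$ uniformly on $\mathcal{U}$ then yield $d_W(\Phi(\mu), \Phi(\nu))_{[0,T]} \leq C \int_0^T d_W(\mu_{[0,s]}, \nu_{[0,s]}) ds$, which implies contraction for the weighted metric. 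Banach's theorem gives the unique fixed point $\xi$ and convergence of $\xi^{(n+1)} = \Phi(\xi^{(n)})$ from any starting point in $\mathcal{U}$; combining this with (c) identifies $\xi$ as the almost sure limit of the double empirical measure.

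The main obstacle I expect is the coupling in (d): the covariance $\mathfrak{W}^{\mu,\cdot\cdot}$ depends on $\mu$ nonlinearly through $\mathfrak{V}_{\mu,0}^{-1}$, so controlling $\mathbf{G}^\mu - \mathbf{G}^\nu$ in terms of $d_W(\mu,\nu)$ requires both a quantitative lower bound on $\det \mathfrak{V}_{\mu,0}$ uniform over $\mathcal{U}$ and a careful construction of the coupling via a square-root factorization of the Gaussian covariance operator that is itself Lipschitz in $\mu$. A secondary subtlety is that the LDP machinery in step (c) must accommodate the extra conditioning on the initial empirical measure being close to $\kappa$, which is where assumptions \eqref{eq: bounded moment 1}–\eqref{eq: bounded moment 2} are actually used, rather than in (a), (b), or (d).
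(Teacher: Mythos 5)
Your four-step decomposition matches the paper's organization: (a) corresponds to Lemma \ref{Lemma beta nu well defined} and the first part of Lemma \ref{Lemma unique minimizer rate function}, (b) to Lemmas \ref{Lemma y exponential tightness}, \ref{Lemma Exponential Tightness mu G}, and \ref{Lemma Exponential Tightness}, (c) to Theorems \ref{Theorem Uncoupled Large Deviation Principle Conditional Time 0 0} and \ref{Theorem Coupled Large Deviation Principle}, and (d) to Lemma \ref{Lemma at most one Fixed Point}. Parts (a) and (d) are essentially correct; your time-weighted Wasserstein metric in (d) is a standard variant of the $L^2$-in-time Wasserstein metric $d^{(2)}_t$ that the paper uses, and the obstacles you flag there (uniform lower bound on $\det\mathfrak{V}_{\mu,0}$, Lipschitz dependence of the Gaussian coupling on $\mu$) are exactly what Lemma \ref{beta mapping continuous} supplies.

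The gap is in (c). First, a factual error: you claim ``the connections $\mathbf{J}$ enter only linearly inside these exponentials,'' but the Girsanov exponent $\Gamma^N_{\mathbf{J},T}(\mathbf{y})$ contains $-\frac12\sigma_s^{-2}\bigl(\tilde G^{p,j}_s-\tau^{-1}y^{p,j}_s\bigr)^2$, and since $\tilde G^{p,j}_s = N^{-1/2}\sum_k J^{jk}\lambda(y^{p,k}_s)$ is linear in $\mathbf{J}$, the square is quadratic in $\mathbf{J}$. The Gaussian integral over $\gamma^N$ is therefore not the naive completion of squares you describe; it produces a Fredholm-type determinant, which is a genuine complication that the paper's method is specifically designed to avoid. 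Second, and more fundamentally, the step is logically inconsistent as stated: the $G^{a,j}_t$ are deterministic functions of $\mathbf{J}$ and $\mathbf{z}$, so after you integrate $\mathbf{J}$ away you no longer have a realization of $\mathbf{G}$, hence no double empirical measure of $(\mathbf{z},\mathbf{G})$ --- only the single empirical measure of $\mathbf{z}$. To keep $\mathbf{G}$ alive you must either not integrate out $\mathbf{J}$ or introduce auxiliary constraint variables (a Hubbard--Stratonovich insertion), neither of which your proposal mentions. The paper avoids this trap by never integrating $\mathbf{J}$ away: it first proves an LDP for the double empirical measure of $(\mathbf{y},\tilde{\mathbf{G}})$ under the uncoupled reference law $Q^N_{\mathbf{z}_0,\mathbf{g}_0}=\gamma^N_{\mathbf{g}_0}\otimes P^N_{\mathbf{z}_0}$, where the fields $\tilde{\mathbf{G}}$ remain honest random variables induced by $\gamma^N$. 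This requires the ``freezing the interaction'' step (Lemma \ref{Theorem Uncoupled Large Deviation Principle Conditional Time 0}, then Theorem \ref{Theorem Uncoupled Large Deviation Principle Conditional Time 0 0}) because the conditional law of $\tilde G^j_{[0,T]}$ given $\mathbf{y}$ is $\beta_{\hat\mu^N(\mathbf{y}),\mathbf{g}^j_0}$, which depends self-referentially on the empirical measure --- one first fixes the dependence on $\nu$, proves an LDP, and then removes the freezing by exponential tightness and a covering argument over $\mathcal{U}_a$. Only then is Girsanov applied, and the key observation (Lemmas \ref{Lemma Approximate Gamma by Continuous Functions}--\ref{Lemma discrete m n approximation}) is that $\Gamma^N_{\mathbf{J},T}(\mathbf{y})$ is almost surely a function $\Gamma(\hat\mu^N(\mathbf{y},\tilde{\mathbf{G}}))$ of the double empirical measure --- true precisely because $\tilde{\mathbf{G}}$ is retained in that measure. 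Your closing remark that \eqref{eq: bounded moment 1}--\eqref{eq: bounded moment 2} are used only here is correct, but it understates the burden: the uncoupled LDP must be \emph{uniform} over initial conditions $(\mathbf{z}_0,\mathbf{g}_0)\in\mathcal{Y}^N$ (not merely pointwise), since $\rho^N_{\mathbf{J}}$ couples them to $\mathbf{J}$, and this uniformity is a substantial part of Theorem \ref{Theorem Uncoupled Large Deviation Principle Conditional Time 0 0}.
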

\begin{remark}
This theorem is useful because it also implies a means to efficiently determine the large $N$ limiting equations, through repeated application of the mapping $\Phi$. Because the limiting system is Gaussian, one only needs to solve for its covariance matrix. See the discussion in Helias and Dahmen \cite{Helias2020} for an alternative formulation of the limiting covariance function in terms of a PDE.
\end{remark}
\subsection{An Example System that Satisfies the Conditions of Section \ref{subsection connectivity dependent initial conditions}}

We now outline a general system that satisfies the conditions of Section \ref{subsection connectivity dependent initial conditions}. Suppose that $\lambda : \mathbb{R} \to \mathbb{R}$ is an odd function. Define $\iota: \mathcal{P}\big( \mathbb{R}^{2}  \times \mathbb{R}^{2}  \big) \to \mathbb{R}^+$ to be
\begin{align}
\iota(\mu) = \mathcal{R}( \mu || p).
\end{align}
Here $p \in \mathcal{P}\big( \mathbb{R}^{2}  \times \mathbb{R}^{2}  \big)$ is centered and Gaussian, the law of random variables $(z^1_0, g^1_0, z^2_0, g^2_0)$ that are such that $\mathbb{E}[(z^p_0)^2] = 1$, $\mathbb{E}[z^p_0 g^q_0] = 0$  and $\mathbb{E}[g^p_0 g^q_0] = \mathbb{E}[ \lambda(z^p_0)\lambda(z^q_0)]$. Let $\Xi: \mathcal{C}\big( \mathbb{R}^{2} \big) \to \mathbb{R}$ be bounded.  
\begin{lemma}
Suppose that there is a unique $\eta \in\mathcal{P}\big( \mathbb{R}^{2} \times \mathbb{R}^2\big)$ such that
\begin{align}
\sup_{\mu \in  \mathcal{P}\big( \mathbb{R}^{2}  \times \mathbb{R}^{2}  \big)}  \big\lbrace \mathbb{E}^{\mu}[ \Xi(z^1,g^1)  ]- \iota(\mu) \big\rbrace = \mathbb{E}^{\eta}[ \Xi(z^1,g^1) ] - \iota(\eta).
\end{align}
Suppose also that
\begin{multline}
\sup \big\lbrace \mathbb{E}^{\mu}[ \Xi(z^1,g^1) + \Xi(z^2,g^2) ]- \iota(\mu) \; : \; \mu \in \mathcal{P}\big( \mathbb{R}^{2}  \times \mathbb{R}^{2}  \big) \text{ and } \mu^{(1)} = \eta^{(1)} \text{ and }\mu^{(2)} = \eta^{(1)} \big\rbrace \\= 2 \big\lbrace \mathbb{E}^{\eta}[ \Xi(z^1,g^1) ] - \iota(\eta) \big\rbrace .
\end{multline}
Then the conditions of Section \ref{subsection connectivity dependent initial conditions} are satisfied, substituting $M=1$, $\kappa = \eta^{(1)}$ and $\delta_N = (\log N)^{-1}$.
\end{lemma}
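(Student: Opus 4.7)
The approach is to recast the Lebesgue factor $\mu^N$ as a tilted Gaussian reference, apply a Sanov-type LDP to the pair-empirical measure $\hat{\mu}^N(\mathbf{z}_0,\mathbf{G}_0)$, and translate the hypothesized variational structure of $\eta$ into the moment estimates required in Section \ref{subsection connectivity dependent initial conditions}. The standard relative-entropy duality identifies the unique maximizer in the first hypothesis as $\eta = \eta^{(1)} \otimes p^{(1)}$, with $d\eta^{(1)}/dp^{(1)} \propto e^{\Xi}$. Boundedness of $\Xi$ and Gaussianity of $p^{(1)}$ make $\kappa = \eta^{(1)}$ absolutely equivalent to $p^{(1)}$, with Gaussian tails and full support, so $\mathbb{E}^\kappa[z_0^2]$ and $\mathbb{E}^\kappa[\lambda(z_0)^2]$ are finite and strictly positive, delivering the second-moment property and, for $M=1$, the invertibility of $\mathfrak{V}_0$ and $\tilde{\mathfrak{V}}_0$.

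To prove \eqref{eq: bounded moment 1} I would write $\mathbb{E}[Z^N_{\mathbf{J}}] = \int d\mathbf{z}_0\, \mathbb{P}_{\mathbf{J}}(d_W(\hat{\mu}^N,\kappa)\leq\delta_N)$ and convert the Lebesgue measure to the Gaussian reference $\gamma^N = (p^{(1)}_z)^{\otimes N}$ via the Radon--Nikodym prefactor $(2\pi)^{N/2}e^{\|\mathbf{z}_0\|^2/2}$. Under this reformulation, the family $(z^j_0,G^j_0)_j$ is, conditional on $\mathbf{z}_0$, close to the $N$-fold product of $p^{(1)}$, because the conditional variance $N^{-1}\sum_k\lambda(z_0^k)^2$ concentrates by the LLN to the unconditional variance $\mathbb{E}^{p^{(1)}}[\lambda(z)^2]$ prescribed in $p^{(1)}$. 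Sanov then identifies $\mathbb{P}(d_W(\hat{\mu}^N,\kappa)\leq\delta_N) = \exp(-N\,\mathcal{R}(\eta^{(1)}\|p^{(1)}) + o(N))$; reinstating the Radon--Nikodym weight contributes $(N/2)\,\mathbb{E}^\kappa[z_0^2]$ on the constrained event, yielding $N^{-1}\log\mathbb{E}[Z^N_{\mathbf{J}}] \to \tfrac{1}{2}\log(2\pi) + \tfrac{1}{2}\mathbb{E}^\kappa[z_0^2] - \mathcal{R}(\eta^{(1)}\|p^{(1)})$, which is finite.

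For \eqref{eq: bounded moment 2} I would use a matched second-moment estimate combined with Gaussian concentration of measure. Expanding $\mathbb{E}[(Z^N_{\mathbf{J}})^2]$ as a double integral over two replicas $(\mathbf{z}^1_0,\mathbf{z}^2_0)$ coupled by the same $\mathbf{J}$, the joint limiting law of $(z^{1,j}_0,G^{1,j}_0,z^{2,j}_0,G^{2,j}_0)_j$ under the Gaussian reference is exactly the 4-dimensional $p$: the cross-covariance $N^{-1}\sum_k\lambda(z^{1,k}_0)\lambda(z^{2,k}_0)$ tends to $\mathbb{E}^{\mathcal{N}(0,1)^{\otimes 2}}[\lambda(z^1)\lambda(z^2)] = 0$ by oddness of $\lambda$, matching $\mathbb{E}^p[g^1 g^2] = 0$. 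Two-replica Sanov gives $N^{-1}\log\mathbb{E}[(Z^N_{\mathbf{J}})^2] = \log(2\pi) + \mathbb{E}^\kappa[z_0^2] - \inf\{\iota(\mu) : \mu^{(1)} = \mu^{(2)} = \kappa\} + o(1)$, and the second hypothesis identifies this infimum as $2\iota(\eta) = 2\,\mathcal{R}(\eta^{(1)}\|p^{(1)})$, attained at $\mu = \kappa\otimes\kappa$. Hence $N^{-1}\log\mathbb{E}[(Z^N_{\mathbf{J}})^2] \to 2\,N^{-1}\log\mathbb{E}[Z^N_{\mathbf{J}}]$. Combined with Borell--TIS applied to a Lipschitz-smoothed $\mathbf{J} \mapsto \log Z^N_{\mathbf{J}}$ (smoothing is needed because the indicator leaves the raw $Z^N_{\mathbf{J}}$ non-Lipschitz in $\mathbf{J}$), this promotes the matched moments to the required exponential concentration.

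The principal obstacles are: first, reconciling the weak topology natural to Sanov's theorem with the Wasserstein distance in the hypotheses, which requires uniform continuity of the rate function on $\delta_N$-Wasserstein balls around $\kappa$ --- the choice $\delta_N = (\log N)^{-1}$ is slow enough for this, exploiting the Gaussian tails inherited by $\kappa$ from $p^{(1)}$; and second, upgrading the matched-moment estimate to exponential concentration, which requires either Lipschitz smoothing of the hard Wasserstein constraint or extension of the Sanov argument to all integer moments $\mathbb{E}[(Z^N_{\mathbf{J}})^k]$, both of which exploit uniqueness in the first hypothesis to rule out replica-symmetry-breaking obstructions that would otherwise inflate higher moments beyond their mean-field values.
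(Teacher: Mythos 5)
Your proposal follows essentially the same strategy as the paper's proof: reduce to a two-replica Sanov-type LDP for the double empirical measure and establish a matched second-moment estimate, i.e.\ equation \eqref{eq: double moment Z N J}, which is exactly what the paper does via its definition of $\mathcal{A}_N$ and the discretization of $\mathfrak{U}_a$ into $\epsilon$-balls. Where you diverge, you are in fact more careful than the paper on the key step. The paper asserts ``It is immediate from \eqref{eq: double moment Z N J} that \eqref{eq: bounded moment 2} must be satisfied,'' but this deduction is \emph{not} immediate: matched second moments on the exponential scale bound the upper tail of $N^{-1}\log Z^N_{\mathbf J}$ (Markov) but say nothing about the lower tail. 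A non-negative $Z_N$ that equals $e^{-N}$ with probability $1-e^{-N^{3/4}}$ and $e^{N^{3/4}}$ otherwise satisfies \eqref{eq: double moment Z N J} yet violates \eqref{eq: bounded moment 2} catastrophically. Your suggestion of supplementing with Gaussian concentration (Borell--TIS) applied to a Lipschitz-smoothed version of $\mathbf{J}\mapsto\log Z^N_{\mathbf J}$ is precisely the extra ingredient that is needed and that the paper omits; so you have correctly flagged and repaired a genuine gap. You also address finiteness of second moments, invertibility of $\mathfrak{V}_0$ and $\tilde{\mathfrak{V}}_0$, and \eqref{eq: bounded moment 1}, which the paper's proof does not discuss at all.

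That said, a few of your own estimates are looser than stated. First, membership in a Wasserstein-$1$ ball of radius $\delta_N$ around $\kappa$ does \emph{not} control $N^{-1}\norm{\mathbf{z}_0}^2$ (one outlier coordinate of size $O(N\delta_N)$ passes the $W_1$ constraint but inflates the empirical second moment by $O(N\delta_N^2)\to\infty$), so the Radon--Nikodym weight $e^{\norm{\mathbf{z}_0}^2/2}$ is not uniformly $e^{(N/2)\mathbb{E}^\kappa[z_0^2](1+o(1))}$ over the constraint set; for the one-sided bound \eqref{eq: bounded moment 1} this is repairable by restricting to a nicer subset, but your claimed sharp limit is not justified as stated. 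Second, the factorization $\eta=\eta^{(1)}\otimes p^{(1)}$ with $d\eta^{(1)}/dp^{(1)}\propto e^{\Xi}$ silently uses that $p$ is a product over the two replicas; this in turn uses $\mathbb{E}^p[z^1_0 z^2_0]=0$ (not stated in the paper) together with $\mathbb{E}[\lambda(z^1_0)\lambda(z^2_0)]=0$ from oddness of $\lambda$. Third, applying Borell--TIS through a hard $W_1$ constraint requires a smoothing whose scale interacts with $\delta_N=(\log N)^{-1}$; you gesture at this but the quantitative interplay (the Lipschitz constant of the smoothed indicator versus the width of the confidence band you want) is where the actual work lives, and it should not be waved through. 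These are completable, but they are the real content of the lemma.
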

\begin{proof}
We show that
\begin{align}\label{eq: double moment Z N J}
\lsup{N} N^{-1} \big\lbrace \log \mathbb{E}^{\gamma}\big[ (Z^N_{\mathbf{J}})^2 \big] - 2 \log \mathbb{E}^{\gamma}\big[ Z^N_{\mathbf{J}} \big] \big\rbrace= 0.
\end{align}
It is immediate from \eqref{eq: double moment Z N J} that \eqref{eq: bounded moment 2} must be satisfied. Let $\lbrace y^{p,j}_0 \rbrace_{j\in I_N \fatsemi 1\leq p \leq 2}$ be iid $\mathcal{N}(0,1)$ random variables. Form the empirical measure $\hat{\mu}^N_0 := N^{-1}\sum_{j\in I_N} \delta_{\mathbf{y}^j, \mathbf{G}^j_0} \in \mathcal{P}\big(\mathbb{R}^{4} \big)$.

Define
\begin{align*}
\mathcal{A}_N = \chi\bigg\lbrace d_W\big( \eta^{(1)}, N^{-1}\sum_{j\in I_N} \delta_{y^{1,j}, g^{1,j}}  \big) \leq (\log N)^{-1} \; , \;  d_W\big( \eta^{(2)}, N^{-1}\sum_{j\in I_N} \delta_{y^{2,j}, g^{2,j}}  \big) \leq (\log N)^{-1} \bigg\rbrace .
\end{align*}
We will first demonstrate that
\begin{multline}
\lim_{N\to\infty} N^{-1}  \log \mathbb{E}\big[ \mathcal{A}_N \exp\big( N \mathbb{E}^{\hat{\mu}^N}[ \Xi(z^1,g^1) + \Xi(z^2,g^2)] \big) \big] = \\
\sup \big\lbrace \mathbb{E}^{\mu}[ \Xi(z^1,g^1) + \Xi(z^2,g^2) ]- \iota(\mu) \; : \; \mu \in \mathcal{P}\big( \mathbb{R}^{2}  \times \mathbb{R}^{2}  \big) \text{ and } \mu^{(1)} = \eta^{(1)} \text{ and }\mu^{(2)} = \eta^{(1)} \big\rbrace  .\label{eq: first to demonstrate}
\end{multline}
It is straightforward to prove that for any $a > 0$ there exists a compact subset $ \mathfrak{U}_a \subset \mathcal{P}\big(\mathbb{R}^{4} \big)$ such that
\begin{align}
\lsup{N} N^{-1}\log \mathbb{P}\big( \hat{\mu}^N_0  \notin \mathfrak{U}_a \big) \leq - a.
\end{align}
A simple Large Deviations estimate yields that, for any $\mu \in \mathcal{P}\big(\mathbb{R}^{4} \big)$, and $\epsilon \ll 1$,
\begin{align}
N^{-1}\log \mathbb{P}\big( \hat{\mu}^N_0  \in B_{\epsilon}(\mu) \big) = - \iota(\mu) + O(\sqrt{\epsilon}). \label{eq: Large Deviations application}
\end{align}
(In fact this could also be proved using Corollary \ref{Corollary Uncoupled Large Deviation Principle Unconditioned}). Our choice that $\delta_N$ goes to zero sufficiently slowly (i.e. $\delta_N = (\log N)^{-1}$) ensures that the rate function in \eqref{eq: Large Deviations application} dominates the asymptotic estimate of the probability as $N\to\infty$. Thus discretizing $\mathfrak{U}_a$ into $\epsilon$ balls, and then taking $\epsilon \to 0^+$, one obtains \eqref{eq: first to demonstrate}.

Now \eqref{eq: first to demonstrate} also implies that
\begin{align}\label{eq: iota inequality}
\lim_{N\to\infty} N^{-1}   \log \mathbb{E}\big[ \mathcal{A}_N  \big] = -2\iota(\eta).
\end{align}
This holds because since $\mu \to \mathbb{E}^{\mu}[\Xi(y^1,g^1 + \Xi(y^2,g^2)]$ is continuous, it becomes almost constant as $N\to\infty$, as the radius of $\mathcal{A}_N$ shrinks. \eqref{eq: iota inequality} in turn implies \eqref{eq: double moment Z N J}.
\end{proof}

\section{Proof Outline}
The main goal of this paper is to prove Theorem \ref{Theorem Main} employing the theory of Large Deviations \cite{Dembo1998}. The method - similarly to the original work by Ben Arous and Guionnet \cite{BenArous1995} - is to (i) prove a Large Deviations Principle for the uncoupled system, and then (ii) perform an exponential change-of-measure using Girsanov's Theorem to obtain the Large Deviations Principle for the coupled system, before (iii) proving that the rate function has a unique zero. 

The three main differences between this paper and the early papers of Ben Arous and Guionnet is that we (i) study the convergence of the double empirical measure \eqref{eq: double empirical measure} (whereas Ben Arous and Guionnet study the convergence of the annealed empirical measure in their earlier papers \cite{BenArous1995}) (in the later works \cite{Guionnet1997,BenArous2006} quenched asymptotics are determined) (ii) we employ disorder-dependent initial conditions and (iii) we employ replicas. 

Our main focus is on proving Case 1 (i.e. the connectivity-dependent initial conditions). The proofs are broadly similar, however Case 1 is more difficult because it requires a uniform Large Deviations Principle for the conditioned probability laws.
\subsection{Large Deviations of the Uncoupled System}
We start by noting a Large Deviation Principle for the uncoupled system. Because we are employing general disorder-dependent initial conditions, we must determine a Large Deviations Principle for the conditioned probability law. To this end, we must first define the set $\mathcal{Y}^N$ of all `valid initial conditions' (basically the set of all initial points such that the empirical measure at time $0$ is close to its limit). More precisely, we define $\mathcal{Y}^N \subset \mathbb{R}^{NM} \times \mathbb{R}^{NM}$ to be such that
\begin{align}
\mathcal{Y}^N= \big\lbrace (\mathbf{z}_0,\mathbf{g}_0) \; : \;  d_W\big( \hat{\mu}^N(\mathbf{z}_0,\mathbf{g}_0) , \kappa\big) \leq \tilde{\delta}_N \big\rbrace , \label{eq: Y N set}
\end{align}
where $\tilde{\delta}_N = \max\big\lbrace \delta_N ,( \log N)^{-1} \big\rbrace$ and $ \hat{\mu}^N(\mathbf{z}_0,\mathbf{g}_0) = N^{-1}\sum_{j\in I_N}\delta_{\mathbf{z}^j_0, \mathbf{g}^j_0} \in \mathcal{P}(\mathbb{R}^{2M} )$. Define the uncoupled dynamics,
\begin{align}
y^{p,j}_t = z^{p,j}_0 + \int_0^t \sigma_s dW_s^{p,j}, \label{eq: y dynamics}
\end{align}
and let $P^N_{\mathbf{z}_0} \in \mathcal{P}\big( \mathcal{C}([0,T],\mathbb{R}^M)^N \big)$ be the law of $\lbrace y^j_{[0,T]} \rbrace_{j\in I_N}$, conditioned on the values at time $0$. Write 
\begin{align}
\tilde{G}^{p,j}_t = N^{-1/2}\sum_{k\in I_N} J^{jk} \lambda(y^{p,k}_t).
\end{align}
Define $\gamma^N_{\mathbf{g}_0} \in \mathcal{P}\big( \mathbb{R}^{N^2} \big)$ to be the law of the connections $\lbrace J^{jk} \rbrace_{j,k \in I_N}$, conditioned on the event
\begin{align}
\big\lbrace \tilde{G}^{p,j}_0 = g^{p,j}_0 \text{ for each }j\in I_N, \; p\in I_M \big\rbrace . \label{eq: conditioned event}
\end{align}
We note that $\gamma^N_{\mathbf{g}_0}$ is Gaussian, but no longer of zero mean (in general). The mean of $\tilde{G}^{p,j}_t$ is a function of the empirical measure and $\mathbf{g}^j_0$ (explicit formulae are outlined further below). Let $Q^N_{\mathbf{z}_0,\mathbf{g}_0} \in \mathcal{P}\big( \mathbb{R}^{N^2} \times \mathcal{C}([0,T],\mathbb{R}^M)^N \big)$ be the joint law of the uncoupled system, i.e.
\begin{equation}
Q^N_{\mathbf{z}_0,\mathbf{g}_0} =  \gamma^N_{\mathbf{g}_0} \otimes P^N_{\mathbf{z}_0} .
\end{equation}
The first main result is a uniform Large Deviations Principle for the conditioned system.
\begin{theorem}\label{Theorem Uncoupled Large Deviation Principle Conditional Time 0 0}
Let $\mathcal{A},\mathcal{O}\in \mathcal{B}\big( \mathcal{P}\big( \mathcal{C}([0,T],\mathbb{R}^M)^2\big) \big)$, such that $\mathcal{O}$ is open and $\mathcal{A}$ closed. Then
\begin{align}
 \lsup{N}\sup_{(\mathbf{z}_0,\mathbf{g}_0) \in \mathcal{Y}^N} N^{-1}\log Q^N_{\mathbf{z}_0,\mathbf{g}_0}\big( \hat{\mu}^N(\mathbf{y}_{[0,T]}, \tilde{\mathbf{G}}_{[0,T]}) \in \mathcal{A}\big) &\leq -\inf_{\mu \in \mathcal{A}}\mathcal{I}(\mu)\label{LDP upper bound} \\
\linf{N}\inf_{(\mathbf{z}_0,\mathbf{g}_0) \in \mathcal{Y}^N} N^{-1}\log Q^N_{\mathbf{z}_0,\mathbf{g}_0} \big( \tilde{\mu}^N(\mathbf{y}_{[0,T]}, \tilde{\mathbf{G}}_{[0,T]}) \in \mathcal{O}\big) &\geq -\inf_{\mu \in \mathcal{O}}\mathcal{I}(\mu) . \label{LDP lower bound}
\end{align}
Here the rate function $\mathcal{I}:  \mathcal{P}\big( \mathcal{C}([0,T],\mathbb{R}^M)^2\big)  \to \mathbb{R}^+$ is such that $\mathcal{I}(\mu)  = \infty$ if $\mu \notin \mathcal{U}$, else otherwise
\begin{align}
\mathcal{I}(\mu) = \tilde{I}_{\mu}(\mu),
\end{align}
where, $\tilde{I}_{\mu}(\mu)$ is defined in \eqref{eq: tilde I rate function}.  Furthermore $\mathcal{I}$ is lower-semi-continuous and has compact level sets.
\end{theorem}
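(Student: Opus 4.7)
The plan is to reduce the problem to two nested Sanov-type arguments that exploit the product structure of $Q^N_{\mathbf{z}_0,\mathbf{g}_0}$. Under $P^N_{\mathbf{z}_0}$ the trajectories $\{\mathbf{y}^j_{[0,T]}\}_{j\in I_N}$ are independent across $j$, since each $y^{p,j}$ is driven only by $W^{p,j}$ starting from its own initial value $z^{p,j}_0$. Conditional on $\mathbf{y}_{[0,T]}$, the field paths $\{\tilde{\mathbf{G}}^j_{[0,T]}\}_{j\in I_N}$ are also independent across $j$ under $\gamma^N$ (for $j\ne j'$, the rows $J^{jk}$ and $J^{j'k}$ are independent), with a common covariance structure given by the empirical second moments $N^{-1}\sum_k \lambda(y^{p,k}_t)\lambda(y^{q,k}_s)$. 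The conditioning on $\{\tilde{G}^{p,j}_0 = g^{p,j}_0\}$ is a standard Gaussian regression yielding a shifted, variance-reduced law that matches \eqref{eq: W definition beta 00} and \eqref{eq: conditional gaussian 1 beta 00} once one replaces the empirical covariances by their $\mu$-limits, provided $\mathfrak{V}_{\mu,0}$ is invertible.

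First I would establish a Sanov-type LDP for the spin marginal $N^{-1}\sum_j \delta_{\mathbf{y}^j_{[0,T]}}$ under $P^N_{\mathbf{z}_0}$. Since the $\mathbf{y}^j$ are independent with laws determined by $\mathbf{z}^j_0$, a Sanov theorem for triangular arrays, combined with the fact that the spin-marginal of $\hat{\mu}^N(\mathbf{z}_0,\mathbf{g}_0)$ is $\tilde{\delta}_N$-close to the spin-marginal of $\kappa$ uniformly on $\mathcal{Y}^N$, yields an LDP with rate function $\mathcal{R}(\mu^{(1)} \| \nu_{\kappa})$, where $\nu_{\kappa}$ denotes the law of the free diffusion with coefficient $\sigma_t$ started from the spin-marginal of $\kappa$. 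This LDP is automatically uniform in $(\mathbf{z}_0,\mathbf{g}_0) \in \mathcal{Y}^N$ because the dependence on the initial data enters only through this marginal.

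Next I would integrate out the Gaussian fields. Conditional on $\mathbf{y}_{[0,T]}$, the centered Gaussian law of $(\tilde{\mathbf{G}}^j_{[0,T]})_{j\in I_N}$ has covariance that is a continuous functional of $N^{-1}\sum_k \delta_{\mathbf{y}^k_{[0,T]}}$; on the LDP-typical event this empirical measure is close to some $\mu^{(1)}$ and so the covariance is close to $\mathbb{E}^\mu[\lambda(z^p_t)\lambda(z^q_s)]$. Applying the Gaussian regression to incorporate $\tilde{G}^{p,j}_0 = g^{p,j}_0$, and then Sanov once more (now applied to the independent-across-$j$ Gaussian paths with the conditional mean and covariance dictated by $\mu$), I obtain the additional contribution, which is the Gaussian relative entropy between the conditional field law prescribed by $\mu$ and the one induced by its spin marginal together with $\kappa$. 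Summing these two contributions produces $\mathcal{I}(\mu) = \tilde{I}_{\mu}(\mu)$.

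The main obstacle will be the uniformity in $(\mathbf{z}_0,\mathbf{g}_0) \in \mathcal{Y}^N$ combined with the fact that the conditional Gaussian covariance itself depends on the random vector $\mathbf{y}$. To handle this I would use an exponential approximation: replace the empirical covariance by $\mathbb{E}^\mu[\lambda(z^p_t)\lambda(z^q_s)]$ on a ball around $\mu$ whose complementary $N$-probability is superexponentially small (controllable through Gaussian concentration for $\lambda(y^{p,k}_t)$, which is permitted by the Lipschitz property of $\lambda$), and exploit continuity of the Gaussian regression formula in the covariance together with the assumed invertibility of $\mathfrak{V}_{\mu,0}$ on $\mathcal{U}$. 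Lower semi-continuity and compactness of level sets then follow from the lower semi-continuity of relative entropy plus a Gaussian-quadratic term, combined with exponential tightness read off directly from the Gaussian tails of $\mathbf{y}^j$ and of the centered fields $\tilde{\mathbf{G}}^j$.
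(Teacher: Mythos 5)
Your overall route (Sanov for a parameterized family, then handle the self-referential covariance) is workable, but there are two substantive gaps and one organizational mismatch with the paper's argument. First, uniformity over $\mathcal{Y}^N$ is \emph{not} automatic: the initial data $(\mathbf{z}_0,\mathbf{g}_0)$ can approximate $\kappa$ in many distinct configurations, and the conditional LDP must be shown to be insensitive to which configuration is chosen. The paper proves this in Lemma~\ref{Theorem Uncoupled Large Deviation Principle Conditional Time 0} by a genuine argument: it passes to a permutation-averaged law $\bar{Q}^N_{\nu,\mathbf{z}_0,\mathbf{g}_0}$, constructs an explicit Wasserstein coupling between two laws with different initial configurations in $\mathfrak{S}^N\cap\mathcal{Y}^N$, and derives an $\epsilon$-blowup inequality \eqref{eq:to show epsilon blowup}. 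Your claim that uniformity ``follows automatically because the dependence on the initial data enters only through this marginal'' glosses over exactly this step, and in your second (field) Sanov step the issue only sharpens, since the conditional Gaussian mean \eqref{eq: conditional gaussian 1 beta} depends on each $\mathbf{g}^j_0$ individually, not merely the empirical marginal.

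Second, the ``superexponentially small'' exponential approximation you invoke is not how the covariance is frozen, and in fact cannot work as stated. The event $\{\hat{\mu}^N(\mathbf{y})\notin B_\epsilon(\mu^{(1)})\}$ that you wish to discard is precisely the LDP-atypical event and its probability decays only at speed $e^{-cN}$ (and $c$ may be small), never superexponentially; moreover $\lambda$ is only Lipschitz and unbounded, so Gaussian concentration for $\lambda(y^{p,k}_t)$ gives sub-Gaussian tails but not super-exponential LDP estimates. The paper instead freezes the self-referential covariance by combining exponential tightness (Lemma~\ref{Lemma Exponential Tightness}) with a finite covering $\mathcal{A}\cap\mathcal{U}_a\subseteq\bigcup_i B_\epsilon(\zeta_i)$, bounding $\tilde{\gamma}^N_{\hat{\mu}^N(\mathbf{y}),\mathbf{g}_0}$ by $\sup_{\nu\in B_\epsilon(\zeta_i)}\tilde{\gamma}^N_{\nu^{(1)},\mathbf{g}_0}$ on each ball, and sending $\epsilon\to 0^+$ using lower semi-continuity of $(\nu,\mu)\to\tilde{I}_\nu(\mu)$ (which relies on the continuity estimate Lemma~\ref{beta mapping continuous} in the $d^{(2)}$ metric). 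Finally, a point of organization: the paper does not decompose into a spin LDP plus a conditional field LDP. It proves a single Sanov-type LDP for the joint empirical measure $\hat{\mu}^N(\mathbf{y},\tilde{\mathbf{G}}^\nu)$ under the frozen law $Q^N_{\nu,\cdot,\cdot}$, whose rate $\tilde{I}_\nu(\mu)$ then recovers your ``sum of two contributions'' \emph{a posteriori} by the chain rule for relative entropy against a product. Your two-stage packaging is viable, but note that the plain spin marginal $N^{-1}\sum_j\delta_{\mathbf{y}^j}$ discards $\mathbf{g}^j_0$, which the field conditioning requires; you would need the marginal LDP for $N^{-1}\sum_j\delta_{(\mathbf{y}^j_{[0,T]},\mathbf{g}^j_0)}$ instead, and a theorem (Chaganty-type) that composes the marginal and conditional LDPs, neither of which you address.
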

We will prove Theorem \ref{Theorem Uncoupled Large Deviation Principle Conditional Time 0 0} by locally freezing the dependence of the fields $\lbrace \tilde{G}^{p,j}_t \rbrace$ on the empirical measure. In order that we may do this, we must first define a regular subset $\mathcal{Q}_{\mathfrak{a}}$ (for a positive integer $\mathfrak{a} \gg 1$) which is such that (i) the empirical measure $\hat{\mu}^N(\mathbf{y}) = N^{-1}\sum_{j\in I_N} \delta_{\mathbf{y}^j_{[0,T]}} \in \mathcal{P}\big( \mathcal{C}([0,T],\mathbb{R}^M) \big)$ inhabits with high probability, and (ii) there exist uniform bounds on the fluctuations in time. To this end, writing $\mathcal{K}_{\mathfrak{a}}$ to be the compact set specified in Lemma \ref{Lemma compact K L}, define the set
\begin{multline} \label{eq: set Q L}
\mathcal{Q}_{\mathfrak{a}} = \bigg\lbrace \mu \in \mathcal{P}\big( \mathcal{C}([0,T], \mathbb{R}^M) \big) \; : \; \mu \in \mathcal{K}_{\mathfrak{a}} \text{ and }\sup_{p\in I_M} \mathbb{E}^{\mu}[ \sup_{t\in [0,T]}(y^p_t)^2 \big] \leq \mathfrak{a} \text{ and }\\ \text{ For all integers }m \geq \mathfrak{a} \text{ it holds that } \sup_{0\leq i \leq m}\mathbb{E}^{\mu}\big[ \sup_{p \in I_M} (w^p_{t^{(m)}_{i+1}} - w^p_{t^{(m)}_i})^2\big] \leq \Delta_m^{1/4}
\bigg\rbrace
\end{multline}
where $\Delta_m = T / m$ and $t^{(m)}_i = iT/m$. Write
\begin{align}
\mathfrak{Q} = \bigcup_{\mathfrak{a}\geq 1}\mathfrak{Q}_{\mathfrak{a}}.
\end{align}
\begin{lemma} \label{Lemma y exponential tightness}
For any $L >0$, there exists $a> 0$ such that
\begin{align}
\lsup{N} N^{-1}\log \mathbb{P}\big( \hat{\mu}^N(\mathbf{y}) \notin  \mathcal{Q}_a \big) \leq - L
\end{align}
\end{lemma}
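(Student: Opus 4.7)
The event $\{\hat{\mu}^N(\mathbf{y}) \notin \mathcal{Q}_{\mathfrak{a}}\}$ is the union of three sub-events, one per defining condition in \eqref{eq: set Q L}, so by a union bound it is enough to bound each probability by $\exp(-LN)$ for $\mathfrak{a}$ sufficiently large. Throughout, I would use that $y^{p,j}_t = z^{p,j}_0 + M^{p,j}_t$ with $M^{p,j}_t := \int_0^t \sigma_s dW^{p,j}_s$, so that (conditional on initial data) the paths are independent across $j \in I_N$ and each $M^{p,j}$ is a Gaussian martingale with quadratic variation bounded by $\bar\sigma^2 T$.

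The compactness condition $\hat{\mu}^N(\mathbf{y})\in \mathcal{K}_{\mathfrak{a}}$ is handled directly by Lemma \ref{Lemma compact K L} (take $\mathfrak{a}$ large). For the second-moment condition, write
\[
\mathbb{E}^{\hat{\mu}^N(\mathbf{y})}\bigl[\sup_{t}(y^p_t)^2\bigr] \;\leq\; 2 N^{-1}\sum_{j}(z^{p,j}_0)^2 \;+\; 2 N^{-1}\sum_{j}\sup_{t}|M^{p,j}_t|^2.
\]
The first sum is controlled by the constraint $(\mathbf{z}_0,\mathbf{g}_0)\in\mathcal{Y}^N$ together with the finite second moment of $\kappa$. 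The summands of the second sum are i.i.d.\ sub-exponential, so by Cram\'er's theorem (or a direct Bernstein bound using the reflection principle for Brownian motion) the empirical average is within $1$ of its mean $\mathbb{E}[\sup_t|M^{p,j}_t|^2]\leq 4\bar\sigma^2 T$ except on an event of probability $\leq \exp(-cN)$ for some $c=c(L)>0$; enlarging $\mathfrak{a}$ absorbs this into the threshold.

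The modulus-of-continuity condition is the main obstacle because it must hold uniformly over all integers $m\geq \mathfrak{a}$. Fix $m\geq\mathfrak{a}$ and $0\leq i\leq m-1$, and set
\[
X^{m,i}_j \;:=\; \sup_{p\in I_M}\bigl(y^{p,j}_{t^{(m)}_{i+1}}-y^{p,j}_{t^{(m)}_i}\bigr)^2 \;=\; \sup_{p}\Bigl|\int_{t^{(m)}_i}^{t^{(m)}_{i+1}}\sigma_s dW^{p,j}_s\Bigr|^2.
\]
These are i.i.d.\ in $j$, each sub-exponential with scale of order $\bar\sigma^2\Delta_m$ and mean bounded by $C\bar\sigma^2\Delta_m\log(1+M)$. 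For $\mathfrak{a}$ large enough that $C\bar\sigma^2\log(1+M)\,\Delta_{\mathfrak{a}}^{3/4}\leq 1/2$, the mean is at most $\Delta_m^{1/4}/2$ for every $m\geq\mathfrak{a}$. Applying Bernstein's inequality to the deviation by $\Delta_m^{1/4}/2$ yields
\[
\mathbb{P}\Bigl(N^{-1}\sum_{j} X^{m,i}_j \geq \Delta_m^{1/4}\Bigr) \;\leq\; \exp\bigl(-c'\,N\,\Delta_m^{-3/4}\bigr),
\]
with $c'>0$ depending only on $\bar\sigma,M,T$. A union bound over $i\in\{0,\ldots,m-1\}$ and $m\geq\mathfrak{a}$ gives
\[
\sum_{m\geq \mathfrak{a}} m\,\exp\bigl(-c' N\, (m/T)^{3/4}\bigr) \;\leq\; \exp(-LN),
\]
provided $\mathfrak{a}$ is chosen so that $c'(\mathfrak{a}/T)^{3/4}\geq L+1$, which completes the proof. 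The key point enabling the $m$-union bound is that the exponential rate $\Delta_m^{-3/4}$ grows with $m$, easily dominating the single polynomial factor of $m$.
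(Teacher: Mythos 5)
Your proposal is correct and follows essentially the same route as the paper's Appendix proof (Lemma \ref{Lemma Brownians Exponentially Tight}): the same three-way decomposition of $\{\hat{\mu}^N(\mathbf{y})\notin\mathcal{Q}_{\mathfrak{a}}\}$, compactness handled by Lemma \ref{Lemma compact K L}, the second moment handled by a Chernoff/exponential-moment bound after splitting off the initial condition (controlled through $\mathcal{Y}^N$), and the modulus term handled by concentration of the empirical average of sub-exponential increment variables, with a union over $i$ and then over $m\geq\mathfrak{a}$ converging because the rate $\propto\Delta_m^{-3/4}$ dominates the polynomial multiplicity. The one cosmetic difference is that the paper packages the $i$-uniformity into Lemma \ref{Lemma bound fluctuations of Brownian} via Doob's submartingale inequality applied to $\sup_{0\le t\le\Delta_m}$, producing a $\log m$ term in the exponent, whereas you invoke Bernstein per $(m,i)$ and do the $i$-union explicitly, producing only an $N^{-1}\log m$ factor; both vanish against $\Delta_m^{-3/4}$. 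You are also more explicit than the paper about the union over $m\geq\mathfrak{a}$, which the displayed decomposition \eqref{eq: threeway decomposition} in the paper suppresses for a single $m$ but is evidently intended.
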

The above lemma is proved in the Appendix. Next, for any $\nu \in \mathcal{Q} $, we define a centered Gaussian law $\beta_{\nu} \in  \mathcal{P}\big( \mathcal{C}([0,T],\mathbb{R})^M \big)$ as follows. We stipulate that $\beta_{\nu}$ is the law of Gaussian random variables $\lbrace G^{\nu,p}_t \rbrace_{t\in [0,T], p\in I_M}$ with covariance structure
\begin{align}
\mathbb{E}^{\beta_{\nu}}\big[ G^{\nu,p}_sG^{\nu,q}_t \big] = \mathbb{E}^{\nu}\big[ \lambda(x^p_s) \lambda(x^q_t) \big] \label{eq: beta nu covariance definition}
\end{align}
This definition will be useful because for any $j \in I_N$, the law of $\tilde{G}^j_{[0,T]} $ under $\gamma^N$ is $\beta_{\hat{\mu}^N(\mathbf{y})}$. In the following Lemma we collect some regularity estimates for the Gaussian Law $\beta_{\nu}$.
\begin{lemma} \label{Lemma beta nu well defined}
(i) $\beta_{\nu}$ is a well-defined Gaussian probability law. (ii) Furthermore, the map $t\to G^{\nu,p}_t$ is `uniformly continuous' for all measures in $\mathcal{U}_a$, in the following sense. For any $a > 0$, and any $\epsilon > 0$, there exists $\delta(a,\epsilon)$ such that for all $\nu \in \mathcal{U}_a$,
\begin{align}
\sup_{\nu \in \mathcal{U}_a} \sup_{p\in I_M} \mathbb{E}^{\beta_\nu} \bigg[ \sup_{s,t \in [0,T] \fatsemi |s-t| \leq \delta(a,\epsilon)} \big| G^{\nu,p}_s - G^{\nu,p}_t \big| \bigg] \leq \epsilon \label{eq: uniform continuity U a G}
\end{align}
\end{lemma}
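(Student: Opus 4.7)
For part (i), I would first verify that the proposed covariance kernel $K((s,p),(t,q)) := \mathbb{E}^\nu[\lambda(x^p_s)\lambda(x^q_t)]$ is positive semi-definite on $[0,T] \times I_M$. This is immediate: for any finite collection $\{(s_i,p_i,c_i)\}_{i=1}^n$,
\[
\sum_{i,j} c_i c_j K((s_i,p_i),(s_j,p_j)) = \mathbb{E}^\nu\Big[\Big(\sum_i c_i \lambda(x^{p_i}_{s_i})\Big)^{\!2}\Big] \geq 0.
\]
The Daniell--Kolmogorov extension theorem then produces a centered Gaussian family $\{G^{\nu,p}_t\}$ with this covariance. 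To upgrade to a modification with continuous sample paths, I would invoke Kolmogorov--Chentsov, exploiting Gaussian hypercontractivity $\mathbb{E}[|G^{\nu,p}_s - G^{\nu,p}_t|^{2k}] \leq C_k (\mathbb{E}[(G^{\nu,p}_s - G^{\nu,p}_t)^2])^k$ combined with the Lipschitz bound $\mathbb{E}^{\beta_\nu}[(G^{\nu,p}_s - G^{\nu,p}_t)^2] \leq C_\lambda^2 \mathbb{E}^\nu[(x^p_s - x^p_t)^2]$ and the interval-increment estimates built into the definition of $\mathcal{Q}_\mathfrak{a}$.

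For part (ii), the plan has two stages. First, I would establish the intermediate claim that for each $a>0$ there is a modulus $\omega_a:[0,\infty)\to[0,\infty)$ with $\omega_a(\delta)\to 0$ as $\delta\to 0^+$, such that
\[
\sup_{\nu \in \mathcal{U}_a} \sup_{p \in I_M} \sup_{|s-t| \leq \delta} \mathbb{E}^\nu\big[(x^p_s - x^p_t)^2\big] \leq \omega_a(\delta)^2.
\]
This is where the compactness of $\mathcal{K}_a$ in $\mathcal{P}(\mathcal{C}([0,T],\mathbb{R}^M))$ enters: Prokhorov's theorem gives tightness, and then Ascoli--Arzel\`a on $\mathcal{C}([0,T],\mathbb{R}^M)$ yields a uniform-in-$\nu$ modulus of continuity \emph{in probability}. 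One upgrades this to an $L^2$ modulus via the uniform second-moment bound $\mathbb{E}^\nu[\sup_t (x^p_t)^2] \leq a$, which supplies the uniform integrability needed to absorb the tail contributions.

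Second, by the Lipschitz property of $\lambda$, the induced pseudo-metric $d_\nu(s,t) := \sqrt{\mathbb{E}^{\beta_\nu}[(G^{\nu,p}_s - G^{\nu,p}_t)^2]}$ satisfies $d_\nu(s,t) \leq C_\lambda \,\omega_a(|s-t|)$ uniformly in $\nu \in \mathcal{U}_a$. Applying Dudley's entropy inequality for centered Gaussian processes gives
\[
\mathbb{E}^{\beta_\nu}\!\!\left[\sup_{|s-t|\leq \delta} |G^{\nu,p}_s - G^{\nu,p}_t|\right] \leq K \int_0^{C_\lambda \omega_a(\delta)} \!\!\sqrt{\log N([0,T], d_\nu, r)}\, dr,
\]
and since $d_\nu \leq C_\lambda \omega_a(|\cdot|)$ uniformly, the covering number $N([0,T], d_\nu, r)$ is bounded by $T/\omega_a^{-1}(r/C_\lambda) + 1$ uniformly in $\nu\in\mathcal{U}_a$. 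The right-hand side therefore goes to zero as $\delta\to 0$ at a rate depending only on $a$, and I would choose $\delta(a,\epsilon)$ so that the bound is $\leq \epsilon$.

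The main obstacle is converting the qualitative compactness of $\mathcal{K}_a$ into the quantitative $L^2$ modulus $\omega_a$: tightness alone only furnishes a modulus in probability, and the upgrade relies critically on the uniform second-moment bound in $\mathcal{Q}_a$ for uniform integrability. A secondary concern is ensuring the Dudley entropy integral is finite --- this is guaranteed provided $\omega_a$ decays at a polynomial (or at worst $(\log 1/\delta)^{-\alpha}$ for $\alpha>1/2$) rate; if the modulus inherited from compactness is slower, the Garsia--Rodemich--Rumsey inequality would be a safer substitute and would yield the same qualitative conclusion.
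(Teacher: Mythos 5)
Your part (i) is sound and matches the paper in spirit: the positive semi-definiteness check is identical, and Kolmogorov--Chentsov with hypercontractivity is a legitimate substitute for the paper's appeal to Adler's Gaussian continuity theory. Your part (ii) outline (Lipschitz pull-back of the canonical metric, Dudley entropy integral, uniform covering numbers) also matches the paper's structure. The gap is in how you produce the intermediate $L^2$ modulus $\omega_a$.

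You derive $\omega_a$ by Prokhorov/Arzel\`a--Ascoli (a modulus \emph{in probability}) and then claim that the uniform second-moment bound $\mathbb{E}^\nu[\sup_t (x^p_t)^2]\le a$ ``supplies the uniform integrability needed to absorb the tail contributions.'' That step is false: a uniform bound on second moments is not uniform integrability of the squared increments. One can have $\sup_\nu \mathbb{E}^\nu[\sup_t(x^p_t)^2]\le a$ while $\mathbb{E}^\nu\big[\sup_t(x^p_t)^2\,\chi_{K^c_\epsilon}\big]$ stays bounded away from $0$ (mass escaping to infinity on a set of vanishing probability), so the tail of $\mathbb{E}^\nu[(x^p_s-x^p_t)^2]$ is not controlled by the tightness set. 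This is precisely why the paper bakes a \emph{quantitative} increment estimate into the definition of $\mathcal{Q}_a$ (see \eqref{eq: set Q L}): for all integers $m\ge \mathfrak{a}$, $\sup_i\mathbb{E}^\mu\big[\sup_p(w^p_{t^{(m)}_{i+1}}-w^p_{t^{(m)}_i})^2\big]\le\Delta_m^{1/4}$. That directly yields a polynomial bound $\bar d_p(s,t)\le a\,(t-s)^{1/4}$ on the canonical metric, uniformly over $\nu$ with $\nu^{(1)}\in\mathcal{Q}_a$, with no compactness or UI argument needed. You in fact noticed these interval-increment estimates when sketching part (i), but did not carry them into part (ii); using them there removes the obstacle you flag, makes the covering-number bound $\mathcal{N}(\epsilon)\lesssim\epsilon^{-4}$ explicit, and the Dudley integral finiteness concern evaporates (no GRR fallback required).
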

In order that we may make sense of the disorder-dependent initial condition, we also require an understanding of the distribution of $\beta_{\nu}$, conditioned on the value of $\tilde{G}_0$. To this end, for any $\nu \in \mathcal{U}$ and any $\mathbf{g} \in \mathbb{R}^M$, let $\beta_{\nu,\mathbf{g}} \in \mathcal{P}\big( \mathcal{C}([0,T],\mathbb{R}^M) \big)$ be the probability law $\beta_{\nu}$ conditioned on the event $\mathbf{G}^{\nu}_0 = \mathbf{g}$. Standard Gaussian identities \cite{Lindgren2013} imply that $\beta_{\nu,\mathbf{g}}$ is also Gaussian, with the following mean and variance,
\begin{align}
\mathbb{E}^{\beta_{\nu,\mathbf{g}}} \big[ G^{\nu,p}_s \big] =& \mathfrak{m}^{p}_{s}(\nu,\mathbf{g}) \label{eq: mps average beta} \\
\mathbb{E}^{\beta_{\nu,\mathbf{g}}} \big[ \big(G^{\nu,p}_s -  \mathfrak{m}^{p}_{s}(\nu,\mathbf{g}) \big) \big(G^{\nu,q}_t -  \mathfrak{m}^{q}_{t}(\nu,\mathbf{g} \big)  \big] =&  \mathfrak{W}^{\nu,pq}_{st} ,\label{eq: Vps average beta} 
\end{align}
where
\begin{align}
\mathfrak{W}^{\mu,pq}_{st} &= \sum_{a,b\in I_M} \mathbb{E}^{\mu}\big[ \lambda(z^{p}_s) \lambda(z^{a}_0) \big] \mathbb{E}^{\mu}\big[ \lambda(z^{q}_t) \lambda(z^{b}_0 \big]  \big(\mathfrak{V}_{\mu,0}^{-1}\big)^{ab}\label{eq: W definition beta} \\
\mathfrak{m}^{p}_{s}(\mu,\mathbf{g}) &= \sum_{a,b\in I_M} \mathbb{E}^{\mu}\big[ \lambda(z^{p}_s) \lambda(z^{a}_0 \big] \big(\mathfrak{V}_{\mu,0}^{-1}\big)^{ab} g^{b} \label{eq: conditional gaussian 1 beta} \\
\mathfrak{V}_{\mu,0}^{pq} &= \mathbb{E}^{\mu}\big[ \lambda(z^{p}_0)\lambda(z^{q}_0)    \big] .
\end{align}
\begin{corollary}\label{Lemma CompactNess Fields}
(i) For any $a > 0$, and any $\epsilon > 0$, there exists $\tilde{\delta}(a,\epsilon)$ such that for all $\nu \in \mathcal{U}_a$, and all $\mathbf{g} \in \mathbb{R}^M$,
\begin{align}
\sup_{\nu \in \mathcal{U}_a} \sup_{p\in I_M} \mathbb{E}^{\beta_{\nu, \mathbf{g}}} \bigg[ \sup_{s,t \in [0,T] \fatsemi |s-t| \leq \tilde{\delta}(a,\epsilon)} \big| G^{\nu,p}_s - \mathfrak{m}^{p}_{s}(\nu,\mathbf{g})- G^{\nu,p}_t + \mathfrak{m}^{p}_{t}(\nu,\mathbf{g})\big| \bigg] \leq \epsilon \label{eq: uniform continuity U a G conditioned}
\end{align}
(ii) For any $\epsilon , a > 0$, there exists a compact set $\mathcal{C}_{\epsilon,a} \subset \mathcal{C}([0,T],\mathbb{R}^M)$ such that for all $ \nu \in \mathcal{Q}_a$,
\begin{align} \label{eq: corolalry uniform 1}
\beta_{\nu}(\mathcal{C}_{\epsilon,a}) \geq 1- \epsilon,
\end{align}
and for all $\mathbf{g}_0 \in \mathbb{R}^M$ such that $\norm{\mathbf{g}_0} \leq a$,
\begin{align}\label{eq: corolalry uniform 2}
\beta_{\nu,\mathbf{g}_0}(\mathcal{C}_{\epsilon,a}) \geq 1- \epsilon.
\end{align}
(iii) For $1\leq j \leq N$ and any $\mathbf{g}_0 \in \mathbb{R}^M$, the law of $\tilde{G}^j_{[0,T]} $ under $\gamma^N_{\mathbf{g}_0}$ is identical to $\beta_{\hat{\mu}^N(\mathbf{y}) , \mathbf{g}_0}$.
\end{corollary}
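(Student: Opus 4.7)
I would tackle the three parts in the order (iii), (i), (ii). Part (iii) is a direct covariance computation that legitimizes the interpretation of the empirical field law; part (i) then supplies the key uniform regularity input needed for the Arz\`ela--Ascoli-based construction in (ii).

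For part (iii), fix $j \in I_N$ and view $\{\tilde{G}^{p,j}_t\}_{t,p}$ under $\gamma^N$ as a Gaussian linear functional of $\{J^{jk}\}_k$. A direct second-moment calculation gives
\begin{align*}
\mathbb{E}^{\gamma^N}\bigl[\tilde{G}^{p,j}_s \tilde{G}^{q,j}_t\bigr] \;=\; N^{-1}\sum_{k \in I_N} \lambda(y^{p,k}_s)\lambda(y^{q,k}_t) \;=\; \mathbb{E}^{\hat{\mu}^N(\mathbf{y})}\bigl[\lambda(x^p_s)\lambda(x^q_t)\bigr],
\end{align*}
which by \eqref{eq: beta nu covariance definition} is exactly the covariance of $\beta_{\hat{\mu}^N(\mathbf{y})}$. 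Applying the standard Gaussian conditioning formulas to the event $\{\tilde{G}^{p,j}_0 = g_0^p\}_p$ yields precisely the mean $\mathfrak{m}^p_s(\hat{\mu}^N(\mathbf{y}), \mathbf{g}_0)$ and covariance $\mathfrak{W}^{\hat{\mu}^N(\mathbf{y}), pq}_{st}$ of \eqref{eq: mps average beta}--\eqref{eq: Vps average beta}, identifying the conditional law as $\beta_{\hat{\mu}^N(\mathbf{y}), \mathbf{g}_0}$.

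For part (i), the decisive structural observation is that $\mathfrak{W}^{\nu, pq}_{st}$ does not depend on $\mathbf{g}$. Consequently, under $\beta_{\nu, \mathbf{g}}$ the centered process $\tilde{G}^p_s := G^{\nu, p}_s - \mathfrak{m}^{p}_{s}(\nu, \mathbf{g})$ has the same law as the projection residual $\zeta^p_s := G^{\nu, p}_s - \mathbb{E}^{\beta_\nu}[G^{\nu, p}_s \mid \mathbf{G}^\nu_0]$ under $\beta_\nu$, which is independent of $\mathbf{G}^\nu_0$. Using the triangle inequality $|\zeta^p_s - \zeta^p_t| \leq |G^{\nu,p}_s - G^{\nu,p}_t| + |\mathbb{E}[G^{\nu,p}_s - G^{\nu,p}_t \mid \mathbf{G}^\nu_0]|$ together with monotone interchange of the supremum and the conditional expectation,
\begin{align*}
\mathbb{E}^{\beta_{\nu, \mathbf{g}}}\bigl[\sup_{|s-t| \leq \tilde{\delta}} |\tilde{G}^p_s - \tilde{G}^p_t|\bigr] \;\leq\; 2\,\mathbb{E}^{\beta_\nu}\bigl[\sup_{|s-t| \leq \tilde{\delta}} |G^{\nu,p}_s - G^{\nu,p}_t|\bigr],
\end{align*}
uniformly in $\mathbf{g}$. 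Taking $\tilde{\delta}(a,\epsilon)$ to be the $\epsilon/2$-modulus provided by Lemma \ref{Lemma beta nu well defined}(ii) closes the bound.

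For part (ii), I would build $\mathcal{C}_{\epsilon, a}$ via Arz\`ela--Ascoli as the intersection of a sup-norm ball and a uniformly equicontinuous set defined from the moduli $\{\tilde{\delta}(a, 2^{-k}\epsilon)\}_{k \geq 1}$ supplied by part (i) and Lemma \ref{Lemma beta nu well defined}(ii). For $\beta_\nu$, the sup-norm bound follows from a Markov inequality applied to $\mathbb{E}^{\beta_\nu}[\sup_t (G^{\nu, p}_t)^2]$, which is finite uniformly in $\nu \in \mathcal{Q}_a$ by the moment bound built into $\mathcal{Q}_a$ together with standard Gaussian maximal inequalities. For $\beta_{\nu, \mathbf{g}_0}$ with $\|\mathbf{g}_0\| \leq a$, I decompose $G^{\nu, p}_s = \mathfrak{m}^p_s(\nu, \mathbf{g}_0) + \tilde{G}^p_s$: the deterministic mean is uniformly bounded in sup-norm via Cauchy--Schwarz, the hypothesis $\|\mathbf{g}_0\| \leq a$, and the $\mathcal{Q}_a$ moment bounds, while the fluctuation $\tilde{G}^p_s$ has the $\mathbf{g}_0$-free distribution from part (i). The principal technical obstacle is the uniform control of $\mathfrak{V}_{\nu, 0}^{-1}$ as $\nu$ ranges over $\mathcal{Q}_a$; this must be encoded in the forthcoming definition \eqref{eq: U definition} of $\mathcal{U}$ as a uniform lower bound on the smallest eigenvalue of $\mathfrak{V}_{\nu, 0}$. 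Granting that, all remaining estimates are routine.
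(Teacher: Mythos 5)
Your proposal is correct and more detailed than the paper's own one-sentence proof. The two diverge mainly on part (i): the paper simply asserts the argument is ``analogous to'' the Dudley-entropy estimate of Lemma \ref{Lemma beta nu well defined}(ii), i.e.\ it would re-run the metric-entropy bound for the conditioned covariance $\mathfrak{W}^{\nu}$; you instead exploit that the conditional covariance is independent of $\mathbf{g}$, identify the centered conditioned process with the projection residual $\zeta^p_s = G^{\nu,p}_s - \mathbb{E}^{\beta_\nu}[G^{\nu,p}_s \mid \mathbf{G}^\nu_0]$ under $\beta_\nu$, and reduce to the already-proved unconditioned modulus by the triangle inequality plus $\sup_i \mathbb{E}[Z_i \mid \mathcal{F}] \le \mathbb{E}[\sup_i Z_i \mid \mathcal{F}]$, paying only a factor of $2$. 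This is cleaner: it removes any need to re-verify the canonical-metric bound for $\mathfrak{W}^{\nu}$ and makes the uniformity in $\mathbf{g}$ transparent. Your treatment of (iii) is the intended direct covariance calculation (the row-wise independence of $\{J^{jk}\}_k$ is what makes conditioning on the full collection $\{\tilde G^{p,j}_0\}_{j,p}$ reduce to conditioning $\tilde G^{\cdot,j}$ on its own time-zero values only; you might state this explicitly). Your approach to (ii) via Arz\`ela--Ascoli plus a sup-norm Markov bound is the standard realization of what the paper leaves implicit.

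One substantive remark: you are right that a uniform lower bound on the smallest eigenvalue of $\mathfrak{V}_{\nu,0}$ over $\nu \in \mathcal{Q}_a$ is needed to bound $\mathfrak{m}^p_s(\nu,\mathbf{g}_0)$ uniformly, but the paper's definition \eqref{eq: U definition} does \emph{not} in fact encode such a bound, and the proof of this corollary in the paper does not address it either (the paper only imposes invertibility on the quantities \eqref{eq: V 0}--\eqref{eq: tilde V 0} associated with $\kappa$, and only adds a $\det \geq b$ hypothesis locally in Lemma \ref{beta mapping continuous}). So your flag is a genuine gap in the source, not just in your own sketch; the correct fix is to add a uniform positivity condition on $\mathfrak{V}_{\nu,0}$ to the definition of $\mathcal{U}_a$ (or to restrict the corollary to such $\nu$), as you suggest.
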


\subsection{Exponential Tightness}
To prove a Large Deviation Principle, one requires that the empirical measure inhabits a compact set with arbitrarily high probability. For any $\mathbf{y} \in \mathcal{C}([0,T],\mathbb{R}^M)^N$, write $\tilde{\gamma}^N_{\mathbf{y}} \in \mathcal{P}\big(  \mathcal{C}([0,T],\mathbb{R}^M)^N \big)$ to be the law of the random variables $(\tilde{G}^{p,j}_t)_{j\in I_N \fatsemi p\in I_M \fatsemi t\in [0,T]}$, and write $\tilde{\gamma}^N_{\mathbf{y},\mathbf{g}_0}$ to be $\tilde{\gamma}^N_{\mathbf{y}}$ conditioned on the event in \eqref{eq: conditioned event}.

The following lemmas are needed for this proof.
\begin{lemma} \label{Lemma Exponential Tightness mu G}
For any $L > 0$, there exists a compact set $\tilde{\mathcal{C}}_L \subset \mathcal{P}\big( \mathcal{C}([0,T],\mathbb{R}^M) \big)$ such that the following holds. For any $N\geq 1$, and any $\lbrace \mathbf{y}^j_{[0,T]} \rbrace_{j\in I_N}$ such that $\hat{\mu}^N(\mathbf{y}) \in \mathcal{Q}_L$, 
\begin{align}
N^{-1}\log \tilde{\gamma}^N_{\mathbf{y}}\big( \hat{\mu}^N(\tilde{\mathbf{G}}) \notin \tilde{\mathcal{C}}_L  \big) \leq -L.
\end{align}
Also, as long as $(\mathbf{y}_0,\mathbf{g}_0) \in \mathcal{Y}^N$,
\begin{align}
\lsup{N} \sup_{(\mathbf{y}_0,\mathbf{g}_0) \in \mathcal{Y}^N}N^{-1}\log \tilde{\gamma}^N_{\mathbf{y}, \mathbf{g}_0}\big( \hat{\mu}^N(\tilde{\mathbf{G}}) \notin \tilde{\mathcal{C}}_L  \big) \leq -L. \label{Lemma Exp Tightness G}
\end{align}
\end{lemma}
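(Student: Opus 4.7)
The plan is to exploit the fact that, conditionally on the uncoupled paths $\mathbf{y}$, the fields $(\tilde{G}^j_{[0,T]})_{j \in I_N}$ are independent centred Gaussian processes with common law $\beta_{\hat{\mu}^N(\mathbf{y})}$ (by Corollary \ref{Lemma CompactNess Fields}(iii)), so that exponential tightness of $\hat{\mu}^N(\tilde{\mathbf{G}})$ reduces to a Chernoff-type estimate for sums of independent Bernoulli indicators. The key external inputs are Corollary \ref{Lemma CompactNess Fields}(ii), which supplies a single compact subset $\mathcal{C}_{\epsilon,a} \subset \mathcal{C}([0,T],\mathbb{R}^M)$ with $\beta_{\nu}(\mathcal{C}_{\epsilon,a}^c)\leq \epsilon$ uniformly in $\nu \in \mathcal{Q}_a$, together with Prokhorov's theorem, which characterises the desired $\tilde{\mathcal{C}}_L$ as a set of measures dominated by a countable collection of such compacts.

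Concretely, for parameters $(\epsilon_n, \eta_n, a_n)$ to be chosen later, set $\mathcal{K}_n := \mathcal{C}_{\epsilon_n, a_n}$ and define the candidate compact
\begin{align*}
\tilde{\mathcal{C}}_L = \bigl\{ \mu \in \mathcal{P}(\mathcal{C}([0,T],\mathbb{R}^M)) : \mu(\mathcal{K}_n^c) \leq \eta_n \text{ for all } n \geq 1 \bigr\},
\end{align*}
which is tight, hence relatively compact by Prokhorov. For the first display, fix $\mathbf{y}$ with $\hat{\mu}^N(\mathbf{y}) \in \mathcal{Q}_L$ and observe that $\hat{\mu}^N(\tilde{\mathbf{G}})(\mathcal{K}_n^c) = N^{-1}\sum_{j \in I_N} \mathbf{1}\{\tilde{G}^j \notin \mathcal{K}_n\}$ is the empirical mean of iid Bernoulli indicators of success probability at most $\epsilon_n$. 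A Chernoff bound gives
\begin{align*}
\tilde{\gamma}^N_{\mathbf{y}}\bigl(\hat{\mu}^N(\tilde{\mathbf{G}})(\mathcal{K}_n^c) > \eta_n\bigr) \leq \exp\bigl(-N H(\eta_n \Vert \epsilon_n)\bigr),
\end{align*}
where $H(\cdot\Vert\cdot)$ is the Bernoulli relative entropy; choosing $\eta_n = 2^{-n}$ and $\epsilon_n$ small enough to force $H(\eta_n\Vert\epsilon_n) \geq L + n + 1$ makes the union bound over $n$ summable and delivers the first bound.

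For the second display, the conditional fields $(\tilde{G}^j_{[0,T]})_j$ under $\tilde{\gamma}^N_{\mathbf{y},\mathbf{g}_0}$ remain independent but with marginals $\beta_{\hat{\mu}^N(\mathbf{y}),\mathbf{g}^j_0}$, and Corollary \ref{Lemma CompactNess Fields}(ii) only guarantees $\beta_{\nu,\mathbf{g}^j_0}(\mathcal{K}_n^c) \leq \epsilon_n$ when $\norm{\mathbf{g}^j_0} \leq a_n$. The main obstacle is controlling the atypical indices $j$ whose initial field vectors are large. I would handle this via truncation: $(\mathbf{y}_0,\mathbf{g}_0) \in \mathcal{Y}^N$ combined with the finite (second, hence) first moment of $\kappa$ and $W_1$-closeness forces $N^{-1}\sum_{j \in I_N} \norm{\mathbf{g}^j_0}$ to be uniformly bounded for $N$ large, so Markov gives $N^{-1}\#\{j : \norm{\mathbf{g}^j_0} > a\} \leq C(\kappa)/a$. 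Picking $a_n$ with $C(\kappa)/a_n \leq \eta_n/2$ one bounds
\begin{align*}
\hat{\mu}^N(\tilde{\mathbf{G}})(\mathcal{K}_n^c) \leq \frac{\eta_n}{2} + N^{-1}\sum_{j : \norm{\mathbf{g}^j_0} \leq a_n} \mathbf{1}\{\tilde{G}^j \notin \mathcal{K}_n\},
\end{align*}
and applies the previous Chernoff estimate to the second (independent) sum. To dispatch the lingering requirement that $\hat{\mu}^N(\mathbf{y}) \in \mathcal{Q}_L$, invoke Lemma \ref{Lemma y exponential tightness} to absorb the exceptional event at exponential cost $e^{-NL'}$ for $L'$ chosen as large as one pleases. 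The truncation argument for the conditioned marginals is the only non-routine piece; the remainder is a textbook Sanov/Chernoff argument carried out inside the conditional Gaussian structure.
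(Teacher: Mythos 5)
Your proof is essentially the paper's own argument: both use Corollary~\ref{Lemma CompactNess Fields}(ii)--(iii) to get uniform Gaussian compactness and conditional independence, truncate on indices $j$ with large $\norm{\mathbf{g}^j_0}$ using the $W_1$-closeness of $\hat{\mu}^N(\mathbf{z}_0,\mathbf{g}_0)$ to $\kappa$, apply a Chernoff bound to the surviving indicators, and build $\tilde{\mathcal{C}}_L$ as a Prokhorov-tight family $\{\mu : \mu(\mathcal{K}_n^c)\leq \eta_n \ \forall n\}$ followed by a union bound over $n$. One small correction: the final remark about ``dispatching'' the condition $\hat{\mu}^N(\mathbf{y})\in\mathcal{Q}_L$ via Lemma~\ref{Lemma y exponential tightness} is both unnecessary (that condition is already a hypothesis of the lemma) and not meaningful inside the conditional measure $\tilde{\gamma}^N_{\mathbf{y},\mathbf{g}_0}$, where $\mathbf{y}$ is deterministic rather than random --- but omitting that sentence leaves a complete proof that matches the paper's.
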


For $\mu \in  \mathcal{P}\big( \mathcal{C}([0,T],\mathbb{R})^M \times  \mathcal{C}([0,T],\mathbb{R})^M \big)$, write $\mu^{(1)} \in  \mathcal{P}\big( \mathcal{C}([0,T],\mathbb{R})^M \big)$ to be the marginal of $\mu$ over its first $M$ variables, and $\mu^{(2)}$ to be the marginal of $\mu$ over its last $M$ variables. Next, define the set
\begin{multline}
\mathcal{U}_a = \bigg\lbrace \mu \in \mathcal{P}\big( \mathcal{C}([0,T],\mathbb{R})^M \times  \mathcal{C}([0,T],\mathbb{R})^M \big) \; : \; \mu^{(1)} \in \mathcal{Q}_{a}, \; \mu^{(2)} \in \tilde{\mathcal{C}}_a \text{ and } \\  \sup_{t\in [0,T]}\sup_{p\in I_M} \mathbb{E}^{\mu}[(G^p_t)^2] \leq C_{\lambda}^2 a, 
\;\text{for all }0\leq s,t \leq T,\; \sup_{p\in I_M} \mathbb{E}^{\mu}[(G^p_t - G^p_s)^2] \leq a C_{\lambda}^2 |t-s|^{1/2}   \bigg\rbrace , \label{eq: nu bounded moment} 
\end{multline}
and let
\begin{align}
\mathcal{U} = \bigcup_{a\geq 0} \mathcal{U}_a. \label{eq: U definition}
\end{align}
It follows immediately from the above definition that $d_W(\mu,\nu) < \infty$ for any $\mu,\nu \in \mathcal{U}$. We can now prove an `exponential tightness' result.
\begin{lemma} \label{Lemma Exponential Tightness}
For any $a \geq 0$, $\mathcal{U}_a$ is compact. For any $L > 0$, there exists $a > 0$ such that
\begin{align}
 \lsup{N}\sup_{(\mathbf{z}_0,\mathbf{g}_0) \in \mathcal{Y}^N} N^{-1}\log Q^N_{\mathbf{z}_0,\mathbf{g}_0}\big( \hat{\mu}^N \notin \mathcal{U}_a \big) \leq - L.
\end{align}
\end{lemma}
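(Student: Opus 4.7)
\emph{Compactness of $\mathcal{U}_a$.} My approach would be to verify tightness and closedness separately and then invoke Prokhorov's theorem. Tightness follows immediately from the marginal constraints: $\mu^{(1)}\in\mathcal{Q}_a$ lies in a compact subset of $\mathcal{P}(\mathcal{C}([0,T],\mathbb{R}^M))$ by Lemma~\ref{Lemma compact K L}, and $\mu^{(2)}\in\tilde{\mathcal{C}}_a$ is compact by Lemma~\ref{Lemma Exponential Tightness mu G}; tightness of both marginals forces tightness of the joint law. For closedness, each extra condition in \eqref{eq: nu bounded moment} has the form $\mathbb{E}^\mu[\phi]\leq c$ with $\phi$ a non-negative continuous (possibly unbounded) functional on path space. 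Writing $\mathbb{E}^\mu[\phi]=\sup_{M}\mathbb{E}^\mu[\phi\wedge M]$ exhibits it as the supremum of weakly-continuous functionals, hence lower semi-continuous in $\mu$; and a supremum over $t\in[0,T]$, $p\in I_M$, or pairs $(s,t)$ of LSC functionals remains LSC, so the corresponding level sets are closed.

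\emph{Exponential tightness.} I would decompose $\{\hat{\mu}^N\notin\mathcal{U}_a\}$ into the four sub-events
\begin{equation*}
\mathcal{E}_1:\ \hat{\mu}^N(\mathbf{y})\notin\mathcal{Q}_a,\quad \mathcal{E}_2:\ \hat{\mu}^N(\tilde{\mathbf{G}})\notin\tilde{\mathcal{C}}_a,\quad \mathcal{E}_3:\ \text{the sup-in-}t\text{ moment bound fails},\quad \mathcal{E}_4:\ \text{the increment bound fails}.
\end{equation*}
Event $\mathcal{E}_1$ has $Q^N_{\mathbf{z}_0,\mathbf{g}_0}$-probability bounded by $e^{-LN}$ for all $N$ sufficiently large, by Lemma~\ref{Lemma y exponential tightness}, since the initial-empirical-measure constraint built into $\mathcal{Y}^N$ supplies the uniform moment control on the initial conditions that this lemma needs. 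For $\mathcal{E}_2$ I would first restrict to $\mathcal{E}_1^c$, so that Lemma~\ref{Lemma Exponential Tightness mu G} applies and yields \eqref{Lemma Exp Tightness G} with the requisite uniformity over $(\mathbf{z}_0,\mathbf{g}_0)\in\mathcal{Y}^N$. For $\mathcal{E}_3$ and $\mathcal{E}_4$, again restricting to $\mathcal{E}_1^c$, part (iii) of Corollary~\ref{Lemma CompactNess Fields} identifies the conditional law of $\{\tilde{G}^{p,j}_{[0,T]}\}_{j\in I_N}$ under $\gamma^N_{\mathbf{g}_0}$ as a product of copies of $\beta_{\hat{\mu}^N(\mathbf{y}),\mathbf{g}^j_0}$, Gaussian with conditional mean \eqref{eq: mps average beta} and variance \eqref{eq: Vps average beta}. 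Since the total variance of $\tilde{G}^{p,j}_t$ is bounded by $C_\lambda^2\,\mathbb{E}^{\hat{\mu}^N(\mathbf{y})}[(x^p_t)^2]\leq C_\lambda^2 a$, standard Chernoff bounds on sums of squared sub-Gaussians give superexponential decay of the probability that $N^{-1}\sum_j (\tilde{G}^{p,j}_t)^2$ exceeds (a suitable constant times) $C_\lambda^2 a$ at any fixed $(t,p)$, and analogously for the increments.

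\emph{Main obstacle.} The delicate point is that $\mathcal{E}_3$ and $\mathcal{E}_4$ require control uniform over the continuum $[0,T]$ (respectively $[0,T]^2$), so a naive union bound over pointwise Chernoff estimates fails. I would resolve this by discretizing $[0,T]$ on a grid of mesh $\Delta=\Delta(a)$, applying the Chernoff bound at the $O(\Delta^{-1})$ grid points, and then controlling fluctuations between grid points via the uniform modulus of continuity \eqref{eq: uniform continuity U a G conditioned} from Corollary~\ref{Lemma CompactNess Fields}, which holds uniformly over $\nu\in\mathcal{U}_a$ and $\mathbf{g}\in\mathbb{R}^M$. Choosing $a$ large enough (as a function of $L$) so that the Chernoff exponent at each grid point dominates $L+\log(\Delta^{-1})/N$, and combining the four pieces by a union bound, then yields the claimed $\lsup{N}$ upper bound of $-L$.
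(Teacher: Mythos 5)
Your overall strategy is sound and in fact more careful than what the paper actually writes. The paper's proof is a single sentence invoking compactness of $\mathcal{Q}_a$ and $\tilde{\mathcal{C}}_a$ together with Lemmas \ref{Lemma y exponential tightness} and \ref{Lemma Exponential Tightness mu G}; these address only your events $\mathcal{E}_1$ and $\mathcal{E}_2$, and the two explicit second-moment constraints on $G$ in the definition \eqref{eq: nu bounded moment} of $\mathcal{U}_a$ are never mentioned. You correctly identify that $\mathcal{E}_3$ and $\mathcal{E}_4$ are genuine extra requirements. Your compactness argument (tightness from the marginal compactness via Prokhorov, closedness from sub-level sets of lower semicontinuous functionals $\mu\mapsto\mathbb{E}^\mu[\phi]$) is clean and correct, and agrees with what the paper is implicitly relying on.

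There is, however, a real gap in the way you propose to dispose of $\mathcal{E}_3$ and $\mathcal{E}_4$. The uniform modulus-of-continuity estimate \eqref{eq: uniform continuity U a G conditioned} that you invoke controls the \emph{centered} increment $G^{\nu,p}_s-\mathfrak{m}^p_s(\nu,\mathbf{g})-G^{\nu,p}_t+\mathfrak{m}^p_t(\nu,\mathbf{g})$, not $G^p_s-G^p_t$ itself. The constraints in $\mathcal{U}_a$ concern the uncentered field, so your discretization must separately control the contribution of the conditional means, and $\mathfrak{m}^p_t(\hat{\mu}^N(\mathbf{y}),\mathbf{g}_0^j)$ grows linearly in $\norm{\mathbf{g}_0^j}$. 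There is no pointwise bound on $\norm{\mathbf{g}_0^j}$ inside $\mathcal{Y}^N$: one only has control of the fraction of indices with large $\norm{\mathbf{g}_0^j}$, obtained from $d_W(\hat{\mu}^N(\mathbf{z}_0,\mathbf{g}_0),\kappa)\le\tilde\delta_N$, exactly as the proof of Lemma \ref{Lemma Exponential Tightness mu G} does with its quantity $\theta_{N,\epsilon}(\mathbf{g}_0)$. Your outline does not perform this splitting, so as written the Chernoff-plus-grid argument applies only to the fluctuation part of $G$. Similarly, the variance bound ``$\mathbb{E}^{\hat\mu^N(\mathbf{y})}[(x^p_t)^2]\le a$'' appearing in your Chernoff step omits the mean-squared contribution of $\mathfrak{m}$.

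A more economical route, which sidesteps the continuum union bound altogether, is available in the paper's toolkit and is probably what the authors have in mind: restrict to the event $\{\norm{\mathcal{J}_N}\le\ell\}$, which is superexponentially likely by Lemma \ref{Lemma bound on operator norm}. On that event, $\mathbb{E}^{\hat\mu^N}[(G^p_t)^2]=N^{-1}\norm{\mathcal{J}_N\,\lambda(\mathbf{y}^p_t)}^2\le\ell^2\,\mathbb{E}^{\hat\mu^N(\mathbf{y})}[\lambda(y^p_t)^2]$ and likewise for increments, so both $\mathcal{E}_3$ and $\mathcal{E}_4$ are \emph{deterministic} consequences of $\hat{\mu}^N(\mathbf{y})\in\mathcal{Q}_{a'}$ for a suitably smaller $a'$ (after absorbing $\ell^2$ and the Lipschitz constant of $\lambda$ into the constants). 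No Chernoff bound, no grid, no modulus of continuity are then needed for $\mathcal{E}_3,\mathcal{E}_4$; they are simply subsumed by $\mathcal{E}_1$ and the operator-norm event.
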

\begin{proof}
Since the sets $\mathcal{Q}_a$ and $\tilde{\mathcal{C}}_a$ are compact, this follows almost immediately from Lemmas \ref{Lemma y exponential tightness} and \ref{Lemma Exponential Tightness mu G}.
\end{proof}

\subsection{The Coupled System (with connectivity-dependent initial conditions)} \label{Results Coupled System}

Define $\mathcal{J}:  \mathcal{P}\big( \mathcal{C}([0,T],\mathbb{R}^M)^2\big) \to \mathbb{R}$ to be such that $\mathcal{J}(\mu) = \infty$ if $\mu \notin \mathcal{U}$, or if the marginal of $\mu$ at time $0$ is not $\kappa$. Else otherwise, for any $\mu \in  \mathcal{P}\big( \mathcal{C}([0,T],\mathbb{R}^M)^2\big)$ and $\mathbf{z}_0,\mathbf{g}_0 \in \mathbb{R}^M$, writing $\mu_{\mathbf{z}_0,\mathbf{g}_0}$ for $\mu$ conditioned on the values of its variables at time $0$, define
\begin{align}
\mathcal{J}(\mu) = \mathbb{E}^{\kappa}\bigg[ \mathcal{R}\big(\mu_{\mathbf{z}_0,\mathbf{g}_0} || S_{\mu,\mathbf{z}_0,\mathbf{g}_0} \big) \bigg] \label{eq: J rate function}
\end{align}
Here $S_{\mu,\mathbf{z}_0,\mathbf{g}_0} \in \mathcal{P}\big( \mathcal{C}([0,T],\mathbb{R}^M)^2 \big)$ is defined to be the probability law of $(\mathbf{z},\mathbf{G}^{\mu})$, where $\mathbf{G}^{\mu}$ is distributed according to $\beta_{\mu, \mathbf{g}_0}$, and for Brownian Motions $\big(W^p_{[0,T]}\big)_{p\in I_M}$ that are independent of $\mathbf{G}^{\nu}$ 
\begin{align} \label{eq: R nu definition}
dz^p_t = \big( - \tau^{-1} z^p_t + G^{\mu,p}_t \big) dt + \sigma_t dW^p_t
\end{align}

Define $\Phi:  \mathcal{U} \to \mathcal{U}$ to be the following map. For some $\mu \in \mathcal{U}$, write $\Phi(\mu)$ to be the law of the following random variables $(\mathbf{x},\mathbf{h})$. First, it is stipulated that $(\mathbf{x}_0,\mathbf{h}_0)$ have probability law $\kappa$. Second, conditionally on $(\mathbf{x}_0,\mathbf{h}_0)$, the distribution of $( \mathbf{x}_{[0,T]},\mathbf{h}_{[0,T]})$ is given by $S_{\mu,\mathbf{x}_0,\mathbf{g}_0} $. 

\begin{lemma}\label{Lemma unique minimizer rate function}
The probability law $S_{\mu,\mathbf{z}_0,\mathbf{g}_0}$ is well-defined. Furthermore, there exists a unique zero $\xi$ of the rate function $\mathcal{J}$. $\xi$ is the unique measure in $\mathcal{U}$ such that $\Phi(\xi)=\xi$. 
\end{lemma}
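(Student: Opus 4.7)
\textbf{Well-posedness of $S_{\mu,\mathbf{z}_0,\mathbf{g}_0}$.} Fix $\mu\in\mathcal{U}$ whose time-$0$ marginal is $\kappa$. The covariance $\mathfrak{V}_{\mu,0}$ then reduces to $\mathfrak{V}_0$, which is invertible by hypothesis, so $\mathfrak{m}^p_s(\mu,\mathbf{g}_0)$ and $\mathfrak{W}^{\mu,pq}_{st}$ are unambiguously defined by \eqref{eq: W definition beta}--\eqref{eq: conditional gaussian 1 beta}. Lemma~\ref{Lemma beta nu well defined} together with Corollary~\ref{Lemma CompactNess Fields} certify that $\beta_{\mu,\mathbf{g}_0}$ is a bona-fide Gaussian law on $\mathcal{C}([0,T],\mathbb{R}^M)$ with almost surely continuous sample paths. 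Conditionally on such a realisation of $\mathbf{G}^{\mu}$, equation \eqref{eq: R nu definition} is a linear Ornstein--Uhlenbeck SDE with deterministic, bounded, continuous noise coefficient $\sigma_t$, so variation of constants yields a pathwise unique strong solution. The resulting joint law is $S_{\mu,\mathbf{z}_0,\mathbf{g}_0}$.

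\textbf{Zeros of $\mathcal{J}$ coincide with fixed points of $\Phi$.} If $\mathcal{J}(\mu)=0$, then $\mu\in\mathcal{U}$ has time-$0$ marginal $\kappa$ and $\mathcal{R}(\mu_{\mathbf{z}_0,\mathbf{g}_0}\,||\,S_{\mu,\mathbf{z}_0,\mathbf{g}_0})=0$ for $\kappa$-almost every $(\mathbf{z}_0,\mathbf{g}_0)$. Strict positivity of relative entropy forces $\mu_{\mathbf{z}_0,\mathbf{g}_0}=S_{\mu,\mathbf{z}_0,\mathbf{g}_0}$ for $\kappa$-almost every initial condition; integrating back against $\kappa$ gives $\mu=\Phi(\mu)$. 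The converse is immediate from the definitions.

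\textbf{Contraction and unique fixed point.} For uniqueness I plan a Picard-type argument. Let $\mu,\nu\in\mathcal{U}_a$ both have time-$0$ marginal $\kappa$, so that $\mathfrak{V}_{\mu,0}=\mathfrak{V}_{\nu,0}=\mathfrak{V}_0$. Couple $\Phi(\mu)$ and $\Phi(\nu)$ synchronously by drawing a common $(\mathbf{z}_0,\mathbf{g}_0)\sim\kappa$, using identical Brownian motions, and realising the Gaussian fields $\mathbf{G}^{\mu},\mathbf{G}^{\nu}$ from a shared white noise through the square roots of $\mathfrak{W}^{\mu},\mathfrak{W}^{\nu}$. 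The Lipschitz continuity of $\lambda$ combined with the uniform $L^2$ moment control on $\mathcal{U}_a$ yields
$$\big|\mathbb{E}^{\mu}[\lambda(z^p_s)\lambda(z^a_0)]-\mathbb{E}^{\nu}[\lambda(z^p_s)\lambda(z^a_0)]\big|\le C_a\, d_W(\mu|_{[0,s]},\nu|_{[0,s]}),$$
and (since $\mathfrak{V}_0^{-1}$ is common to both sides) analogous bounds on $|\mathfrak{W}^{\mu,pq}_{st}-\mathfrak{W}^{\nu,pq}_{st}|$ and on $|\mathfrak{m}^p_s(\mu,\mathbf{g}_0)-\mathfrak{m}^p_s(\nu,\mathbf{g}_0)|$. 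Applying Gronwall to the coupled Ornstein--Uhlenbeck dynamics, and writing $F(t):=\sup_{s\le t}\mathbb{E}[|z^{\mu}_s-z^{\nu}_s|^2+|G^{\mu}_s-G^{\nu}_s|^2]$, one obtains
$$F(t)\le C\int_0^t F(s)\,ds+C\int_0^t d_W(\mu|_{[0,s]},\nu|_{[0,s]})^2\,ds.$$
Iterating this one-step Picard inequality $n$ times shows $d_W(\Phi^n(\mu),\Phi^n(\nu))^2\le (CT)^n/n!\cdot d_W(\mu,\nu)^2$, so $\Phi^n$ is eventually a strict contraction on a suitable $\mathcal{U}_{a_0}$. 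Banach's fixed-point theorem then delivers a unique fixed point $\xi\in\mathcal{U}$.

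\textbf{Main obstacle.} The hardest step will be the synchronous Gaussian coupling and the derivation of the quantitative estimate on $|\mathfrak{W}^{\mu,pq}_{st}-\mathfrak{W}^{\nu,pq}_{st}|$ uniform in $s,t\in[0,T]$. Because $\lambda$ is only Lipschitz and possibly unbounded, the map $\mu\mapsto\mathbb{E}^{\mu}[\lambda(z^p_s)\lambda(z^a_0)]$ is Wasserstein-Lipschitz only after restriction to $\mathcal{U}_a$ for fixed $a$; one therefore has to verify that $\Phi(\mathcal{U}_{a_0})\subseteq\mathcal{U}_{a_0}$ for a sufficiently large choice of $a_0$, propagating the regularity constraints of \eqref{eq: nu bounded moment} through one application of $\Phi$ by means of Lemma~\ref{Lemma beta nu well defined}(ii), the It\^{o} isometry, and Gronwall applied to the Ornstein--Uhlenbeck equation.
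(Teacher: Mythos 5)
Your structure for the first two claims matches the paper: well-definedness of $S_{\mu,\mathbf{z}_0,\mathbf{g}_0}$ is extracted from Lemma~\ref{Lemma beta nu well defined} plus pathwise uniqueness of the linear OU SDE, and the equivalence between zeros of $\mathcal{J}$ and fixed points of $\Phi$ follows from strict positivity of the relative entropy. Those parts are fine and essentially the paper's own argument. The existence of a zero is handled differently --- the paper appeals to the LDP (the rate function must have a zero or the total mass cannot be one), while you obtain it as a byproduct of Banach's theorem --- both are legitimate, but yours hinges on verifying $\Phi(\mathcal{U}_{a_0})\subseteq\mathcal{U}_{a_0}$, which you flag but do not close.

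The real gap is in your coupling of the two Gaussian fields. You propose to realise $\mathbf{G}^{\mu}$ and $\mathbf{G}^{\nu}$ from a shared white noise via the square roots of their covariance operators $\mathfrak{W}^{\mu}$ and $\mathfrak{W}^{\nu}$. In the infinite-dimensional path-space setting these covariance operators are trace-class, hence compact, hence not bounded below; the operator square root is then only H\"older-$1/2$ in Hilbert--Schmidt norm as a function of the operator, not Lipschitz. Consequently your claimed estimate $\mathbb{E}[|G^{\mu}_s-G^{\nu}_s|^2]\lesssim\lVert\mathfrak{W}^{\mu}-\mathfrak{W}^{\nu}\rVert$ does not follow from this coupling, and the factor $C\,d_W(\mu|_{[0,s]},\nu|_{[0,s]})^2$ in your Gronwall inequality, which is what powers the $(CT)^n/n!$ Picard contraction, is not justified. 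The paper sidesteps this entirely in Lemma~\ref{beta mapping continuous}: rather than factorising the covariances, it builds a joint Gaussian law $\beta_{\mu,\nu}$ directly from a near-optimal transport plan $\zeta_{\epsilon}$ between $\mu$ and $\nu$, specifying the cross-covariance as $\mathbb{E}^{\zeta_{\epsilon}}[\lambda(z^p_s)\lambda(y^q_t)]$. With that coupling the Lipschitz bound
\[
\mathbb{E}\Big[\int_0^r (G^{\mu,p}_t-G^{\nu,p}_t)^2\,dt\Big]\le C_\lambda^2\,\mathbb{E}^{\zeta_{\epsilon}}\Big[\int_0^r (z^p_t-y^p_t)^2\,dt\Big]
\]
is immediate and uniform on $\mathfrak{Q}$. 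Note also that the paper runs the whole contraction in the time-integrated metric $d^{(2)}_t$ of \eqref{eq: zeta Wasserstein L squares def} rather than the sup-Wasserstein $d_W$, and only at the end invokes Lemma~\ref{Lemma Equivalent Metric} to transfer the conclusion to weak convergence; this choice is what makes the $\sqrt{t}$ (resp.\ $t$) factor in the contraction clean. Replacing your square-root coupling by the joint-law construction of Lemma~\ref{beta mapping continuous}, and restating the Picard iteration in $d^{(2)}_t$, would repair the argument.
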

\begin{proof}
We have already proved in Lemma \ref{Lemma beta nu well defined} that $\big( G^{\mu,p}_{[0,T]} \big)_{p\in I_M}$ is well-defined.
It is straightforward to check that for any path $\big( G^{\mu,p}_{[0,T]} \big)_{p\in I_M}$, there exists a unique strong solution to the stochastic differential equation \eqref{eq: R nu definition}. Thus the probability law is well-defined.

It is well-known that the Relative Entropy is zero if and only if its two arguments are identical \cite{Budhiraja2019}. Thus, from the form of $\mathcal{J}$ in \eqref{eq: J rate function}, any zero must be a fixed point of $\Phi$. Furthermore, there must exist at least one zero of the rate function (if not, the total probability mass could not be one as $N\to\infty$). The uniqueness of the zero is proved in Lemma \ref{Lemma at most one Fixed Point}.
\end{proof}
\begin{theorem}\label{Theorem Coupled Large Deviation Principle}
For any $\epsilon > 0$,
\begin{align}
\lsup{N} N^{-1} \log \mathbb{P}\big( d_W(\hat{\mu}^N(\mathbf{z},\mathbf{G}) , \xi) \geq \epsilon \big) < 0.
\end{align}
Thus with unit probability,
\begin{equation}
\lim_{N\to\infty}\hat{\mu}^N(\mathbf{z},\mathbf{G})  = \xi .
\end{equation}
Furthermore, 
\begin{align}
\xi = \lim_{n\to\infty} \xi^{(n)}, \label{eq: convergence xi n approximations}
\end{align}
where $\xi^{(n+1)} = \Phi( \xi^{(n)})$ and $\xi^{(1)}$ is any measure in $\mathcal{U}$.
\end{theorem}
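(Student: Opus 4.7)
The plan is to transfer the uniform uncoupled LDP of Theorem \ref{Theorem Uncoupled Large Deviation Principle Conditional Time 0 0} to the coupled system via an exponential change of measure, and then read off the convergence claims from the fact that $\mathcal{J}$ has a unique zero (Lemma \ref{Lemma unique minimizer rate function}).

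First I would fix initial data $(\mathbf{z}_0, \mathbf{g}_0) \in \mathcal{Y}^N$ and compare the coupled law $\mathbb{P}^N_{\mathbf{z}_0, \mathbf{g}_0}$ of $(\mathbf{z}_{[0,T]}, \mathbf{G}_{[0,T]})$ with the uncoupled law $Q^N_{\mathbf{z}_0,\mathbf{g}_0}$ of $(\mathbf{y}_{[0,T]}, \tilde{\mathbf{G}}_{[0,T]})$ used in Theorem \ref{Theorem Uncoupled Large Deviation Principle Conditional Time 0 0}. Since the two laws share the conditional distribution of the fields but differ in the spin drift by $-z^{p,j}_t/\tau + \tilde{G}^{p,j}_t$, Girsanov's theorem produces a density of the form $\exp(N\Psi(\hat{\mu}^N))$, where $\Psi : \mathcal{U} \to \mathbb{R}$ depends only on the double empirical measure (the stochastic-integral contribution is a martingale with quadratic variation $O(1)$ per coordinate, so it is negligible on the large-deviation scale after an appropriate truncation on $\mathcal{U}_a$). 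The uniform second-moment control in \eqref{eq: nu bounded moment} then gives both continuity of $\Psi$ in the Wasserstein topology and the exponential integrability required for Varadhan's lemma.

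Next, Varadhan's lemma combined with the uniform uncoupled LDP of Theorem \ref{Theorem Uncoupled Large Deviation Principle Conditional Time 0 0} and the exponential tightness of Lemma \ref{Lemma Exponential Tightness} yields a full LDP for $\hat{\mu}^N(\mathbf{z},\mathbf{G})$ at speed $N$, whose rate function equals $\mathcal{I}(\mu) - \Psi(\mu)$. A direct computation (using that $S_{\mu,\mathbf{z}_0,\mathbf{g}_0}$ is precisely the tilt of the uncoupled conditional law by the Girsanov density) identifies this with $\mathcal{J}(\mu)$ as defined in \eqref{eq: J rate function}. The disorder-dependent initial condition enters through the extra factor $(Z^N_{\mathbf{J}})^{-1}\chi\{d_W(\hat{\mu}^N_0,\kappa) \le \delta_N\}$: its contribution is absorbed by conditions \eqref{eq: bounded moment 1}--\eqref{eq: bounded moment 2}, which guarantee that $N^{-1}\log Z^N_{\mathbf{J}}$ concentrates exponentially, so at the LDP scale the random normalization may be replaced by its annealed expectation and one picks up the hard constraint $\mu_0 = \kappa$ appearing in the definition of $\mathcal{J}$.

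With the LDP in hand, Lemma \ref{Lemma unique minimizer rate function} identifies $\xi$ as the unique zero of $\mathcal{J}$; lower-semi-continuity and compactness of level sets make $\{\mu : d_W(\mu,\xi) \geq \epsilon\}$ a closed set on which $\mathcal{J}$ is bounded below by a positive constant, yielding the exponential concentration and, via Borel--Cantelli, almost sure convergence. For the iteration $\xi^{(n+1)} = \Phi(\xi^{(n)})$, formulas \eqref{eq: W definition beta}--\eqref{eq: conditional gaussian 1 beta} together with the SDE \eqref{eq: R nu definition} imply $\Phi(\mathcal{U}_a) \subset \mathcal{U}_{a'}$ for some $a' = a'(a)$ that may be iterated to a uniform bound (using that $\kappa$ has finite second moments and that $\mathbb{E}^\mu[\lambda(z)^2] \leq C_\lambda^2 a$ on $\mathcal{U}_a$), so $\{\xi^{(n)}\}$ lies in a compact subset of $\mathcal{U}$; Wasserstein continuity of $\Phi$ then forces every subsequential limit to be a fixed point and hence equal to $\xi$ by Lemma \ref{Lemma unique minimizer rate function}. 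I expect the principal obstacle to be the clean verification of Varadhan's lemma in Step 2: one must show that $\Psi$ is a bona fide continuous functional of $\hat{\mu}^N$ with the right exponential tails, and simultaneously marry the uniformity over $(\mathbf{z}_0,\mathbf{g}_0) \in \mathcal{Y}^N$ with the concentration of $Z^N_{\mathbf{J}}$ so that the resulting rate function is exactly $\mathcal{J}$ rather than an annealed variant.
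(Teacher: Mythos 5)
Your overall blueprint---Girsanov tilt, express the Radon--Nikodym density through the double empirical measure, identify the rate with $\mathcal{J}$, invoke uniqueness of its zero---matches the paper's strategy, and your treatment of the $Z^N_{\mathbf{J}}$ normalization via \eqref{eq: bounded moment 1}--\eqref{eq: bounded moment 2} is sound. But the central claim in Step 2 is false, and this is precisely the point where the paper does the real work. You assert that the stochastic-integral part of the Girsanov exponent is ``negligible on the large-deviation scale.'' It is not. Writing the exponent as $N\Gamma^N_{\mathbf{J},T}(\mathbf{y})$, the stochastic integral $\sum_{j,p}\int_0^T \sigma_s^{-2}(\tilde G^{p,j}_s - \tau^{-1}y^{p,j}_s)\,dy^{p,j}_s$ is a martingale with quadratic variation of order $N$, so while it is typically $O(\sqrt N)$, its moment generating function at scale $N$ is $\exp(O(N))$---exactly the LDP scale, not below it. A truncation to $\mathcal{U}_a$ controls the quadratic variation but does not render the contribution superexponentially small, so it cannot simply be discarded before applying Varadhan. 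Moreover, the stochastic integral is what makes $\exp(N\Gamma^N_{\mathbf{J},T})$ a Radon--Nikodym derivative (expectation one); dropping it would corrupt the normalization on which the change of measure relies.

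What the paper does instead is keep both pieces but replace $\Gamma$ by the time-discretized functionals $\Gamma^{(m)}$, which \emph{are} continuous in $\mu$ on $\mathcal{U}$ because they are finite sums of $\mathbb{E}^\mu$ of products of the coordinates at fixed times. Lemma \ref{Lemma discrete m n approximation} shows that, uniformly over $(\mathbf{z}_0,\mathbf{g}_0)\in\mathcal{Y}^N$, the discretization error $\Gamma^{(m)}(\hat\mu^N)-\Gamma^{(n)}(\hat\mu^N)$ has a quadratic variation that decays in $m$, so its tails are exponentially negligible at a controllable level; Lemma \ref{Lemma Approximate Gamma by Continuous Functions} then upgrades this to a.s.\ convergence and the uniform estimate \eqref{eq: uniform approximation RD Derivative}. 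The paper then never asserts a full LDP for the coupled system: instead it covers the compact set $\mathfrak U$ by $\epsilon$-balls, deduces by pigeonhole that there must be at least one ball with probability $e^{o(N)}$, shows using \eqref{eq: to shown J mu minimum} that any center of such a ball must satisfy $\mathcal{J}(\mu)=0$, and uses Lemma \ref{Lemma Boundedness of Gamma} to control $\exp(N\Gamma)$ on the exponentially-small event $\hat\mu^N\notin\mathfrak{U}$. Your plan skips all of this and therefore does not actually discharge what you yourself flagged as ``the principal obstacle.'' The last paragraph of your proposal (compactness and continuity of $\Phi$ for the iteration $\xi^{(n)}$) is also a different route from the paper, which instead proves in Lemma \ref{Lemma at most one Fixed Point} that $\tilde\Phi_t$ is a contraction in the $d^{(2)}_t$ metric on short time intervals; your route would need a separate argument that $\Phi$ is continuous on $\mathcal{U}$ in the Wasserstein metric, which is plausible but not supplied.
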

\subsection{Connectivity-Independent Initial Conditions}
The above reasoning can be adapted to prove a Large Deviation Principle for the unconditioned system. This is needed for proving the main theorem for Case 2 (connectivity-independent initial conditions). Write $Q^N = \gamma^N \otimes P^N$ to be the law of the random variables $(\mathbf{y},\mathbf{G})$ (with no conditioning), and for any $\nu \in \mathfrak{Q}$, define $S_{\nu} \in \mathcal{P}\big( \mathcal{C}([0,T],\mathbb{R}^M)^2 \big)$ to be $S_{\nu} = P \otimes \beta_{\nu}$. In the following  corollary to Theorem \ref{Theorem Uncoupled Large Deviation Principle Conditional Time 0 0}, we prove the Large Deviation Principle for the unconditioned and uncoupled system.
\begin{corollary}\label{Corollary Uncoupled Large Deviation Principle Unconditioned}
Let $\mathcal{A},\mathcal{O}\in \mathcal{B}\big( \mathcal{P}\big( \mathcal{C}([0,T],\mathbb{R}^M)^2\big) \big)$, such that $\mathcal{O}$ is open and $\mathcal{A}$ closed. Then
\begin{align}
 \lsup{N} N^{-1}\log Q^N\big( \hat{\mu}^N(\mathbf{y}_{[0,T]}, \tilde{\mathbf{G}}_{[0,T]}) \in \mathcal{A}\big) &\leq -\inf_{\mu \in \mathcal{A}} \mathcal{R}\big(\mu || S_{\mu^{(1)}}\big)\label{LDP upper bound corollary}  \\
\linf{N} N^{-1}\log Q^N  \big( \hat{\mu}^N(\mathbf{y}_{[0,T]}, \tilde{\mathbf{G}}_{[0,T]}) \in \mathcal{O}\big) &\geq -\inf_{\mu \in \mathcal{O}}\mathcal{R}\big(\mu || S_{\mu^{(1)}}\big). \label{LDP lower bound corollary}
\end{align}
Here the rate function $\mu \to \mathcal{R}\big(\mu || S_{\mu^{(1)}}\big)$ is lower semi-continuous and has compact level sets. 
\end{corollary}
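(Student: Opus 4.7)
The key structural observation is that under $Q^N = \gamma^N \otimes P^N$, the trajectories $(\mathbf{y}^j_{[0,T]})_{j \in I_N}$ are iid with law $P$, and conditionally on $\mathbf{y}$ the fields $(\tilde{\mathbf{G}}^j_{[0,T]})_{j \in I_N}$ are independent with common law $\beta_{\hat{\mu}^N(\mathbf{y})}$. This is because $\tilde{G}^{p,j}_t = N^{-1/2}\sum_k J^{jk}\lambda(y^{p,k}_t)$ is linear in $\{J^{jk}\}_k$, and these are independent across different $j$. Thus $Q^N$ has a McKean--Vlasov-type conditional product structure in which each particle's conditional law depends only on the empirical measure $\hat{\mu}^N(\mathbf{y})$.

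The plan is to leverage this structure via a two-stage Sanov argument. First, Sanov's theorem applied to the iid trajectories $(\mathbf{y}^j)$ yields an LDP for the first marginal $\hat{\mu}^N(\mathbf{y})$ with rate $\mathcal{R}(\mu^{(1)} || P)$. Second, given $\mathbf{y}$ with $\hat{\mu}^N(\mathbf{y}) = \nu^N \to \nu$, the joint empirical measure $\hat{\mu}^N = N^{-1}\sum_j \delta_{(\mathbf{y}^j, \tilde{G}^j)}$ is determined by iid Gaussian samples $\tilde{G}^j \sim \beta_{\nu^N}$ paired with the fixed $\mathbf{y}^j$, yielding a conditional Sanov LDP whose rate equals $\int \mathcal{R}(\mu_{\mathbf{y}} || \beta_{\nu})\, d\nu(\mathbf{y})$, where $\mu_{\mathbf{y}}$ denotes the conditional law of the field given the trajectory under $\mu$. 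Combining the two contributions via the chain rule for relative entropy,
\[
\mathcal{R}(\mu^{(1)} || P) + \int \mathcal{R}(\mu_{\mathbf{y}} || \beta_{\mu^{(1)}})\, d\mu^{(1)}(\mathbf{y}) \;=\; \mathcal{R}\bigl(\mu \,||\, P \otimes \beta_{\mu^{(1)}}\bigr) \;=\; \mathcal{R}(\mu \,||\, S_{\mu^{(1)}}),
\]
recovers the claimed rate function.

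The main technical obstacle is that $\beta_{\hat{\mu}^N(\mathbf{y})}$ depends on the full empirical measure rather than on a fixed reference, so the conditional samples are not genuinely iid. To control this I will invoke the uniform continuity of $\nu \mapsto \beta_{\nu}$ from Lemma \ref{Lemma beta nu well defined} together with the exponential tightness of Lemma \ref{Lemma Exponential Tightness}, restricting the analysis to the compact sets $\mathcal{U}_a$ on which $\beta_{\nu^N}$ converges uniformly to $\beta_{\mu^{(1)}}$; this uniformity lets us decouple the trajectories from the fields asymptotically. Alternatively, the corollary may be derived by integrating the uniform conditional LDP of Theorem \ref{Theorem Uncoupled Large Deviation Principle Conditional Time 0 0} against the law of $(\mathbf{z}_0, \mathbf{g}_0)$ under $Q^N$, observing that without the constraint $(\mathbf{z}_0, \mathbf{g}_0) \in \mathcal{Y}^N$ the initial-time entropy contribution is absorbed into $\mathcal{R}(\mu^{(1)} || P)$. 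Finally, lower semi-continuity and compactness of level sets of $\mu \mapsto \mathcal{R}(\mu \,||\, S_{\mu^{(1)}})$ follow from the joint lower semi-continuity of relative entropy, the continuity of $\mu^{(1)} \mapsto S_{\mu^{(1)}} = P \otimes \beta_{\mu^{(1)}}$ on $\mathcal{U}$, and the exponential tightness bound.
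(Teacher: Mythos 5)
Your primary argument is essentially the paper's argument, reorganized. The paper proves Corollary \ref{Theorem Uncoupled Large Deviation Principle Conditional Time 0 Corollary} (Sanov's theorem for the iid law $Q^N_\nu = (P\otimes\beta_\nu)^{\otimes N}$ with rate $\mathcal{R}(\mu\,\|\,S_\nu)$) and then observes that the proof of Corollary \ref{Corollary Uncoupled Large Deviation Principle Unconditioned} ``follows analogously'' to the proof of Theorem \ref{Theorem Uncoupled Large Deviation Principle Conditional Time 0 0}: exponential tightness (Lemmas \ref{Lemma y exponential tightness}, \ref{Lemma Exponential Tightness mu G}, \ref{Lemma Exponential Tightness}) restricts to a compact $\mathcal{U}_a$, this set is covered by small Wasserstein balls $B_\epsilon(\zeta_i)$, the field law $\beta_{\hat\mu^N(\mathbf{y})}$ is frozen at a representative $\nu^{(1)}$ on each ball using the continuity of $\nu\mapsto\beta_\nu$ from Lemma \ref{beta mapping continuous}, the frozen iid Sanov bound is applied, and $\epsilon\to 0^+$ is taken using lower semi-continuity of $(\nu,\mu)\mapsto\mathcal{R}(\mu\,\|\,S_\nu)$. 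Your proposal identifies precisely these ingredients: the McKean--Vlasov conditional product structure, the exponential tightness, the local freezing on compacts via uniform continuity of $\beta_\nu$, and the semicontinuity argument. The chain-rule decomposition
$\mathcal{R}(\mu^{(1)}\,\|\,P) + \int \mathcal{R}(\mu_{\mathbf{y}}\,\|\,\beta_{\mu^{(1)}})\,d\mu^{(1)}(\mathbf{y}) = \mathcal{R}(\mu\,\|\,S_{\mu^{(1)}})$
is a true identity and a clean way to verify the rate function, though the paper goes straight to the joint Sanov for $(P\otimes\beta_\nu)^{\otimes N}$ rather than passing through a two-stage conditional LDP --- the one-shot route is technically lighter, since a conditional LDP for the fields given $\hat\mu^N(\mathbf{y})$ would require additional uniformity estimates that the paper's version avoids by freezing first and then invoking Sanov once.

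One caveat on your stated alternative: you cannot get the unconditioned result by ``integrating the uniform conditional LDP of Theorem \ref{Theorem Uncoupled Large Deviation Principle Conditional Time 0 0} against the law of $(\mathbf{z}_0,\mathbf{g}_0)$ under $Q^N$.'' Theorem \ref{Theorem Uncoupled Large Deviation Principle Conditional Time 0 0} is only uniform over $(\mathbf{z}_0,\mathbf{g}_0)\in\mathcal{Y}^N$, and its rate function $\mathcal{I}$ is $+\infty$ off the slice of measures whose time-$0$ marginal is exactly $\kappa$. In Case 2 the time-$0$ marginal of $\mu$ is $\hat\kappa\otimes(\text{field law})$ and the field law at time $0$ is not pinned to any prescribed $\kappa$, so the conditioned rate function and the unconditioned one $\mathcal{R}(\mu\,\|\,S_{\mu^{(1)}})$ have different effective domains and do not match; there is no clean ``absorption'' of the time-$0$ entropy. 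The correct route --- which is your primary one and the paper's --- uses the unconditioned iid Sanov bound of Corollary \ref{Theorem Uncoupled Large Deviation Principle Conditional Time 0 Corollary}, not the $\kappa$-constrained bound of Theorem \ref{Theorem Uncoupled Large Deviation Principle Conditional Time 0 0}.
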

We now specify the operator $\tilde{\Phi}: \mathcal{U} \to \mathcal{U}$. Fix $\mu \in \mathcal{U}$ and defined $\tilde{\Phi}(\mu)$ to be the law of processes $\big(z^p_{[0,T]} , G^p_{[0,T]}\big)_{p\in I_M \fatsemi t\in [0,T]}$. One first defines $\big( G^{p}_t \big)_{p\in I_M \fatsemi t\in [0,T]}$ to be centered Gaussian system such that
\[
\mathbb{E}\big[ G^p_t G^q_s \big] =  \mathbb{E}^{\mu}\big[ \lambda(z^p_t) \lambda(z^q_s) \big].
\]
$(z^p_0)_{p\in I_M}$ is independent of $\big( G^{p}_t \big)_{p\in I_M \fatsemi t\in [0,T]}$ and distributed according to $\hat{\kappa}$. Letting $\big(W^p_{[0,T]}\big)_{p\in I_M}$ be Brownian Motions that are independent of $\mathbf{G}^{\mu}$ , we define $(z^p_t)_{p\in I_M \fatsemi t\in [0,T]}$ to be the strong solution to the stochastic differential equation
\begin{align} \label{eq: SDE conditionedo nG}
dz^p_t = \big( - \tau^{-1} z^p_t + G^{\mu,p}_t \big) dt + \sigma_t dW^p_t.
\end{align}

\begin{theorem}\label{Theorem Coupled Large Deviation Principle Unconditioned}
Assume the connectivity-independent initial conditions (Case 2). For any $\epsilon > 0$,
\begin{align}
\lsup{N} N^{-1} \log \mathbb{P}\big( d_W(\hat{\mu}^N(\mathbf{z},\mathbf{G}) , \xi) \geq \epsilon \big) < 0.
\end{align}
Thus with unit probability,
\begin{equation}
\lim_{N\to\infty}\hat{\mu}^N(\mathbf{z},\mathbf{G})  = \xi .
\end{equation}
Furthermore, 
\begin{align}
\xi = \lim_{n\to\infty} \xi^{(n)}, \label{eq: convergence xi n approximations}
\end{align}
where $\xi^{(n+1)} = \tilde{\Phi}( \xi^{(n)})$ and $\xi^{(1)}$ is any measure in $\mathcal{U}$.
\end{theorem}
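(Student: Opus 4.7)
The plan is to mirror the Case 1 argument (Theorem \ref{Theorem Coupled Large Deviation Principle}) but with the unconditioned uncoupled LDP of Corollary \ref{Corollary Uncoupled Large Deviation Principle Unconditioned} replacing the conditioned LDP of Theorem \ref{Theorem Uncoupled Large Deviation Principle Conditional Time 0 0}. First I would perform a Girsanov (exponential) change of measure from the uncoupled law $Q^N = \gamma^N \otimes P^N$ to the coupled law of $(\mathbf{z},\mathbf{G})$, where the drift correction replaces the free evolution $dy^{p,j}_t = \sigma_t dW^{p,j}_t$ by $dz^{p,j}_t = (-z^{p,j}_t/\tau + G^{p,j}_t)dt + \sigma_t dW^{p,j}_t$. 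Because $\sigma_t \ge \underline{\sigma} > 0$ and $\lambda$ is Lipschitz, the Radon--Nikodym density is a continuous bounded-exponential functional of the empirical measure $\hat{\mu}^N(\mathbf{y},\tilde{\mathbf{G}})$, so Varadhan's Lemma (combined with the exponential tightness of Lemma \ref{Lemma Exponential Tightness} adapted to the unconditioned setting, which is actually easier) transfers the LDP of Corollary \ref{Corollary Uncoupled Large Deviation Principle Unconditioned} into an LDP for $\hat{\mu}^N(\mathbf{z},\mathbf{G})$ under the coupled law, with rate function
\begin{equation}
\tilde{\mathcal{J}}(\mu) = \mathcal{R}\bigl(\mu \,\|\, \tilde{S}_{\mu^{(1)}}\bigr),
\end{equation}
where $\tilde{S}_{\nu}$ is the law of $(\mathbf{z},\mathbf{G}^{\nu})$ with $\mathbf{z}_0 \sim \hat{\kappa}$ independent of the centered Gaussian field $\mathbf{G}^{\nu}$, and $\mathbf{z}$ evolving by \eqref{eq: SDE conditionedo nG}, provided $\mu$'s time-$0$ $\mathbf{z}$-marginal equals $\hat{\kappa}$ (else $\tilde{\mathcal{J}}(\mu) = \infty$).

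Second, I would identify the zeros of $\tilde{\mathcal{J}}$. Since relative entropy vanishes iff the two arguments agree, $\tilde{\mathcal{J}}(\xi)=0$ forces $\xi = \tilde{S}_{\xi^{(1)}}$, which is precisely the statement $\tilde{\Phi}(\xi) = \xi$. Existence of at least one such $\xi$ follows from exponential tightness together with the fact that total mass under the LDP must be one (so the infimum of the rate function is zero on any full-measure closed set). This gives existence of a fixed point.

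Third, the main obstacle is uniqueness of the fixed point of $\tilde{\Phi}$ and the accompanying convergence $\xi^{(n+1)} = \tilde{\Phi}(\xi^{(n)}) \to \xi$. I would proceed exactly as in Lemma \ref{Lemma at most one Fixed Point} (invoked for Case 1): set up a coupling argument showing that $\tilde{\Phi}$ is a contraction with respect to a suitable time-localized Wasserstein-type distance on $\mathcal{U}$. The key estimate is that if $\mu,\nu \in \mathcal{U}$ agree on $[0,t]$ and we construct the corresponding Gaussian fields $\mathbf{G}^{\mu},\mathbf{G}^{\nu}$ on a common probability space using the same white-noise decomposition, then Lipschitz continuity of $\lambda$, boundedness of $\sigma_t$, and the moment bounds built into $\mathcal{U}_a$ yield
\begin{equation}
\mathbb{E}\bigl[\sup_{s\le t} |z^{\mu,p}_s - z^{\nu,p}_s|^2\bigr] \le C \int_0^t \mathbb{E}\bigl[\sup_{r\le s}|z^{\mu,p}_r - z^{\nu,p}_r|^2\bigr]\, ds
\end{equation}
(with $\sigma_t dW^p_t$ canceling exactly because the Brownian motions are coupled identically, and $\hat{\kappa}$ supplies identical initial conditions), whereupon Gronwall on small intervals, iterated up to $T$, gives contractivity. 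This argument is in fact cleaner than in Case 1 because the initial conditions are disorder-independent, so there is no conditioning on $\mathbf{G}_0$ and no $\mathfrak{V}_{\mu,0}^{-1}$ term to control.

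Finally, once uniqueness is established, the LDP upper bound gives $\lsup{N} N^{-1}\log \mathbb{P}(d_W(\hat{\mu}^N,\xi) \ge \epsilon) < 0$ because the closed set $\{\mu : d_W(\mu,\xi) \ge \epsilon\}$ does not contain the unique zero $\xi$ and the rate function is lower semi-continuous with compact level sets; Borel--Cantelli then yields the almost sure convergence $\hat{\mu}^N(\mathbf{z},\mathbf{G}) \to \xi$. The iterative approximation \eqref{eq: convergence xi n approximations} follows from the contraction property of $\tilde{\Phi}$ used to prove uniqueness, applied to an arbitrary starting point $\xi^{(1)} \in \mathcal{U}$.
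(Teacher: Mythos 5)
Your plan mirrors the paper's Case 2 argument: the paper handles the connectivity-independent case in a few lines by citing Corollary \ref{Corollary Uncoupled Large Deviation Principle Unconditioned}, observing that zeros of the rate function (after the Girsanov transfer) are exactly fixed points of $\tilde{\Phi}$, and invoking Lemma \ref{Lemma at most one Fixed Point} for uniqueness and for the iterated-approximation convergence. So the route is essentially the paper's. Two details in your write-up are inaccurate, though neither is fatal to the strategy. First, the Girsanov Radon--Nikodym density is \emph{not} a continuous bounded functional of $\hat\mu^N(\mathbf{y},\tilde{\mathbf{G}})$: the stochastic integral appearing in $\Gamma^N_{\mathbf{J},T}$ is famously not a continuous function of the driving path, which is exactly why the paper introduces the time-discretized approximants $\Gamma^{(m)}$ and proves the exponential approximation of $\Gamma$ by continuous functionals (Lemmas \ref{Lemma Approximate Gamma by Continuous Functions} and \ref{Lemma discrete m n approximation}); a straight application of Varadhan's Lemma as you describe would not go through without this step. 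Second, the Gronwall inequality you display is circular: the same quantity $\mathbb{E}[\sup_{r\le s}|z^{\mu,p}_r - z^{\nu,p}_r|^2]$ appears on both sides, and Gronwall would then force identity, not contraction. The correct estimate, as in Lemma \ref{Lemma at most one Fixed Point}, is $d^{(2)}_t(\tilde\Phi(\mu),\tilde\Phi(\nu)) \le c\,t\,d^{(2)}_t(\beta_\mu,\beta_\nu) \le c\,C_\lambda\,t\,d^{(2)}_t(\mu,\nu)$ --- the driving term is the difference of the input Gaussian fields $\mathbf{G}^\mu - \mathbf{G}^\nu$, which is controlled by the input measures $\mu,\nu$ (via the coupling $\beta_{\mu,\nu}$ of Lemma \ref{beta mapping continuous}), not by the output state difference $z^\mu - z^\nu$. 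Your remark that Case 2 is cleaner because there is no conditioning on $\mathbf{G}_0$ and no $\mathfrak{V}_{\mu,0}^{-1}$ to control is correct and consistent with the paper's comment that Case 1 is harder because it requires a \emph{uniform} LDP over the conditioning variables.
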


\section{Proofs}
We have divided the proofs into three main sections. In Section \ref{Section Regularity Compactness}, we prove general regularity properties of the stochastic processes. In Section \ref{Section LDP Uncoupled}, we prove the LDP for the uncoupled system. In Section \ref{Section Coupled System}, we determine the limiting dynamics of the coupled system.
\subsection{Regularity Estimates and Compactness} \label{Section Regularity Compactness}
We first prove Lemma \ref{Lemma beta nu well defined}.
\begin{proof}
We first check that the covariance function is positive definite (when restricted to a finite set of times). Let $\lbrace t_i \rbrace_{1\leq i \leq m} \subset [0,T]$ be a finite set of times. Then evidently for any constants $\lbrace \alpha^p_i \rbrace_{p\in I_M, 1\leq i \leq m}$, it must be that 
\begin{align}
\sum_{p,q\in I_M}\sum_{1\leq i,j \leq m}\alpha^p_i \alpha^q_j \mathbb{E}^{\nu}\big[ \lambda\big(x^p_{t_i}\big) \lambda\big(x^q_{t_j}\big) \big] =  \mathbb{E}^{\nu}\big[ \big( \sum_{p\in I_M}\sum_{1\leq i \leq m}\alpha^p_i \lambda\big(x^p_{t_i}\big) \big)^2 \big]  \geq 0.
\end{align}
This means that there exists a finite set of centered Gaussian variables $\lbrace G^{\nu,p}_{t^{(m)}_i} \rbrace_{p\in I_M \fatsemi 1\leq i \leq m}$ such that \eqref{eq: beta nu covariance definition} holds. It then follows from the Komolgorov Extension Theorem that $\beta_{\nu}$ is well-defined on any countably dense subset of times of $[0,T]$. It remains for us to demonstrate continuity, i.e. that there exists a Gaussian probability law such that \eqref{eq: beta nu covariance definition} holds for all time. We do this using standard theory for the continuity of Gaussian Processes (following Chapter 2 of \cite{Adler2007}).

First, we notice that
\begin{align}
\sup_{p\in I_M}\sup_{t\in [0,T]} \mathbb{E}\big[ (G^{\nu,p}_t)^2 \big] < \infty.
\end{align}
Now define the canonical metric,
\begin{align}
\bar{d}_p(s,t) = &  \mathbb{E}\big[ \big( G^{\nu,p}_s - G^{\nu,p}_t \big)^2\big]^{\frac{1}{2}} =  \mathbb{E}^{\nu}\big[ \big( \lambda(x^{p}_s) - \lambda(x^{p}_t) \big)^2 \big]^{\frac{1}{2}}   \\
\leq &\rm{Const} \sup_{p\in I_M} \mathbb{E}^{\nu}\big[ \big| x^p_s - x^p_t \big|^2 \big]^{\frac{1}{2}} \leq a  \; (t-s)^{\frac{1}{4}} 
\end{align}
thanks to properties of the set $\mathcal{Q}_a$, for all $s,t$ such that $|s-t|$ is smaller than some constant depending on $a$. It follows from Theorem 1.4.1 of \cite{Adler2007} that the Gaussian Process is almost-surely continuous. 

Write $B_t(\epsilon) = \big\lbrace s\in [0,T] : \bar{d}(s,t) \leq \epsilon \big\rbrace$ to be the $\epsilon$-ball about $t$, and let $\mathcal{N}(\epsilon)$ denote the smallest number of such balls that cover $T$. We see that there exists a constant $\mathfrak{c}_a > 0$ such that
\begin{align}
\mathcal{N}(\epsilon) \leq \mathfrak{c}_a\epsilon^{-4}.
\end{align}
Writing $H(\epsilon) = \log \mathcal{N}(\epsilon)$, it follows from Theorem 1.3.5 in \cite{Adler2007} that there exist $M$ Gaussian Processes $(G^{\nu,p}_t)_{t\in [0,T]}$ such that $t \to G^{\nu,p}_t$ is almost-surely continuous, and there exists a universal constant $\mathfrak{K} > 0$ and a random $\eta > 0$ such that for all $\delta < \eta$,
\begin{align}
\sup_{p\in I_M \fatsemi s,t \leq T \fatsemi \bar{d}(s,t) \leq \delta}\big| G^{\nu,p}_s - G^{\nu,p}_t \big| &\leq \mathfrak{K}\int_0^{\delta} H^{1/2}(\epsilon) d\epsilon \\
&\leq \mathfrak{K}\int_0^{\delta} \big(  4\log\big(\epsilon^{-1}\big) + \log \mathfrak{c}_a \big)^{\frac{1}{2}} d\epsilon,
\end{align} 
and we note that the above goes to 0 as $\delta \to 0^+$. This also implies \eqref{eq: uniform continuity U a G}.
\end{proof}
We next prove Corollary \ref{Lemma CompactNess Fields}.
\begin{proof}
The proof of \eqref{eq: uniform continuity U a G conditioned} is analogous to the proof of \eqref{eq: uniform continuity U a G}. Notice that $\mathfrak{m}^p_t(\mu,\mathbf{g})$ depends continuously on $\mathbf{g}$.

 \eqref{eq: uniform continuity U a G} and \eqref{eq: uniform continuity U a G conditioned} imply (respectively)  \eqref{eq: corolalry uniform 1} and \eqref{eq: corolalry uniform 2} .
 \end{proof}

We can now prove Lemma \ref{Lemma Exponential Tightness mu G}.
\begin{proof}
We prove \eqref{Lemma Exp Tightness G} only. The other proof is very similar.

It follows from Lemma \ref{Lemma beta nu well defined} that for any $\epsilon > 0$, there exists a compact set $\mathcal{C}_{\epsilon} \subset \mathcal{C}([0,T],\mathbb{R}^M)$ such that for any $\mu \in \mathcal{Q}_L$, and all $\mathbf{g}_0 \in \mathbb{R}^M$ such that $\norm{\mathbf{g}_0} \leq \epsilon^{-1}$,
\begin{align}
\beta_{\mu,\mathbf{g}_0}\big( G^{\mu}_{[0,T]} \notin \mathcal{C}_{\epsilon} \big) \leq \epsilon. \label{eq: beta nu compact estimate}
\end{align}
It has already been noted above that for any $\lbrace \mathbf{y}_{[0,T]}^j \rbrace_{j\in I_N} \subset   \mathcal{C}([0,T],\mathbb{R}^M)$, $\lbrace \tilde{G}^{j}_{[0,T]} \rbrace_{j\in I_N}$ are independent, and the probability law of $ \tilde{G}^{j}_{[0,T]}$ is $\beta_{\hat{\mu}^N(\mathbf{y}), \mathbf{g}_0 }$. Thus as long as $\hat{\mu}^N(\mathbf{y}) \in \mathcal{Q}_L$, the estimate in \eqref{eq: beta nu compact estimate} holds for any $\tilde{G}^{j}_{[0,T]}$ such that $\| \mathbf{g}^j_0 \| \leq \epsilon^{-1}$.

Define
\begin{align}
\theta_{N,\epsilon}(\mathbf{g}_0) = N^{-1}\sum_{j\in I_N} \chi\lbrace  \| \mathbf{g}^j_0 \| > \epsilon^{-1}  \rbrace .
\end{align}
Our construction of $\mathcal{Y}^N$ implies that  
\begin{align}
\lim_{\epsilon \to 0^+}\lim_{N\to\infty} \sup_{(\mathbf{z}_0,\mathbf{g}_0) \in \mathcal{Y}^N}\theta_{N,\epsilon}(\mathbf{g}_0)  = 0.
\end{align}
Write $I_{N,\epsilon} = \big\lbrace j \; :   \| \mathbf{g}^j_0 \| \leq \epsilon^{-1} \big\rbrace$. Next, for some $\delta > \epsilon/2$, and $b > 0$, by Chernoff's Inequality, for any $N \geq 1$ and any $(\mathbf{y}_0,\mathbf{g}_0) \in \mathcal{Y}^N$,
\begin{align*}
\tilde{\gamma}^N_{\mathbf{y},\mathbf{g}_0} \bigg( N^{-1}\sum_{j\in I_N} \lbrace G^{j}_{[0,T]} \notin \mathcal{C}_{\epsilon} \rbrace \geq \delta \bigg)   &\leq \tilde{\gamma}^N_{\mathbf{y},\mathbf{g}_0} \bigg( N^{-1}\sum_{j\in \tilde{I}_{N,\epsilon}} \lbrace G^{j}_{[0,T]} \notin \mathcal{C}_{\epsilon} \rbrace \geq \delta /2\bigg) \\ &\leq \mathbb{E}^{\tilde{\gamma}^N_{\mathbf{y},\mathbf{g}_0}}\bigg[ \exp\bigg(b\sum_{j\in \tilde{I}_{N,\epsilon}} \lbrace G^{j}_{[0,T]} \notin \mathcal{C}_{\epsilon} \rbrace - bN\delta / 2 \bigg) \bigg] \\
&\leq \big\lbrace \epsilon \exp(b) + 1-\epsilon \big\rbrace^N \exp\big(- N b \delta /2\big)
\end{align*}
We thus find that (by taking small enough $\epsilon$, and $1 \ll b \ll  -\log \epsilon$), for any integer $n$, that there must exist a compact set $\mathcal{C}^{(n)}$ such that for all $N \geq 1$, all $(\mathbf{y}_0,\mathbf{g}_0) \in \mathcal{Y}^N$ and all $\mathbf{y}$ such that $\hat{\mu}^N(\mathbf{y}) \in \mathcal{Q}_L$
\begin{align}
N^{-1}\log  \tilde{\gamma}^N_{\mathbf{y},\mathbf{g}_0} \big( N^{-1}\sum_{j\in I_N} \lbrace G^{j}_{[0,T]} \notin \mathcal{C}^{(n)} \rbrace \geq n^{-1} \big) \leq -n^2.
\end{align}
This motivates us to define the compact set $\tilde{\mathcal{C}}_L \subset \mathcal{P}\big( \mathcal{C}([0,T],\mathbb{R}^M)^2 \big)$ to consist of all measures $\mu$ such that for all $n \geq m$,
\begin{align}
\mu\big( \mathcal{C}^{(n)}\big) \geq 1 - n^{-1}.
\end{align}
Thus using a union-of-events bound,
\begin{align}
 \tilde{\gamma}^N_{\mathbf{y},\mathbf{g}_0}\big( \hat{\mu}^N(\mathbf{G}) \notin\tilde{\mathcal{C}}_L  \big)  &\leq \sum_{n\geq m} \tilde{\gamma}^N_{\mathbf{y},\mathbf{g}_0} \bigg( N^{-1}\sum_{j\in I_N} \lbrace G^{j}_{[0,T]} \notin \mathcal{K}^{(n)} \rbrace \geq n^{-1} \bigg) \\ &\leq \sum_{n\geq m} \exp\big( -Nn^2\big) \\
 &\leq \exp(-NL),
\end{align}
for all $N\geq 1$, as long as $m$ is large enough.

\end{proof}
The following bound on the operator norm of the connectivity matrix is well-known (and the proof is omitted).
\begin{lemma}\label{Lemma bound on operator norm}
For any $L > 0$, there exists $\ell$ such that
\begin{align}
\lsup{N} N^{-1}\log \mathbb{P}\big( \norm{\mathcal{J}_N} \geq \ell \big) \leq -L,
\end{align}
where $\mathcal{J}_N \in \mathbb{R}^{N \times N}$ has $(j,k)$ entry
\[
\mathcal{J}_{N,jk} = N^{-1/2}J^{jk}
\]
\end{lemma}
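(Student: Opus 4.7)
The plan is to combine a standard $\epsilon$-net argument on the unit sphere $S^{N-1} \subset \mathbb{R}^N$ with a Gaussian tail estimate. First I would recall the variational characterization
\[
\norm{\mathcal{J}_N} = \sup_{x,y \in S^{N-1}} \langle \mathcal{J}_N x, y \rangle
\]
and observe that for any fixed pair of unit vectors $(x,y)$, the bilinear form
\[
\langle \mathcal{J}_N x, y\rangle = N^{-1/2}\sum_{j,k \in I_N} J^{jk} x_k y_j
\]
is a centered Gaussian with variance $N^{-1}\|x\|^2 \|y\|^2 = N^{-1}$, so that
\[
\mathbb{P}\big( |\langle \mathcal{J}_N x, y\rangle | \geq t \big) \leq 2\exp\big( - N t^2 / 2 \big).
\]

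Second, for a fixed small $\epsilon \in (0, 1/2)$ I would let $\mathcal{N}_\epsilon \subset S^{N-1}$ be an $\epsilon$-net, which can be chosen with $|\mathcal{N}_\epsilon| \leq (3/\epsilon)^N$, and invoke the standard approximation inequality $\norm{\mathcal{J}_N} \leq (1 - 2\epsilon)^{-1} \sup_{x,y \in \mathcal{N}_\epsilon} \langle \mathcal{J}_N x, y\rangle$. A union bound over the at most $(3/\epsilon)^{2N}$ pairs in $\mathcal{N}_\epsilon \times \mathcal{N}_\epsilon$ then yields
\[
\mathbb{P}\big( \norm{\mathcal{J}_N} \geq \ell \big) \leq 2 (3/\epsilon)^{2N} \exp\big( -N(1-2\epsilon)^2 \ell^2 / 2 \big).
\]

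Third, taking $N^{-1}\log$ and passing to the limit superior,
\[
\lsup{N} N^{-1}\log \mathbb{P}\big( \norm{\mathcal{J}_N} \geq \ell \big) \leq 2 \log(3/\epsilon) - (1 - 2\epsilon)^2 \ell^2 / 2,
\]
which can be made at most $-L$ by first fixing (say) $\epsilon = 1/4$ and then taking $\ell$ sufficiently large, depending on $L$.

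The proof is essentially routine; the only nontrivial ingredients are the covering number bound $|\mathcal{N}_\epsilon| \leq (3/\epsilon)^N$ and the net-approximation inequality for the operator norm, both of which are standard in high-dimensional probability. An equivalent route is to apply Gaussian concentration to the $N^{-1/2}$-Lipschitz map $\mathbf{J} \mapsto \norm{\mathcal{J}_N}$ together with the classical bound $\mathbb{E}\norm{\mathcal{J}_N} \leq 2 + o(1)$ for the Ginibre ensemble; this has the virtue of yielding the sharp threshold $\ell \approx 2$, but imports those two facts rather than deriving them directly. Either way there is no serious conceptual obstacle, which is consistent with the paper's decision to omit the proof.
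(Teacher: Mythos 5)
Your proof is correct, and since the paper explicitly declines to prove this lemma (stating only that it is ``well-known'' and omitting the proof), there is no in-text argument to compare against. The $\epsilon$-net argument you give is the standard one: the variance computation for $\langle \mathcal{J}_N x, y\rangle$ is right, the covering bound $|\mathcal{N}_\epsilon| \leq (3/\epsilon)^N$ and the net-approximation inequality $\norm{\mathcal{J}_N} \leq (1-2\epsilon)^{-1}\sup_{x,y\in\mathcal{N}_\epsilon}\langle \mathcal{J}_N x, y\rangle$ are both standard, and the union bound followed by $N^{-1}\log$ and $\limsup$ gives exactly the claimed rate. Your observation that the alternative route via Gaussian concentration of the $N^{-1/2}$-Lipschitz map $\mathbf{J}\mapsto\norm{\mathcal{J}_N}$ together with $\mathbb{E}\norm{\mathcal{J}_N}\to 2$ recovers the sharp threshold $\ell\approx 2$ is also accurate, though unnecessary for the qualitative statement the paper needs.
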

\begin{lemma} \label{Lemma bound norm of z p j t}
For any $\ell > 0$, there exists $ L > 0$ such that for all $p\in I_M$ and all $N \geq 1$,
\begin{align}
N^{-1}\log \mathbb{P}\big( \mathcal{A}_c \; , \; \sup_{t\in [0,T]}  \sum_{j\in I_N} (z^{p,j}_t)^2 \geq N\ell \big) \leq - L
\end{align}
where
\[
 \mathcal{A}_c = \bigg\lbrace \norm{\mathcal{J}_N} \leq c \; , \;  \sup_{p\in I_M}\sum_{j\in I_N}(z^{p,j}_0)^2 \leq N \mathbb{E}^{\kappa}[(z^p_0)^2] + N   \bigg\rbrace .
 \]

\end{lemma}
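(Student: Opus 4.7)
The plan is to apply Ito's formula to the squared Euclidean norm $\norm{\mathbf{z}^p_t}^2 = \sum_{j\in I_N}(z^{p,j}_t)^2$, derive a Gronwall-type integral inequality whose coefficients are deterministic on $\mathcal{A}_c$, and then control the resulting stochastic integral by an exponential martingale inequality after stopping at the first time the normalized squared norm reaches $\ell$.

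First I would write the system in vector form as $d\mathbf{z}^p_t = (-\mathbf{z}^p_t/\tau + \mathcal{J}_N \lambda(\mathbf{z}^p_t))dt + \sigma_t d\mathbf{W}^p_t$ and apply Ito to obtain
\[
d\norm{\mathbf{z}^p_t}^2 = \bigl\{-2\norm{\mathbf{z}^p_t}^2/\tau + 2\langle\mathbf{z}^p_t, \mathcal{J}_N\lambda(\mathbf{z}^p_t)\rangle + N\sigma_t^2\bigr\}dt + 2\sigma_t\langle\mathbf{z}^p_t, d\mathbf{W}^p_t\rangle.
\]
On $\mathcal{A}_c$, Cauchy--Schwarz with $\norm{\mathcal{J}_N}\leq c$, the Lipschitz bound $\norm{\lambda(\mathbf{z})}\leq \sqrt{2}(C_\lambda\norm{\mathbf{z}} + \sqrt{N}|\lambda(0)|)$, and Young's inequality control the drift by $C_1\norm{\mathbf{z}^p_t}^2 + C_2 N$, for constants $C_1,C_2$ depending only on $c,C_\lambda,\tau,\bar\sigma,|\lambda(0)|$. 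Setting $Y_t = N^{-1}\norm{\mathbf{z}^p_t}^2$ and $M_t = (2/N)\int_0^t \sigma_s\langle\mathbf{z}^p_s, d\mathbf{W}^p_s\rangle$, this yields, on $\mathcal{A}_c$, the pathwise estimate $Y_t \leq Y_0 + \int_0^t(C_1 Y_s + C_2)ds + M_t$, with $d\langle M\rangle_t = (4\sigma_t^2/N)Y_t\, dt$. Introducing the stopping time $\tau = \inf\{t\in[0,T]: Y_t \geq \ell\}\wedge T$, path continuity gives $Y_t \leq \ell$ for all $t \leq \tau$, so on $\mathcal{A}_c$ the stopped quadratic variation $\langle M\rangle_{T\wedge\tau}\leq 4\bar\sigma^2 \ell T/N$ is bounded \emph{deterministically}, and $Y_0\leq K_0 := \mathbb{E}^{\kappa}[(z^p_0)^2] + 1$.

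Applying the integral Gronwall inequality to $\sup_{u\leq s}Y_u$ up to time $\tau$, on the event $\mathcal{A}_c \cap \{\sup_{t\leq T}Y_t\geq \ell\}$ we have $Y_\tau = \ell$ and
\[
\ell \leq (K_0 + C_2 T)e^{C_1 T} + e^{C_1 T}\sup_{s\leq\tau}|M_s|,
\]
so for $\ell$ larger than $(K_0 + C_2 T)e^{C_1 T}$, the quantity $a := \ell e^{-C_1 T} - K_0 - C_2 T$ is positive and $\sup_{s\leq\tau}|M_s|\geq a$. Finally, since $\langle M^\tau\rangle_T$ is deterministically bounded by $4\bar\sigma^2 \ell T/N$ on $\mathcal{A}_c$, applying the exponential supermartingale inequality to $\exp(\pm\lambda M^\tau_t - \lambda^2\langle M^\tau\rangle_t/2)$ and optimizing over $\lambda>0$ gives $\mathbb{P}(\mathcal{A}_c, \sup_{t\leq T}|M^\tau_t|\geq a)\leq 2\exp(-a^2 N/(8\bar\sigma^2 \ell T))$, which is exponentially small in $N$ and gives the lemma with $L = a^2/(16\bar\sigma^2\ell T)$ (say). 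The only delicate point is securing the \emph{pathwise} bound on $\langle M\rangle_{T\wedge\tau}$: this is what justifies the clean application of the exponential martingale inequality, and is ensured by the definition of $\tau$ together with left-continuity of $Y_t$. For $\ell$ smaller than the threshold $(K_0 + C_2 T)e^{C_1 T}$, the result holds trivially by choosing $L$ small, since $\mathcal{A}_c$ is $\mathcal{F}_0$-measurable and the claim can be absorbed into the bounded event.
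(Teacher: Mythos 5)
Your proof is correct and reaches the conclusion, but by a genuinely different route than the paper's. Both arguments begin with It\^o's formula for $Y_t = N^{-1}\sum_j (z^{p,j}_t)^2$ and a pathwise drift bound of the form $C_1 Y_t + C_2$ on $\mathcal{A}_c$; the divergence is in how the martingale term is estimated. The paper introduces a stopping time $\tau_A$ tied to the martingale itself exceeding a deterministic envelope $\exp(At)+A$, uses Gronwall to deduce that up to $\tau_A$ the quadratic variation is deterministically bounded, rewrites the stochastic integral as a time-changed Brownian motion via Dambis--Dubins--Schwarz, discretizes time, and finishes with a union bound over Gaussian tails while sending $A\to\infty$. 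You instead stop at the first time the \emph{tracked quantity} $Y_t$ reaches the level $\ell$; this likewise gives a deterministic cap $4\bar\sigma^2\ell T/N$ on the stopped quadratic variation, Gronwall then forces $\sup_{s\le\tau}|M_s|\ge a := \ell e^{-C_1 T}-K_0-C_2T$ on the bad event, and you invoke the exponential supermartingale inequality directly on $M^\tau$. Since $\mathcal{A}_c$ is $\mathcal{F}_0$-measurable and $\mathbf{W}$ is independent of $\mathcal{F}_0$, the conditional application of the exponential inequality is legitimate. Your route avoids the DDS time change and the discretization/union-bound step entirely, and is shorter and more transparent; what the paper's version buys is a bound that explicitly scales with the parameter $A$, making it slightly easier to push $L\to\infty$ by hand, but nothing essential is lost in your version since $a^2/\ell\to\infty$ as $\ell\to\infty$.

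Two small points to clean up. Where you write ``left-continuity of $Y_t$'' you mean (path) continuity, which is what guarantees $Y_\tau=\ell$ on $\{\tau<T\}$ and that the integrand in $\langle M\rangle$ stays $\le\ell$ up to and including $\tau$. And your final sentence is not right as stated: for $\ell$ below the threshold $(K_0+C_2T)e^{C_1T}$ the event need not be exponentially rare, so no positive $L$ exists and one cannot ``choose $L$ small.'' This is a defect of the lemma statement itself (the quantifiers should read ``for any $L>0$ there exists $\ell>0$,'' which is exactly how the lemma is consumed in Lemma~\ref{Lemma Boundedness of Gamma}); the paper's own proof likewise establishes the bound only for $\ell$ large. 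You should simply state that your argument proves that version, rather than pretend to cover small $\ell$.
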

\begin{proof}
 Write
 \[
 u_t = N^{-1}\sum_{j\in I_N} (z^{p,j}_t)^2.
 \]
 If the event $\mathcal{A}_c$ holds, then thanks to Ito's Lemma it must be that
 \begin{align}
 du_t =& \big\lbrace- 2\tau^{-1} u_t + 1 +N^{-1} \sum_{j\in I_N} z^{p,j}_t G^{p,j}_t \big\rbrace dt + N^{-1} \sum_{j\in I_N}z^{p,j}_t dW^{p,j}_t \\
 \leq & \big\lbrace- 2\tau^{-1} u_t + 1 +c C_{\lambda} u_t  \big\rbrace dt + N^{-1} \sum_{j\in I_N}z^{p,j}_t dW^{p,j}_t ,
 \end{align}
 since $N^{-1}\sum_{j\in I_N} \lambda(z^{p,j}_t)^2 \leq C_{\lambda}^2 u_t$.  Write
 \begin{align}
 v_t = \sup_{s\in [0,t]} N^{-1} \bigg| \sum_{j\in I_N} \int_0^t z^{p,j}_s dW^{p,j}_s \bigg|,
 \end{align}
 and define the stopping time, for a constant $A > 0$,
 \begin{align}
 \tau_A = \inf\big\lbrace t\geq 0 \; : \; v_t \geq  \exp(At) + A\big\rbrace .
 \end{align}
 Gronwall's Inequality implies that for all $t\leq \tau_A$,
 \begin{align*}
 u_t \leq \big( A + u_0 + t \big) \exp\big( \tilde{c} t \big)
 \end{align*}
 where $\tilde{c} = A + cC_{\lambda} - 2\tau^{-1}$. The quadratic variation of $x(t) := N^{-1} \sum_{j\in I_N} \int_0^t z^{p,j}_s dW^{p,j}_s$ is
 \begin{align}
 (QV)_t^N = N^{-2}\sum_{j\in I_N} \int_0^t (z^{p,j}_s)^2 ds .
 \end{align}
 For all $t\leq \tau_A$,
 \begin{align}
   (QV)_t^N &\leq N^{-1}\tilde{c}^{-1} \big( A + u_0 + t \big) \exp\big( \tilde{c} t \big) := N^{-1}h_t,
 \end{align}
  and notice that $h_t$ is independent of the Brownian Motions. Now define the stochastic process $w(t)$ to be such that
 \begin{align}
w(t) =& x\big( \alpha^N_t \big) \text{ where }\\
\alpha^N_t =& \inf\big\lbrace s \geq 0 \; : \;  (QV)_s^N = t \big\rbrace
\end{align} 
Thanks to the time-rescaled representation of a stochastic integral, $w(t)$ is a Brownian Motion \cite{Karatzas1991}. Writing $f(t) = \exp(At) + A$, it follows that
 \begin{align*}
 \mathbb{P}\bigg( \text{ There exists }&s\leq T \text{ such that }\big|x(s) \big| \geq f(s) \bigg) \\
\leq &\mathbb{P}\bigg( \text{ There exists }s \leq T \text{ such that }\big|w( N^{-1} h_s) \big| \geq f(s) \bigg)\\
\leq &  \mathbb{P}\bigg( \text{ There exists }s \leq T \text{ such that }\big|w( N^{-1} h_{s^{(m)}}) \big| \geq f(s_{(m)}) \bigg)
 \end{align*}
and we have written
\begin{align}
s^{(m)} &= \inf\big\lbrace t^{(m)}_a \; : t^{(m)}_a \geq s \big\rbrace \\
s_{(m)} &= \sup\big\lbrace t^{(m)}_a \; : t^{(m)}_a \leq s \big\rbrace .
\end{align} 
and we recall that $t^{(m)}_a = Ta/m$. Employing a union-of-events bound,
\begin{multline}
 \mathbb{P}\bigg( \text{ There exists }s \leq T \text{ such that }\big|w( h_{s^{(m)}}) \big| \geq f(s_{(m)}) \bigg) \leq \\
 \sum_{a=0}^{m-1} \bigg\lbrace \mathbb{P}\bigg( w\big(N^{-1} h_{t_{a+1}^{(m)}} \big)  \geq f\big(t_{a}^{(m)} \big)  \bigg) 
 + \mathbb{P}\bigg( w\big( N^{-1} h_{t_{a+1}^{(m)}} \big)  \leq - f\big(t_{a}^{(m)} \big)  \bigg) \bigg\rbrace
 \end{multline}
 Now since $w(t)$ is centered and Gaussian, with variance of $t$,
 \begin{align}
N^{-1}\log  \mathbb{P}\bigg( w\big( N^{-1} h_{t_{a+1}^{(m)}} \big)  \geq f\big(t_{a}^{(m)} \big)  \bigg) 
  =&  - \frac{N}{2} f\big(t_{a}^{(m)} \big)^2 \big(  h_{t_{a+1}^{(m)}} \big)^{-1} + O\big( \log N \big) \\
  N^{-1}\log  \mathbb{P}\bigg( w\big( N^{-1} h_{t_{a+1}^{(m)}} \big)  \leq - f\big(t_{a}^{(m)} \big)  \bigg) 
  =&  - \frac{N}{2} f\big(t_{a}^{(m)} \big)^2 \big(  h_{t_{a+1}^{(m)}} \big)^{-1} + O\big( \log N \big).
  \end{align}
 We fix $m=A$, and take $A$ to be arbitrarily large. Then
 \[
\lim_{A\to\infty} \inf_{0\leq a \leq m-1} f\big(t_{a}^{(m)} \big)^2 \big(  h_{t_{a+1}^{(m)}} \big)^{-1} = \infty.
 \]
 We thus find that, for large enough $A$,
 \begin{align}
\lsup{N}N^{-1} \log \mathbb{P}\bigg( \mathcal{A}_c ,\text{ There exists }s\leq T \text{ such that }\big|x(s) \big| \geq f(s) \bigg) \leq - L.
 \end{align}
 We have already demonstrated in the course of the proof that if the event $\mathcal{A}_c$ holds, and $\sup_{s\in [0,T]} |x(s)| \leq f(s)$, then there exists a constant such that $\sup_{t\in[0,T]}u_t \leq \rm{Const}$. We have thus established the Lemma. 
\end{proof}
The following $L^2$-Wasserstein distance provides a very useful way of controlling the dependence of the fields $(G^{\nu}_t)$ on the measure $\nu$. Define $d^{(2)}_t(\cdot,\cdot)$ to be such that for any $\mu,\nu \in \mathcal{U}$,
\begin{align} \label{eq: zeta Wasserstein L squares def}
d^{(2)}_t(\mu,\nu) = \inf_{\zeta} \mathbb{E}^{\zeta}\bigg[ \sum_{p\in I_M}\int_0^t \big\lbrace (y^p_s - \tilde{y}^p_s)^2 + (G^p_s - \tilde{G}^p_s)^2 \big\rbrace ds \bigg]^{1/2},
\end{align}
where the infimum is over all $\zeta \in \mathcal{P}\big( \mathcal{C}([0,T],\mathbb{R}^{2M}) \times  \mathcal{C}([0,T],\mathbb{R}^{2M}\big)$, such that the law of the first $2M$ processes is given by $\mu$, and the law of the last $2M$ processes is given by $\nu$. Let $d^{(2)}(\mu,\nu) := d^{(2)}_T(\mu,\nu) $. 
\begin{lemma}\label{Lemma Equivalent Metric}
For any $a > 0$, $d^{(2)}(\cdot,\cdot)$ metrizes weak convergence in $\mathcal{U}_a$. Furthermore,
\begin{align}
\lim_{\epsilon \to 0^+} \sup \big\lbrace d_W(\mu,\nu) \; : \mu,\nu \in \mathcal{U}_a \text{ and }d^{(2)}(\mu,\nu) \leq \epsilon \big\rbrace = 0.
\end{align}
\end{lemma}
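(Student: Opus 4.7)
The plan is to exploit the $d_W$-compactness of $\mathcal{U}_a$ (established in Lemma \ref{Lemma Exponential Tightness}) together with the standard principle that two continuous metrics generating the same topology on a compact space are uniformly equivalent. I would organize the argument in three steps.

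First I would show that $d_W$-convergence in $\mathcal{U}_a$ implies $d^{(2)}$-convergence. Given $\mu_n,\mu\in\mathcal{U}_a$ with $d_W(\mu_n,\mu)\to 0$, pick near-optimal couplings $\zeta_n$ so that $\mathbb{E}^{\zeta_n}[\sup_{t,p}(|y^p_t-\tilde y^p_t|+|G^p_t-\tilde G^p_t|)]\to 0$. The uniform second-moment bounds built into the definition of $\mathcal{U}_a$ (namely $\sup_{t,p}\mathbb{E}^\mu[(y^p_t)^2]\leq a$ together with $\sup_{t,p}\mathbb{E}^\mu[(G^p_t)^2]\leq C_\lambda^2 a$) yield uniform integrability of the relevant squared quantities under $\zeta_n$, and a standard Cauchy--Schwarz argument then gives $\mathbb{E}^{\zeta_n}\bigl[\int_0^T\sum_p\bigl((y^p_s-\tilde y^p_s)^2+(G^p_s-\tilde G^p_s)^2\bigr)ds\bigr]\to 0$, i.e. $d^{(2)}(\mu_n,\mu)\to 0$.

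For the converse, suppose $d^{(2)}(\mu_n,\mu)\to 0$ with $\mu_n,\mu\in\mathcal{U}_a$. By compactness of $\mathcal{U}_a$, every subsequence of $\mu_n$ has a further subsequence $\mu_{n_k}$ converging in $d_W$ to some $\tilde\mu\in\mathcal{U}_a$; the previous step applied along this subsequence gives $d^{(2)}(\mu_{n_k},\tilde\mu)\to 0$, hence $d^{(2)}(\mu,\tilde\mu)=0$. The regularity encoded in $\mathcal{U}_a$---via $\mathcal{Q}_a$ for the spin marginal and the H\"older-type bound $\mathbb{E}^\mu[(G^p_t-G^p_s)^2]\leq aC_\lambda^2|t-s|^{1/2}$ for the field marginal---ensures, via Kolmogorov's continuity criterion, that any measure in $\mathcal{U}_a$ is supported on a.s. continuous paths. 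Thus any coupling realizing $d^{(2)}(\mu,\tilde\mu)=0$ produces two pairs of continuous processes agreeing in $L^2([0,T])$ a.s., which forces pointwise agreement, so $\mu=\tilde\mu$. The subsequence principle then yields $d_W(\mu_n,\mu)\to 0$, proving the first assertion of the lemma. The uniform quantitative claim follows by a standard contradiction argument: if there were sequences $\mu_n,\nu_n\in\mathcal{U}_a$ with $d^{(2)}(\mu_n,\nu_n)\to 0$ but $d_W(\mu_n,\nu_n)\geq\epsilon_0>0$, compactness of $\mathcal{U}_a$ would allow extraction of subsequences $\mu_n\to\mu$ and $\nu_n\to\nu$ in $d_W$; the first step then gives $d^{(2)}(\mu,\nu)=0$, and the identification argument just made forces $\mu=\nu$, contradicting $d_W(\mu_n,\nu_n)\geq\epsilon_0$ in the limit.

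The subtlest point in this plan is the upgrade from $L^2$-equality of processes to pointwise equality of path laws, which crucially leverages the almost-sure path continuity implicit in membership in $\mathcal{U}_a$; everything else is a routine compactness-plus-uniform-integrability exercise, made possible by the moment bounds baked into $\mathcal{U}_a$.
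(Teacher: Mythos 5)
Your strategy—exploit the $d_W$-compactness of $\mathcal{U}_a$ together with the elementary fact that two metrics inducing the same topology on a compact Hausdorff space are uniformly equivalent—is a natural alternative to the paper's argument, and the subtle passage from $d^{(2)}(\mu,\tilde\mu)=0$ to $\mu=\tilde\mu$ (existence of an optimal coupling plus a.s.\ path continuity upgrading $L^2$-equality to pointwise equality) is correct and worth keeping. However, there is a genuine gap in the step on which everything else in your plan rests.

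Your first step asserts that the uniform second-moment bounds in the definition of $\mathcal{U}_a$ ``yield uniform integrability of the relevant squared quantities.'' This does not follow and is in general false: a uniform bound $\mathbb{E}^\mu\big[\sup_t(y^p_t)^2\big]\leq a$ gives only $L^1$-boundedness of the squares, not their uniform integrability. Consider $\mu_n = (1-1/n)\delta_0 + (1/n)\delta_{c_n}$, where $c_n$ is the constant path identically equal to $\sqrt{n}$: then $\mathbb{E}^{\mu_n}\big[\sup_t(y_t)^2\big]=1$ for all $n$, $\mu_n\to\delta_0$ weakly and even in $d_W$, yet the $L^2$-Wasserstein distance from $\mu_n$ to $\delta_0$ remains equal to $1$. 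So weak (or $d_W$-) convergence within a family with uniformly bounded second moments does \emph{not} imply $d^{(2)}$-convergence, and a Cauchy--Schwarz manipulation cannot repair this without a genuinely stronger hypothesis. Since your second step (extracting $d_W$-limits along subsequences and feeding them back through the first step) and your third step (the contradiction argument) both invoke the first step, the proof as written is broken.

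There are two ways to fix it, and it is instructive that both differ from what you wrote. In your subsequence arguments you never actually need ``$d_W\to0\Rightarrow d^{(2)}\to0$''; the joint lower semi-continuity of the $L^2$-Wasserstein cost under weak convergence of the marginals gives $d^{(2)}(\mu,\nu)\leq\liminf_n d^{(2)}(\mu_n,\nu_n)=0$ directly, and then your identification step finishes the contradiction — but this route only yields the quantitative implication $d^{(2)}\to0\Rightarrow d_W\to0$, not its converse. The paper instead proves the quantitative implication head-on: Prokhorov applied to the compact $\mathcal{U}_a$ produces a single compact $\mathcal{D}_\epsilon\subset\mathcal{C}([0,T],\mathbb{R}^M)^2$ with $\mu(\mathcal{D}_\epsilon)\geq1-\tilde\epsilon$ for every $\mu\in\mathcal{U}_a$; on $\mathcal{D}_\epsilon$ the Arzel\`a--Ascoli theorem makes the $L^2([0,T])$ and sup metrics uniformly equivalent; off $\mathcal{D}_\epsilon$ the coupling expectation is split at a sup-norm threshold $b$ (the event $\mathcal{B}_b$) and controlled by the second-moment bound together with Chebyshev. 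This decomposition uses the moment bound only to estimate a small-probability tail — precisely where second moments suffice — and never invokes a uniform-integrability upgrade of the square.
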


\begin{proof}
Since $\mathcal{U}_a$ is compact, Prokhorov's Theorem implies that for any $\tilde{\epsilon} > 0$, there exists a compact set $\mathcal{D}_{\epsilon} \subset \mathcal{C}([0,T],\mathbb{R}^M)^2$ such that for all $\mu \in \mathcal{U}_a$,
\begin{align}
\mu\big( \mathcal{D}_{\epsilon} \big) \geq 1 - \tilde{\epsilon}. \label{eq: mu D epsilon bound}
\end{align}
Since $\mathcal{D}_{\epsilon}$ is compact, it follows from the Arzela-Ascoli Theorem that for any $\delta > 0$, there exists $\upsilon(\epsilon,\delta)$ such that for all $f,g \in \mathcal{D}_{\epsilon}$ such that for all $p\in I_{2M}$,
\begin{align}
\int_0^T (f^p(t) - g^p(t))^2 dt \leq  \upsilon(\epsilon,\delta),
\end{align}
it necessarily holds that
\begin{align}
\sup_{ p \in I_M }\sup_{t\in [0,T]} \big| f^p(t) - g^p(t) \big| \leq \delta . \label{eq: uniform convergence compact portmanteau}
\end{align}
Let $\zeta$ be any measure that is within $\eta \ll 1$ of realizing the infimum in \eqref{eq: zeta Wasserstein L squares def}. Then, writing
\[
\mathcal{A}_{\epsilon} =  \chi\big\lbrace \text{ For each }p\in I_M, \; \; y^p , \tilde{y}^p , g^p, \tilde{g}^p \in  \mathcal{D}_{\epsilon} \big\rbrace ,
\]
we have the bound
\begin{multline}
\mathbb{E}^{\zeta}\bigg[ \sup_{ p \in I_M }\sup_{t\in [0,T]} \big| y^p(t) - \tilde{y}^p(t) \big|   + \sup_{ p \in I_M }\sup_{t\in [0,T]} \big| g^p(t) - \tilde{g}^p(t) \big| \bigg] \\
\leq \mathbb{E}^{\zeta}\bigg[ \bigg(\sup_{ p \in I_M }\sup_{t\in [0,T]} \big| y^p(t) - \tilde{y}^p(t) \big|   + \sup_{ p \in I_M }\sup_{t\in [0,T]} \big| g^p(t) - \tilde{g}^p(t) \big| \bigg) \mathcal{A}_{\epsilon}  \bigg] +\\
\mathbb{E}^{\zeta}\bigg[ \bigg(\sup_{ p \in I_M }\sup_{t\in [0,T]} \big| y^p(t) - \tilde{y}^p(t) \big|   + \sup_{ p \in I_M }\sup_{t\in [0,T]} \big| g^p(t) - \tilde{g}^p(t) \big| \bigg) \big( 1- \mathcal{A}_{\epsilon} \big)  \bigg] \label{eq: split Portmanteau}
\end{multline}
Now we take $d^{(2)}(\mu,\nu) \to 0^+$, and $\eta \to 0^+$ too. Since $\mathcal{A}_{\epsilon}$ is closed, thanks to the Portmanteau Theorem, we thus find that for any $\epsilon > 0$,
\begin{align}
 \mathbb{E}^{\zeta}\bigg[  \mathcal{A}_{\epsilon} \sum_{p\in I_M}\int_0^T \big\lbrace (y^p_s - \tilde{y}^p_s)^2 + (G^p_s - \tilde{G}^p_s)^2 \big\rbrace ds \bigg] \to 0.
\end{align}
which in turn implies that (making use of the uniform convergence over $\mathcal{D}_{\epsilon}$ in \eqref{eq: uniform convergence compact portmanteau})
\begin{equation}
 \mathbb{E}^{\zeta}\bigg[ \bigg(\sup_{ p \in I_M }\sup_{t\in [0,T]} \big| y^p(t) - \tilde{y}^p(t) \big|   + \sup_{ p \in I_M }\sup_{t\in [0,T]} \big| g^p(t) - \tilde{g}^p(t) \big| \bigg) \mathcal{A}_{\epsilon}  \bigg] \to 0.
\end{equation}
For the other term on the RHS of \eqref{eq: split Portmanteau}, for $b > 0$, write
\begin{align*}
\mathcal{B}_b = \chi \bigg\lbrace  \text{ For each }p\in I_M, \; \sup_{t\in [0,T]}\big| y^p_t \big| \leq b \; , \; \sup_{t\in [0,T]} \big| \tilde{y}^p_t \big| \leq b \; , \; \sup_{t\in [0,T]}\big| g^p_t \big| \leq b \; , \; \sup_{t\in [0,T]} \big| \tilde{g}^p_t \big| \leq b \bigg\rbrace
\end{align*}
Then,
\begin{multline}
\mathbb{E}^{\zeta}\bigg[ \bigg(\sup_{ p \in I_M }\sup_{t\in [0,T]} \big| y^p(t) - \tilde{y}^p(t) \big|   + \sup_{ p \in I_M }\sup_{t\in [0,T]} \big| g^p(t) - \tilde{g}^p(t) \big| \bigg) \big( 1- \mathcal{A}_{\epsilon} \big)  \bigg] \\
\leq \mathbb{E}^{\zeta}\bigg[ \bigg(\sup_{ p \in I_M }\sup_{t\in [0,T]} \big| y^p(t) - \tilde{y}^p(t) \big|   + \sup_{ p \in I_M }\sup_{t\in [0,T]} \big| g^p(t) - \tilde{g}^p(t) \big| \bigg) \big( 1- \mathcal{A}_{\epsilon} \big) \mathcal{B}_b  \bigg] \\
+ \mathbb{E}^{\zeta}\bigg[ \bigg(\sup_{ p \in I_M }\sup_{t\in [0,T]} \big| y^p(t) - \tilde{y}^p(t) \big|   + \sup_{ p \in I_M }\sup_{t\in [0,T]} \big| g^p(t) - \tilde{g}^p(t) \big| \bigg) \big( 1- \mathcal{A}_{\epsilon} \big)\big(1-  \mathcal{B}_b\big)  \bigg] . \label{eq: big d (2) bound}
\end{multline}
Thanks to the fact that, for all $\mu \in \mathcal{U}_a$,
\[
\sup_{p\in I_M} \mathbb{E}^{\mu}\big[ \sup_{t\in [0,T]}(y^p_t)^2 \big] \leq a ,
\]
one finds that the second term on the RHS of \eqref{eq: big d (2) bound} goes to $0$ as $b \to \infty$, uniformly for all $\epsilon > 0$ and all $\mu,\nu \in \mathcal{U}_a$ . For any fixed $b \gg 1$, the first term on the RHS of \eqref{eq: big d (2) bound} must go to zero as $\epsilon \to 0^+$, thanks to \eqref{eq: mu D epsilon bound}. We have thus proved the Lemma.
\end{proof}
For $\mu,\in \mathfrak{Q}$, we define $d^{(2)}_t(\mu,\nu)$ analogously to \eqref{eq: zeta Wasserstein L squares def}.
\begin{lemma} \label{beta mapping continuous}
There exists a constant $\mathfrak{C} > 0$ such that for all $\mu,\nu \in \mathfrak{Q}$ and all $t\in [0,T]$,
\begin{align}
d^{(2)}_t(\beta_\nu , \beta_{\mu}) &\leq \mathfrak{C} d^{(2)}_t(\nu,\mu). \label{eq: d (2) beta nu beta mu}
\end{align}
Also for all $\mu,\nu \in \mathfrak{Q}$ such that for some $b > 0$, $\det( \mathfrak{V}_{\mu,0} ) , \det( \mathfrak{V}_{\nu,0} ) \geq b > 0$, there exists a constant $\mathfrak{C}_{b}$ such that
\begin{align}\label{eq: uniform d (2) conditoined}
d^{(2)}_t\big(\beta_{\nu,\mathbf{g}} , \beta_{\mu, \mathbf{g}}\big) &\leq \tilde{\mathfrak{C}}_{b} (1+ \norm{\mathbf{g}}) d^{(2)}_t(\nu,\mu),
\end{align}
and $\norm{\cdot}$ is the Euclidean norm on $\mathbb{R}^M$.
\end{lemma}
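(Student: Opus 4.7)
The plan is to build an explicit joint realization of $\beta_\nu$ and $\beta_\mu$ driven by a single Gaussian white noise over a near-optimal $L^2$-Wasserstein coupling of $\nu$ and $\mu$, and then obtain the conditioned versions by an affine adjustment. For \eqref{eq: d (2) beta nu beta mu}, fix $\eta > 0$ and let $\zeta \in \mathcal{P}(\mathcal{C}([0,T],\mathbb{R}^M)^2)$ be a coupling of $\nu$ and $\mu$ within $\eta$ of the infimum defining $d^{(2)}_t(\nu,\mu)$, with coordinate paths denoted $(x,\tilde{x})$. Let $W$ be a Gaussian white noise on the probability space $(\mathcal{C}([0,T],\mathbb{R}^M)^2, \zeta)$ and set
\begin{align*}
G^{\nu, p}_s := \int \lambda(x^p_s(\omega)) \, dW(\omega), \qquad G^{\mu, p}_s := \int \lambda(\tilde{x}^p_s(\omega)) \, dW(\omega).
\end{align*}
Since the marginals of $\zeta$ are $\nu$ and $\mu$, a direct covariance computation confirms that $G^\nu \sim \beta_\nu$ and $G^\mu \sim \beta_\mu$, while the white-noise isometry together with the Lipschitz property of $\lambda$ yields
\begin{align*}
\mathbb{E}[(G^{\nu, p}_s - G^{\mu, p}_s)^2] = \int (\lambda(x^p_s) - \lambda(\tilde{x}^p_s))^2 \, d\zeta \leq C_\lambda^2 \, \mathbb{E}^\zeta[(x^p_s - \tilde{x}^p_s)^2].
\end{align*}
Summing over $p\in I_M$, integrating in $s\in[0,t]$, and letting $\eta\downarrow 0$ yields \eqref{eq: d (2) beta nu beta mu} with $\mathfrak{C} = C_\lambda$.

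For \eqref{eq: uniform d (2) conditoined}, I exploit the Gaussian decomposition $G^{\nu, p}_s = \mathfrak{m}^p_s(\nu, G^\nu_0) + B^{\nu, p}_s$, where $B^\nu$ is a centered Gaussian process independent of $G^\nu_0$ with covariance $\mathfrak{W}^\nu$, and analogously for $\mu$. The conditional law $\beta_{\nu, \mathbf{g}}$ coincides with the unconditional law of $\mathfrak{m}^p_s(\nu, \mathbf{g}) + B^{\nu, p}_s$. Reusing the white-noise coupling from the previous paragraph, I define
\begin{align*}
H^{\nu, p}_s := G^{\nu, p}_s - \mathfrak{m}^p_s(\nu, G^\nu_0) + \mathfrak{m}^p_s(\nu, \mathbf{g}), \qquad H^{\mu, p}_s := G^{\mu, p}_s - \mathfrak{m}^p_s(\mu, G^\mu_0) + \mathfrak{m}^p_s(\mu, \mathbf{g}),
\end{align*}
so that $H^\nu$ has the target law $\beta_{\nu, \mathbf{g}}$ and $H^\mu$ has law $\beta_{\mu, \mathbf{g}}$. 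Writing $K^{\nu, pa}_s := \mathbb{E}^\nu[\lambda(z^p_s) \lambda(z^a_0)]$ and $M^\nu_s := K^\nu_s (\mathfrak{V}_{\nu, 0})^{-1}$, so that $\mathfrak{m}_s(\nu, \mathbf{g}) = M^\nu_s \mathbf{g}$, one obtains
\begin{align*}
H^{\nu, p}_s - H^{\mu, p}_s = (G^{\nu, p}_s - G^{\mu, p}_s) + ((M^\nu_s - M^\mu_s)(\mathbf{g} - G^\nu_0))^p + (M^\mu_s(G^\mu_0 - G^\nu_0))^p.
\end{align*}

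The first summand is controlled directly by \eqref{eq: d (2) beta nu beta mu}. For the second summand, Cauchy--Schwarz applied to the splitting $\lambda(x^p_s)\lambda(x^a_0) - \lambda(\tilde{x}^p_s)\lambda(\tilde{x}^a_0) = (\lambda(x^p_s) - \lambda(\tilde{x}^p_s))\lambda(x^a_0) + \lambda(\tilde{x}^p_s)(\lambda(x^a_0) - \lambda(\tilde{x}^a_0))$, combined with the uniform second-moment bounds available on $\mathfrak{Q}_a$, gives a bound on $\|K^\nu_s - K^\mu_s\|$ in terms of the coupling, while the identity $A^{-1} - B^{-1} = A^{-1}(B-A)B^{-1}$ together with the determinant lower bound $\det \mathfrak{V}_{\nu, 0}, \det \mathfrak{V}_{\mu, 0} \geq b$ controls $\|(\mathfrak{V}_{\nu, 0})^{-1} - (\mathfrak{V}_{\mu, 0})^{-1}\|$; combining these produces the $\|\mathbf{g}\|$-proportional contribution (using also that $\mathbb{E}^\zeta[\|G^\nu_0\|^2]$ is uniformly bounded on $\mathfrak{Q}_a$). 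For the third summand, the uniform estimate $\|M^\mu_s\| \leq C_b$ together with \eqref{eq: d (2) beta nu beta mu} produces the constant contribution of $(1 + \|\mathbf{g}\|)$. The path regularity built into $\mathfrak{Q}_a$ allows conversion of pointwise time-$0$ estimates into integrated-in-time bounds where needed. The main obstacle is the perturbation of the matrix inverse $(\mathfrak{V}_{\nu, 0})^{-1}$, which fails to be globally Lipschitz in $\mathfrak{V}_{\nu, 0}$; the determinant lower bound $b$ is exactly what prevents blow-up of $\tilde{\mathfrak{C}}_b$.
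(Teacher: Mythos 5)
Your proof is correct and follows essentially the same route as the paper: both construct a joint Gaussian realization of $(\beta_\nu, \beta_\mu)$ whose cross-covariance is $\mathbb{E}^{\zeta}[\lambda(x^p_s)\lambda(\tilde x^q_t)]$ for a near-optimal $L^2$-coupling $\zeta$ of $\nu$ and $\mu$, and then pass through the Lipschitz bound $|\lambda(x)-\lambda(y)|\le C_\lambda|x-y|$. Your white-noise isometry is simply a more explicit version of the paper's appeal to a $2M$-dimensional analogue of Lemma~\ref{Lemma beta nu well defined}, and your treatment of the conditioned case — splitting $H^\nu - H^\mu$ into the unconditioned difference, a term driven by $M^\nu_s - M^\mu_s$, and a term driven by $G^\mu_0 - G^\nu_0$, then using the determinant lower bound via the resolvent identity to control $(\mathfrak{V}_{\cdot,0})^{-1}$ — matches the paper's sketch, which stops at asserting the Lipschitz-in-$\nu$ bounds on $\mathfrak{m}$ and $\mathfrak{W}$.
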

\begin{proof}
We start by proving \eqref{eq: d (2) beta nu beta mu}. Recalling the definition of the distance $d^{(2)}_t$ in \eqref{eq: zeta Wasserstein L squares def}, let $\zeta_{\epsilon}$ be such that
\begin{align}
d^{(2)}_t(\mu,\nu)^2 \geq \mathbb{E}^{\zeta_{\epsilon}}\bigg[ \sum_{p\in I_M} \int_0^t(z^p_s - y^p_s)^2 ds \bigg] + \epsilon .
\end{align}
Furthermore define centered Gaussian processes $G^{(\epsilon),\mu,p}_{s} , G^{(\epsilon),\nu,p}_s$ to be such that for any $p,q\in I_M$ and any $s,t\in [0,T]$,
\begin{align}
\mathbb{E}\big[G^{(\epsilon),\mu,p}_s G^{(\epsilon),\nu,q}_t \big] &= \mathbb{E}^{\zeta_{\epsilon}}\big[ \lambda(z^p_s) \lambda(y^q_t) \big]  \\
\mathbb{E}\big[G^{(\epsilon),\mu,p}_s G^{(\epsilon),\mu,q}_t \big] &= \mathbb{E}^{\zeta_{\epsilon}}\big[ \lambda(z^p_s) \lambda(z^q_t) \big]  \\
\mathbb{E}\big[G^{(\epsilon),\nu,p}_s G^{(\epsilon),\nu,q}_t \big] &= \mathbb{E}^{\zeta_{\epsilon}}\big[ \lambda(y^p_s) \lambda(y^q_t) \big] . 
\end{align}
This definition is possible thanks to a trivial modification of Lemma \ref{Lemma beta nu well defined} (switching $M \to 2M$). We thus find that 
\begin{align}
\lim_{\epsilon \to 0^+}\mathbb{E}\bigg[ \sum_{p\in I_M} \int_0^r (G^{(\epsilon),\mu,p}_t - G^{(\epsilon),\nu,p}_t)^2  dt \bigg] \leq &\lim_{\epsilon \to 0^+} \mathbb{E}^{\zeta_{\epsilon}}\bigg[ \sum_{p\in I_M} \int_0^r\big\lbrace  \lambda(z^p_t) - \lambda(y^p_t) \big\rbrace^2 dt \bigg] \label{eq: Wasserstein G beta nu} \\
\leq & \lim_{\epsilon \to 0^+} C_{\lambda}^2 \mathbb{E}^{\zeta_{\epsilon}}\bigg[ \sum_{p\in I_M} \int_0^r\big\lbrace z^p_t - y^p_t \big\rbrace^2 dt \bigg] \\
=& C_{\lambda}^2 d_r^{(2)}(\mu,\nu).
\end{align}
Now as $\epsilon \to 0^+$,  the LHS of \eqref{eq: Wasserstein G beta nu} must decrease to $d^{(2)}(\beta_\mu, \beta_{\nu})$. \eqref{eq: uniform d (2) conditoined} follows analogously.

The proof of \eqref{eq: uniform d (2) conditoined} is analogous, since the mean and variance functions in \eqref{eq: W definition beta} and \eqref{eq: conditional gaussian 1 beta} are such that for all $\mu,\nu \in \mathcal{U}$, there is a constant $C_b > 0$ such that
\begin{align*}
\sup_{p,q\in I_M \fatsemi s.t \in [0,T]}\big| \mathfrak{W}^{\mu,pq}_{st} - \mathfrak{W}^{\nu,pq}_{st} \big| &\leq C_b\lim_{\epsilon \to 0^+} \sup_{r\in I_M \fatsemi s \in [0,T]} \mathbb{E}^{\zeta_{\epsilon}}\big[ \big( \lambda(z^r_s) - \lambda(y^r_s) \big)^2 \big] \\
\sup_{p,q\in I_M \fatsemi s.t \in [0,T]}\big| \mathfrak{m}^{p}_{s}(\mu,\mathbf{g}) - \mathfrak{m}^{p}_{s}(\nu,\mathbf{g}) \big| &\leq C_b\big(1+ \| \mathbf{g} \| \big) \lim_{\epsilon \to 0^+} \sup_{r\in I_M \fatsemi s \in [0,T]} \mathbb{E}^{\zeta_{\epsilon}}\big[ \big( \lambda(z^r_s) - \lambda(y^r_s) \big)^2 \big]
\end{align*}
We have also employed the fact that $\det( \mathfrak{V}_{\mu,0} )$ is uniformly lower-bounded by a positive constant $b$ (as noted in the statement of the Lemma). 
\end{proof}

\subsection{Large Deviations of the Uncoupled System} \label{Section LDP Uncoupled}

Our first aim is to prove a Large Deviation Principle in the case of fields with a frozen interaction structure (in Lemma \ref{Theorem Uncoupled Large Deviation Principle Conditional Time 0} below). This would ordinarily be a trivial application of Sanov's Theorem. However the proof is slightly complicated by the need for the LDP to be uniform with respect to the variables $(\mathbf{z}_0,\mathbf{g}_0) \in \mathbb{R}^{MN} \times \mathbb{R}^{MN}$ that the probability laws are conditioned on.

For any $\mathbf{g}_0 \in \mathbb{R}^{MN}$ and $\nu \in \mathcal{U}$, define $\tilde{\gamma}^{N}_{\nu , \mathbf{g}_0} := \otimes_{j=1}^N \beta_{\nu, \mathbf{g}_0^j} \in  \mathcal{P}\big( \mathcal{C}([0,T],\mathbb{R}^M)^N \big)$. In other words, $\tilde{\gamma}^{N}_{\nu , \mathbf{g}_0}$ is the law of $N$ independent $\mathcal{C}([0,T],\mathbb{R}^M)$-valued Gaussian variables $\lbrace \tilde{G}^{\nu,j} \rbrace_{j\in I_N}$. The mean and variance of these variables is specified in \eqref{eq: mps average beta} and \eqref{eq: Vps average beta}.  

Let $Q^N_{\nu,\mathbf{z}_0,\mathbf{g}_0} \in \mathcal{P}\big(  \mathcal{C}([0,T],\mathbb{R}^M)^N \times \mathcal{C}([0,T],\mathbb{R}^M)^N \big)$ be the joint law of the uncoupled system, i.e.
\begin{equation}
Q^N_{\nu,\mathbf{z}_0,\mathbf{g}_0} =  \gamma^N_{\nu,\mathbf{z}_0, \mathbf{g}_0} \otimes P^N_{\mathbf{z}_0} .
\end{equation}
For $\nu \in \mathcal{U}$, define $\tilde{I}_{\nu}: \mathcal{P}\big( \mathcal{C}([0,T],\mathbb{R}^M)^2 \big) \to \mathbb{R}$ as follows. We specify that $\tilde{I}_{\nu}(\mu) = \infty$ if either the marginal of $\mu$ at time $0$ is not equal to $\kappa$, and / or $\mu \notin \mathcal{U}$. Otherwise, for any $\zeta \in \mathcal{U}$, writing $\zeta_{\mathbf{z}_0,\mathbf{g}_0}$ to be the law of $\zeta$, conditioned on the values of its variables at time $0$, define
\begin{align}\label{eq: tilde I rate function}
\tilde{I}_{\nu}(\zeta) =  \mathbb{E}^{\kappa}\big[ \mathcal{R}\big(\zeta_{\mathbf{z}_0,\mathbf{g}_0} || P_{\mathbf{z}_0} \otimes \tilde{\gamma}_{\nu,\mathbf{g}_0} \big) \big].
\end{align}
Define the empirical measure $\tilde{\mu}^N \in  \mathcal{P}\big( \mathcal{C}([0,T],\mathbb{R})^M \times  \mathcal{C}([0,T],\mathbb{R})^M \big)$ to be
\begin{align}
\tilde{\mu}^N = N^{-1}\sum_{j\in I_N} \delta_{\mathbf{y}^j_{[0,T]} , \tilde{G}^{\nu,j}_{[0,T]}},
\end{align}
where we recall that
\begin{align}
y^{p}_t = z^{p}_0 +  \int_0^t \sigma_s dW_s^{p}. \label{eq: Y SDE}
\end{align}

\begin{lemma}\label{Theorem Uncoupled Large Deviation Principle Conditional Time 0}
Fix some $\nu \in   \mathcal{U}$. Let $\mathcal{A},\mathcal{O}\subseteq \mathcal{P}\big( \mathcal{C}([0,T],\mathbb{R}^M)^2\big) \big)$, such that $\mathcal{O}$ is open and $\mathcal{A}$ closed. Then
\begin{align}
 \lsup{N}\sup_{(\mathbf{z}_0,\mathbf{g}_0) \in \mathcal{Y}^N} N^{-1}\log Q^N_{\nu,\mathbf{z}_0,\mathbf{g}_0}\big( \tilde{\mu}^N(\mathbf{y}_{[0,T]}, \tilde{\mathbf{G}}^{\nu}_{[0,T]}) \in \mathcal{A}\big) &\leq -\inf_{\mu \in \mathcal{A}}\tilde{I}_{\nu}(\mu) \label{eq: to show LDP nu closed sets}
 \\
\linf{N}\inf_{(\mathbf{z}_0,\mathbf{g}_0) \in \mathcal{Y}^N} N^{-1}\log Q^N_{\nu,\mathbf{z}_0,\mathbf{g}_0} \big( \tilde{\mu}^N(\mathbf{y}_{[0,T]}, \tilde{\mathbf{G}}^{\nu}_{[0,T]}) \in \mathcal{O}\big) &\geq -\inf_{\mu \in \mathcal{O}}\tilde{I}_{\nu}(\mu) . \label{eq: to show LDP nu open sets}
\end{align}
Furthermore $\tilde{I}_{\nu}(\cdot)$ is lower semi-continuous, and has compact level sets.
\end{lemma}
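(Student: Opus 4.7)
The crucial observation is that under $Q^N_{\nu,\mathbf{z}_0,\mathbf{g}_0}$ the pairs $\bigl(\mathbf{y}^j_{[0,T]},\tilde{\mathbf{G}}^{\nu,j}_{[0,T]}\bigr)_{j\in I_N}$ are \emph{mutually independent}, the $j$-th pair having law $P_{\mathbf{z}_0^j}\otimes\beta_{\nu,\mathbf{g}_0^j}$. So this is a Sanov-type problem for independent non-identically distributed samples whose ``type'' distribution is $\hat{\mu}^N(\mathbf{z}_0,\mathbf{g}_0)\in\mathcal{P}(\mathbb{R}^{2M})$, which by definition of $\mathcal{Y}^N$ lies within $\tilde{\delta}_N\to 0$ of $\kappa$ for every $(\mathbf{z}_0,\mathbf{g}_0)\in\mathcal{Y}^N$. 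I would proceed in four steps: (i) compute the scaled log moment generating function and show it converges uniformly in $\mathcal{Y}^N$; (ii) identify the rate function as a genuine relative entropy; (iii) derive the uniform upper bound via Chebyshev combined with Lemma \ref{Lemma Exponential Tightness}; (iv) derive the uniform lower bound via an explicit change of measure.

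For (i), by independence, for any bounded continuous $f:\mathcal{C}([0,T],\mathbb{R}^M)^2\to\mathbb{R}$,
\[
N^{-1}\log\mathbb{E}^{Q^N_{\nu,\mathbf{z}_0,\mathbf{g}_0}}\bigl[e^{N\langle\tilde{\mu}^N,f\rangle}\bigr]=\int\log\mathbb{E}^{P_{z_0}\otimes\beta_{\nu,g_0}}\bigl[e^{f}\bigr]\,\hat{\mu}^N(\mathbf{z}_0,\mathbf{g}_0)(dz_0,dg_0).
\]
The continuity of $(z_0,g_0)\mapsto P_{z_0}\otimes\beta_{\nu,g_0}$, which for the second factor uses the Gaussian conditioning formulas \eqref{eq: mps average beta}--\eqref{eq: Vps average beta} together with the invertibility of $\mathfrak{V}_{\nu,0}$ baked into the definition of $\mathcal{U}$, and the finite-second-moment assumption on $\kappa$, force the right-hand side to converge uniformly over $(\mathbf{z}_0,\mathbf{g}_0)\in\mathcal{Y}^N$ to $\Lambda_\nu(f):=\int\log\mathbb{E}^{P_{z_0}\otimes\beta_{\nu,g_0}}[e^f]\,d\kappa$. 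For (ii), I would use the chain rule for relative entropy: if $\mu\in\mathcal{U}$ has time-$0$ marginal equal to $\kappa$, then
\[
\tilde{I}_\nu(\mu)=\mathbb{E}^\kappa\bigl[\mathcal{R}\bigl(\mu_{\mathbf{z}_0,\mathbf{g}_0}\,\|\,P_{\mathbf{z}_0}\otimes\beta_{\nu,\mathbf{g}_0}\bigr)\bigr]=\mathcal{R}\bigl(\mu\,\big\|\,\bar{P}_\nu\bigr),
\]
where $\bar{P}_\nu=\int(P_{z_0}\otimes\beta_{\nu,g_0})\,\kappa(dz_0,dg_0)$. Presented this way, $\tilde{I}_\nu$ is a relative entropy, which is automatically lower semi-continuous with compact level sets, and the classical Donsker--Varadhan variational formula identifies $\tilde{I}_\nu=\Lambda_\nu^{\ast}$.

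For (iii), I would combine exponential tightness with a standard Chebyshev/Bryc argument: restrict the closed set $\mathcal{A}$ to $\mathcal{A}\cap\mathcal{U}_a$, cover by finitely many weak neighborhoods, estimate each by tilting along a near-optimal $f$, and optimize. For (iv), fix $\mu\in\mathcal{O}$ with $\tilde{I}_\nu(\mu)<\infty$ (so $\mu_{(0)}=\kappa$), define a tilted law $\tilde{Q}^N_{\nu,\mathbf{z}_0,\mathbf{g}_0}$ by replacing the $j$-th factor $P_{\mathbf{z}_0^j}\otimes\beta_{\nu,\mathbf{g}_0^j}$ with the regular conditional $\mu_{\mathbf{z}_0^j,\mathbf{g}_0^j}$, apply a LLN on the tilted product to concentrate $\tilde{\mu}^N$ near $\mu$, and let the Radon--Nikodym derivative contribute $-N\tilde{I}_\nu(\mu)+o(N)$.

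\textbf{Main obstacle.} The delicate point is the uniformity of the lower bound in $(\mathbf{z}_0,\mathbf{g}_0)\in\mathcal{Y}^N$: the tilted law requires evaluating the conditionals $\mu_{\mathbf{z}_0^j,\mathbf{g}_0^j}$ at specific points, whereas a priori the kernel $(z_0,g_0)\mapsto\mu_{z_0,g_0}$ is only defined $\kappa$-almost everywhere and the initial empirical measure only approximates $\kappa$ in Wasserstein distance. I would handle this by a regularization: approximate $\mu$ by $\mu^{(k)}$ whose conditional kernel is continuous in $(z_0,g_0)$ (e.g. by smoothing the time-$0$ marginals of $\mu$ and reconstructing the conditionals), show $\tilde{I}_\nu(\mu^{(k)})\to\tilde{I}_\nu(\mu)$ using lower semi-continuity plus the relative-entropy presentation from step (ii), apply the tilting argument to each $\mu^{(k)}$ where the tilted LLN now holds uniformly over $\mathcal{Y}^N$ by continuity of the tilted laws in the empirical type distribution, and conclude by taking $k\to\infty$.
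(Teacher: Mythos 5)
Your proposal takes a genuinely different route from the paper's. You attack the LDP directly via a G\"artner--Ellis/Bryc computation on the scaled log moment generating function and an explicit exponential tilting for the lower bound. The paper instead invokes a known (non-uniform) LDP for each \emph{fixed} sequence $(\mathbf{z}^{(N)}_0,\mathbf{g}^{(N)}_0)\to\kappa$ (citing \cite{Lucon2017}), and then upgrades this to a uniform statement over $\mathcal{Y}^N$ by proving the comparison \eqref{eq:to show epsilon blowup}: that for any measurable $\mathcal{E}$ and any $\epsilon>0$, $\sup_{(\mathbf{z}_0,\mathbf{g}_0)\in\mathcal{Y}^N} Q^N_{\nu,\mathbf{z}_0,\mathbf{g}_0}(\tilde\mu^N\in\mathcal{E})\leq\inf_{(\mathbf{z}_0,\mathbf{g}_0)\in\mathcal{Y}^N} Q^N_{\nu,\mathbf{z}_0,\mathbf{g}_0}(\tilde\mu^N\in\mathcal{E}^{(\epsilon)})$ for $N$ large. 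This comparison is obtained via a permutation-averaged law $\bar Q^N$ (which has the same empirical-measure law because the empirical measure is symmetric under permutations) and a Wasserstein-coupling bound exploiting the fact that, once each process is re-centered by its initial value and conditional mean $\mathfrak{m}^p_t(\nu,\mathbf{g}^j_0)$, its law no longer depends on $(\mathbf{z}_0,\mathbf{g}_0)$. This single estimate handles both the upper and lower bounds uniformly with very little machinery.

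Your Step (ii) — realizing $\tilde I_\nu$ as the single relative entropy $\mathcal{R}(\mu\,\|\,\bar P_\nu)$ via the disintegration/chain rule — is correct and a nice way to get lower semi-continuity and compact level sets for free. Your identification of the main obstacle (the conditional kernel $(z_0,g_0)\mapsto\mu_{z_0,g_0}$ is only defined $\kappa$-a.e.\ and must be evaluated at empirical points) is also correct; this is exactly what the paper's permutation/coupling argument is designed to circumvent. However, your proposed fix — approximating $\mu$ by measures whose conditional kernel is continuous, then proving a tilted law of large numbers uniformly in the empirical type — is substantially vaguer than the paper's argument and would require a nontrivial construction (you would need to control $\tilde I_\nu(\mu^{(k)})\to\tilde I_\nu(\mu)$ from both sides, not just by lower semi-continuity, and you would need a quantitative modulus of continuity for the smoothed kernel). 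There is also an under-justified step in your Step (i): uniform convergence of $\int\log\mathbb{E}^{P_{z_0}\otimes\beta_{\nu,g_0}}[e^f]\,d\hat\mu^N$ over $\mathcal{Y}^N$ from Wasserstein convergence $\hat\mu^N\to\kappa$ requires the integrand to be Lipschitz (or to have a uniformly controlled modulus), not merely continuous; for general bounded continuous $f$ this needs a truncation argument exploiting the finite second moment of $\kappa$ and the linearity of $\mathbf{g}\mapsto\mathfrak{m}^p_t(\nu,\mathbf{g})$, which you assert but do not carry out. Both approaches can be made to work, but the one in the paper reaches the uniform statement with much less technical overhead.
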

\begin{proof}
First, fix any sequence $\big(\mathbf{z}^{(N)}_0,\mathbf{g}^{(N)}_0\big)_{N\geq 1} $, such that $\big(\mathbf{z}^{(N)}_0,\mathbf{g}^{(N)}_0\big) \in \mathcal{Y}^N$. Necessarily, thanks to the definition of $\mathcal{Y}^N$, it must be that
\begin{align}
\hat{\mu}^N\big(\mathbf{z}^{(N)}_0,\mathbf{g}^{(N)}_0\big) \to \kappa \in \mathcal{P}\big(\mathbb{R}^{2M}\big). \label{eq: emp 0 convergence 0}
\end{align}
It follows from \eqref{eq: emp 0 convergence 0} that
\begin{align}
 \lsup{N} N^{-1}\log Q^N_{\nu,\mathbf{z}^{(N)}_0,\mathbf{g}^{(N)}_0}\big( \tilde{\mu}^N(\mathbf{y}_{[0,T]}, \tilde{\mathbf{G}}^{\nu}_{[0,T]}) \in \mathcal{A}\big) &\leq -\inf_{\mu \in \mathcal{A}}\tilde{I}_{\nu}(\mu) \label{LDP upper tmp} \\
\linf{N}N^{-1}\log  Q^N_{\nu,\mathbf{z}^{(N)}_0,\mathbf{g}^{(N)}_0} \big( \tilde{\mu}^N(\mathbf{y}_{[0,T]}, \tilde{\mathbf{G}}^{\nu}_{[0,T]}) \in \mathcal{O}\big) &\geq -\inf_{\mu \in \mathcal{O}}\tilde{I}_{\nu}(\mu) . \label{eq: any sequence LDP}
\end{align}
See for instance \cite{Lucon2017} for a proof of this fact. Furthermore $\tilde{I}_{\nu}: \mathcal{P}\big( \mathcal{C}([0,T],\mathbb{R}^M)^2\big) \to \mathbb{R}^+$ is lower-semi-continuous and has compact level sets.

We next have to show that the convergence is uniform over $\mathcal{Y}^N$ (as in the statement of the Theorem). To do this, we first wish to show that for any measurable set $\mathcal{E} \subset \mathcal{P}\big( \mathcal{C}([0,T],\mathbb{R}^M)^2 \big)$ and any $\epsilon > 0$, for all large enough $N$,
\begin{multline}
\sup_{(\mathbf{z}_0,\mathbf{g}_0) \in \mathcal{Y}^N} Q^N_{\nu,\mathbf{z}_0,\mathbf{g}_0}\big( \tilde{\mu}^N(\mathbf{y}_{[0,T]}, \tilde{\mathbf{G}}^{\nu}_{[0,T]}) \in \mathcal{E}\big) \leq  \inf_{(\mathbf{z}_0,\mathbf{g}_0) \in \mathcal{Y}^N} Q^N_{\nu,\mathbf{z}_0,\mathbf{g}_0}\big( \tilde{\mu}^N(\mathbf{y}_{[0,T]}, \tilde{\mathbf{G}}^{\nu}_{[0,T]}) \in \mathcal{E}^{(\epsilon)}\big) \label{eq:to show epsilon blowup}
\end{multline}
and $\mathcal{E}^{(\epsilon)}$ is the closed $\epsilon$-blowup of $\mathcal{E}$ with respect to $d_W$. To do this, we are going to compare the conditioned probability to the conditioned probability induced by any other sequence in $\mathcal{Y}^N$. This comparison is facilitated by using the following permutation-averaged probability law.

Define the set
\begin{multline}
\mathfrak{S}^N = \bigg\lbrace (\mathbf{y},\mathbf{g}) \in \mathcal{C}\big( [0,T], \mathbb{R}^M \big)^N \times  \mathcal{C}\big( [0,T], \mathbb{R}^M \big)^N:  
\\ N^{-1} \sum_{j\in I_N}\big\lbrace \sup_{p\in I_M}\big| y^{p,j}_0 - z^{(N),p,j}_0 \big| + \sup_{p\in I_M}\big| g^{p,j}_0 - g^{(N),p,j}_0 \big| \big\rbrace \leq 2 \tilde{\delta}_N \bigg\rbrace,
\end{multline} 
and we recall that $(\tilde{\delta}_N)_{N\geq 1}$ is a sequence that decreases to $0$, as defined in \eqref{eq: Y N set}. We endow $\mathfrak{S}^N$ with the topology that it inherits from $\mathcal{C}\big( [0,T], \mathbb{R}^M \big)^N \times  \mathcal{C}\big( [0,T], \mathbb{R}^M \big)^N$. Write $\mathfrak{P}^N$ to be the set of all permutations on $I_N$, and define the measure $\bar{Q}^N_{\nu,\mathbf{z}_0,\mathbf{g}_0} \in \mathcal{P}\big( \mathfrak{S}^N \big)$ to be the average over all permutations, i.e. for any measurable $\mathcal{A} \subseteq \mathfrak{S}^N$,
\begin{align}
\bar{Q}^N_{\nu,\mathbf{z}_0,\mathbf{g}_0}(\mathcal{A})= \big| \mathfrak{P}^N \big|^{-1} \sum_{\pi \in \mathfrak{P}^N} Q^N_{\nu,\mathbf{z}_0,\mathbf{g}_0}\big( \pi \cdot \mathcal{A}\big)
\end{align}
and here we denote $\pi:\mathcal{C}\big( [0,T], \mathbb{R}^M \big)^N \times  \mathcal{C}\big( [0,T], \mathbb{R}^M \big)^N $ to be the permutation,
\begin{align}
\big(\pi\cdot(\mathbf{y},\mathbf{g}) \big)^j = (\mathbf{y}^{\pi(j)},\mathbf{g}^{\pi(j)}).
\end{align}
Since the empirical measure is invariant under any permutation of its arguments, for any measurable $\mathcal{A} \subset \mathcal{P}\big(\mathcal{C}([0,T],\mathbb{R}^M)^2 \big)$
\begin{align}
Q^N_{\nu,\mathbf{z}^{(N)}_0,\mathbf{g}^{(N)}_0}\big( \tilde{\mu}^N(\mathbf{y}_{[0,T]}, \tilde{\mathbf{G}}^{\nu}_{[0,T]}) \in \mathcal{A}\big) = \bar{Q}^N_{\nu,\mathbf{z}^{(N)}_0,\mathbf{g}^{(N)}_0}\big( \tilde{\mu}^N(\mathbf{y}_{[0,T]}, \tilde{\mathbf{G}}^{\nu}_{[0,T]}) \in \mathcal{A}\big) .
\end{align}

We can without loss of generality take $\big(\mathbf{z}^{(N)}_0,\mathbf{g}^{(N)}_0\big) \in \mathfrak{S}^N$. Now consider any other sequence $\big(\grave{\mathbf{z}}^{(N)}_0,\grave{\mathbf{g}}^{(N)}_0\big) \in \mathfrak{S}^N \cap \mathcal{Y}^N$. Let $d_W$ be the $1$-Wasserstein Distance on $\mathcal{P}\big( \mathcal{C}([0,T],\mathbb{R}^M)^N \times  \mathcal{C}([0,T],\mathbb{R}^M)^N \big)$ induced by the  norm
\begin{align}
\norm{ (\mathbf{y},\mathbf{G}) - (\bar{ \mathbf{y}},\bar{\mathbf{G}}) } = N^{-1}\sum_{j\in I_N}\big\lbrace \sup_{p\in I_M \fatsemi t\in [0,T]} \big| y^{p,j}_t - \bar{y}^{p,j}_t \big| + \sup_{p\in I_M \fatsemi t\in [0,T]} \big| G^{p,j}_t - \bar{G}^{p,j}_t \big| \big\rbrace .
\end{align}
We claim that
\begin{align}
d_W\big( \bar{Q}^N_{\nu,\mathbf{z}^{(N)}_0,\mathbf{g}^{(N)}_0},  \bar{Q}^N_{\nu,\grave{\mathbf{z}}^{(N)}_0,\grave{\mathbf{g}}^{(N)}_0} \big) \leq & N^{-1}\sum_{j\in I_N} \big\lbrace \sup_{p\in I_M} \big| z^{(N),p,j}_0 - \grave{z}^{(N),p,j}_0 \big| \nonumber\\ &+  \sup_{p\in I_M \fatsemi t\in [0,T]} \big|    \mathfrak{m}^{p}_{t}(\nu,\mathbf{g}^{(N),j}_0) - \mathfrak{m}^{p}_{t}(\nu,\grave{\mathbf{g}}^{(N),j}_0)  \big| \big\rbrace  \\
:= & f^N\big( \mathbf{z}^{(N)}_0,\mathbf{g}^{(N)}_0 , \grave{\mathbf{z}}^{(N)}_0,\grave{\mathbf{g}}^{(N)}_0 \big)
\end{align}
This identity follows from the fact that $\tilde{z}^{p,j}_t := z^{p,j}_t - z^{p,j}_0$ and $\tilde{g}^{p,j}_t := G^{p,j}_t - \mathfrak{m}^p_t(\nu, \mathbf{G}^j_0) $ are identically distributed, both (i) for all $j\in I_N$, and (ii) with respect to both probability laws $ \bar{Q}^N_{\nu,\mathbf{z}^{(N)}_0,\mathbf{g}^{(N)}_0}$ and $ \bar{Q}^N_{\nu,\grave{\mathbf{z}}^{(N)}_0,\grave{\mathbf{g}}^{(N)}_0} $. 

It follows from the definition of $\mathcal{Y}^N$ that  
\begin{align}
\lim_{N\to\infty}\sup_{(\mathbf{z}^{(N)}_0,\mathbf{g}^{(N)}_0), (\grave{\mathbf{z}}^{(N)}_0,\grave{\mathbf{g}}^{(N)}_0) \in \mathfrak{S}^N \cap \mathcal{Y}^N}  f^N\big( \mathbf{z}^{(N)}_0,\mathbf{g}^{(N)}_0 , \grave{\mathbf{z}}^{(N)}_0,\grave{\mathbf{g}}^{(N)}_0 \big) = 0.
\end{align}
We have thus proved \eqref{eq:to show epsilon blowup}.

To now must prove the Large Deviations bounds in the statement of the theorem. We start with the upper-bound \eqref{eq: to show LDP nu closed sets}. It follows from \eqref{LDP upper tmp} and \eqref{eq:to show epsilon blowup} that for any $\epsilon > 0$,
\begin{align}
 \lsup{N}\sup_{(\mathbf{z}_0,\mathbf{g}_0) \in \mathcal{Y}^N} N^{-1}\log Q^N_{\nu,\mathbf{z}_0,\mathbf{g}_0}\big( \tilde{\mu}^N(\mathbf{y}_{[0,T]}, \tilde{\mathbf{G}}^{\nu}_{[0,T]}) \in \mathcal{A}\big) \leq - \inf_{\mu \in \mathcal{A}^{(\epsilon)}} \tilde{I}_{\nu}(\mu).
 \end{align}
 The lower-semi-continuity of $\tilde{I}_{\nu}(\mu)$ dictates that
 \[
\lim_{\epsilon \to 0^+} \inf_{\mu \in \mathcal{A}^{(\epsilon)}} \tilde{I}_{\nu}(\mu) = \inf_{\mu \in \mathcal{A}} \tilde{I}_{\nu}(\mu),
 \]
 and we have proved the upperbound. For the lower-bound, let $\mathcal{O}$ be open, and for any $\mu \in \mathcal{O}$, take $\epsilon > 0$ to be such that $B_{2\epsilon}(\mu) \subset \mathcal{O}$. Then 
 \begin{align}
 \linf{N}\inf_{(\mathbf{z}_0,\mathbf{g}_0) \in \mathcal{Y}^N} &N^{-1}\log Q^N_{\nu,\mathbf{z}_0,\mathbf{g}_0} \big( \tilde{\mu}^N(\mathbf{y}_{[0,T]}, \tilde{\mathbf{G}}^{\nu}_{[0,T]}) \in \mathcal{O}\big)  \\
&\geq  \linf{N}\inf_{(\mathbf{z}_0,\mathbf{g}_0) \in \mathcal{Y}^N} N^{-1}\log Q^N_{\nu,\mathbf{z}_0,\mathbf{g}_0} \big( \tilde{\mu}^N(\mathbf{y}_{[0,T]}, \tilde{\mathbf{G}}^{\nu}_{[0,T]}) \in B_{2\epsilon}(\mu) \big) \\
&\geq  \linf{N} N^{-1}\log Q^N_{\nu,\mathbf{z}^{(N)}_0,\mathbf{g}^{(N)}_0} \big( \tilde{\mu}^N(\mathbf{y}_{[0,T]}, \tilde{\mathbf{G}}^{\nu}_{[0,T]}) \in B_{\epsilon}(\mu) \big) \\
 &\geq -\inf_{\zeta \in B_{\epsilon}(\mu)}\tilde{I}_{\nu}(\zeta),
\end{align}
using the Large Deviations estimate \eqref{eq: any sequence LDP}. Taking $\epsilon \to 0^+$, it must be that for any $\mu \in \mathcal{O}$,
 \begin{align}
 \linf{N}\inf_{(\mathbf{z}_0,\mathbf{g}_0) \in \mathcal{Y}^N} N^{-1}\log Q^N_{\nu,\mathbf{z}_0,\mathbf{g}_0} \big( \tilde{\mu}^N(\mathbf{y}_{[0,T]}, \tilde{\mathbf{G}}^{\nu}_{[0,T]}) \in \mathcal{O}\big) \geq - \tilde{I}_{\nu}(\mu).
\end{align}
Since $\mu \in \mathcal{O}$ is arbitrary, \eqref{eq: to show LDP nu open sets} follows immediately.
\end{proof}

We can now state the proof of Theorem \ref{Theorem Uncoupled Large Deviation Principle Conditional Time 0 0}.
\begin{proof}
We start with the upper bound \eqref{LDP upper bound}. We write $\hat{\mu}^N := \hat{\mu}^N(\mathbf{y}_{[0,T]}, \tilde{\mathbf{G}}_{[0,T]})$. Using a union-of-events bound, for any $a > 0$,
\begin{multline}
\lsup{N}\sup_{(\mathbf{z}_0,\mathbf{g}_0) \in \mathcal{Y}^N} N^{-1}\log Q^N_{\mathbf{z}_0,\mathbf{g}_0}\big(\hat{\mu}^N  \in \mathcal{A}\big)  \leq \\ \max\bigg\lbrace\lsup{N}\sup_{(\mathbf{z}_0,\mathbf{g}_0) \in \mathcal{Y}^N} N^{-1}\log Q^N_{\mathbf{z}_0,\mathbf{g}_0}\big( \hat{\mu}^N  \in \mathcal{A} \cap \mathcal{U}_a \big), \\ \lsup{N}\sup_{(\mathbf{z}_0,\mathbf{g}_0) \in \mathcal{Y}^N} N^{-1}\log Q^N_{\mathbf{z}_0,\mathbf{g}_0}\big( \hat{\mu}^N  \notin \mathcal{U}_a \big) \bigg\rbrace  \\
\leq  \max\bigg\lbrace\lsup{N}\sup_{(\mathbf{z}_0,\mathbf{g}_0) \in \mathcal{Y}^N} N^{-1}\log Q^N_{\mathbf{z}_0,\mathbf{g}_0}\big( \hat{\mu}^N  \in \mathcal{A} \cap \mathcal{U}_a \big), -L \bigg\rbrace,
\end{multline}
for any $L > 0$, as long as $a$ is sufficiently large, thanks to the exponential tightness proved in Lemma \ref{Lemma Exponential Tightness}. By taking $a \to \infty$, it thus suffices that we prove that for arbitrary $\mathcal{U}_a$ such that $\mathcal{A} \cap \mathcal{U}_a \neq \emptyset$,
\begin{align}
\lsup{N}\sup_{(\mathbf{z}_0,\mathbf{g}_0) \in \mathcal{Y}^N} N^{-1}\log Q^N_{\mathbf{z}_0,\mathbf{g}_0}\big( \hat{\mu}^N  \in \mathcal{A} \cap \mathcal{U}_a \big) = -\inf_{\mu \in \mathcal{A} \cap \mathcal{U}_a} \tilde{I}_{\mu}(\mu). \label{eq: to show intersected closed set}
\end{align}

Since $\mathcal{A} \cap \mathcal{U}_a$ is compact, for any $\epsilon > 0$ we can always find an open covering of the form, for some positive integer $\mathcal{N}_{\epsilon}$,  $\lbrace \zeta_i \rbrace_{1\leq i \leq \mathcal{N}_{\epsilon}} \subseteq \mathcal{A} \cap \mathcal{U}_a$,
\begin{align}
\mathcal{A} \cap \mathcal{U}_a \subseteq \bigcup_{i=1}^{\mathcal{N}_{\epsilon}} B_{\epsilon}(\zeta_i).
\end{align}
We thus find that 
\begin{multline}
\lsup{N}\sup_{(\mathbf{z}_0,\mathbf{g}_0) \in \mathcal{Y}^N} N^{-1}\log Q^N_{\mathbf{z}_0,\mathbf{g}_0}\big( \hat{\mu}^N  \in \mathcal{A} \cap \mathcal{U}_a \big) \\ \leq \sup_{1\leq i \leq \mathcal{N}_{\epsilon}} \bigg\lbrace \lsup{N}\sup_{(\mathbf{z}_0,\mathbf{g}_0) \in \mathcal{Y}^N} N^{-1}\log Q^N_{\mathbf{z}_0,\mathbf{g}_0}\big( \hat{\mu}^N \in B_{\epsilon}(\zeta_i) \big) \bigg\rbrace .
\end{multline}
Thus, employing Lemma \ref{Lemma beta nu well defined} in the third line below,
\begin{align}
\lsup{N}\sup_{(\mathbf{z}_0,\mathbf{g}_0) \in \mathcal{Y}^N}& N^{-1} \log Q^N_{\mathbf{z}_0,\mathbf{g}_0}\big( \hat{\mu}^N \in B_{\epsilon}(\zeta_i) \big) \\ &=\lsup{N}\sup_{(\mathbf{z}_0,\mathbf{g}_0) \in \mathcal{Y}^N} N^{-1} \log  \mathbb{E}^{P^N_{\mathbf{z}_0}} \bigg[ \gamma^N_{\mathbf{y},\mathbf{g}_0}\bigg(  \hat{\mu}^N \in B_{\epsilon}(\zeta_i) \bigg) \bigg] \\
&=\lsup{N}\sup_{(\mathbf{z}_0,\mathbf{g}_0) \in \mathcal{Y}^N} N^{-1} \log  \mathbb{E}^{P^N_{\mathbf{z}_0}} \bigg[ \tilde{\gamma}^N_{\hat{\mu}^N(\mathbf{y}),\mathbf{g}_0}\bigg(  \hat{\mu}^N \in B_{\epsilon}(\zeta_i) \bigg) \bigg] \\
&\leq \lsup{N}\sup_{(\mathbf{z}_0,\mathbf{g}_0) \in \mathcal{Y}^N} N^{-1} \log  \mathbb{E}^{P^N_{\mathbf{z}_0}} \bigg[  \sup_{\nu \in  B_{\epsilon}(\zeta_i)}  \gamma^N_{\nu^{(1)},\mathbf{g}_0}\bigg(  \hat{\mu}^N \in B_{\epsilon}(\zeta_i) \bigg) \bigg] \\
&= - \inf_{\mu , \nu \in B_{\epsilon}(\zeta_i)} \tilde{I}_{\nu}(\mu),
\end{align}
thanks to Theorem \ref{Theorem Uncoupled Large Deviation Principle Conditional Time 0}. We thus find that
\begin{align}
\lsup{N}\sup_{(\mathbf{z}_0,\mathbf{g}_0) \in \mathcal{Y}^N} N^{-1}\log Q^N_{\mathbf{z}_0,\mathbf{g}_0}\big( \hat{\mu}^N  \in \mathcal{A} \cap \mathcal{U}_a \big)  &\leq - \inf_{1\leq i \leq \mathcal{N}_{\epsilon}} \inf_{\nu ,\mu \in B_{\epsilon}(\zeta_i)} \tilde{I}_{\nu}(\mu).\label{eq: l sup infimum expression}
\end{align}
Now, it is proved in Lemma \ref{beta mapping continuous} that $ \nu \to Q_{\nu,\mathbf{z}_0,\mathbf{g}_0} $ is continuous. Since the Relative Entropy is lower-semi-continuous in both of its arguments, we thus find that the following map is lower-semi-continuous,
\[
(\nu,\mu) \to \tilde{I}_{\nu}(\mu).
\]
Thus taking $\epsilon \to 0^+$, we obtain that
\begin{align}
\lim_{\epsilon \to 0^+}\inf_{1\leq i \leq \mathcal{N}_{\epsilon}} \inf_{\nu ,\mu \in B_{\epsilon}(\zeta_i)} \tilde{I}_{\nu}(\mu)
= \inf_{\mu \in \mathcal{A} \cap \mathcal{U}_a} \tilde{I}_{\mu}(\mu),
\end{align}
and we have proved \eqref{eq: to show intersected closed set}.

Turning to the lower bound \eqref{LDP lower bound}, consider an arbitrary open set $\mathcal{O}$. If $\mathcal{O} \cap \mathcal{U} = \emptyset$, then 
\[
\lsup{N}\sup_{(\mathbf{z}_0,\mathbf{g}_0) \in \mathcal{Y}^N} N^{-1}\log Q^N_{\mathbf{z}_0,\mathbf{g}_0}\big(\hat{\mu}^N  \in \mathcal{O}\big) = - \infty = - \inf_{\mu \in \mathcal{O}}\mathcal{I}(\mu),
\]
since $\mathcal{I}$ is identically $\infty$ outside of $\mathcal{U}$. In this case, its clear that \eqref{eq: to show LDP nu open sets} holds.

We can thus assume that $\mathcal{O} \cap \mathcal{U} \neq \emptyset$. Let $\mu \in \mathcal{O}$ be such that $\mu$ is in the interior of $\mathcal{U}_a$, for some $a > 0$. We can thus find a sequence of neighborhoods $\lbrace \mathcal{N}_i \rbrace_{i \geq 1}$ of $\mu$ such that $ \mathcal{N}_j \subseteq \mathcal{O}\cap \mathcal{U}_a \cap B_{j^{-1}}(\mu) $. We thus find that for any $j\geq 1$,
\begin{multline}
\lsup{N}\sup_{(\mathbf{z}_0,\mathbf{g}_0) \in \mathcal{Y}^N} N^{-1}\log Q^N_{\mathbf{z}_0,\mathbf{g}_0}\big(\hat{\mu}^N  \in \mathcal{O}\big) \geq\lsup{N}\sup_{(\mathbf{z}_0,\mathbf{g}_0) \in \mathcal{Y}^N} N^{-1}\log Q^N_{\mathbf{z}_0,\mathbf{g}_0}\big(\hat{\mu}^N  \in  \mathcal{N}_j\big) .
\end{multline}
Similarly to the bound for the closed sets, we obtain that 
\begin{align}
\lsup{N}\sup_{(\mathbf{z}_0,\mathbf{g}_0) \in \mathcal{Y}^N} N^{-1}\log Q^N_{\mathbf{z}_0,\mathbf{g}_0}\big(\hat{\mu}^N  \in  \mathcal{N}_j\big) \geq - \sup_{\nu \in \mathcal{N}_j}\inf_{\mu \in \mathcal{N}_j}\tilde{I}_{\nu}(\mu).
\end{align}
Taking $j \to \infty$, since $(\nu,\mu) \to \tilde{I}_{\nu}(\mu) $ is lower semicontinuous, it must be that 
\begin{align}
\lsup{N}\sup_{(\mathbf{z}_0,\mathbf{g}_0) \in \mathcal{Y}^N} N^{-1}\log Q^N_{\mathbf{z}_0,\mathbf{g}_0}\big(\hat{\mu}^N  \in  \mathcal{N}_j\big) \geq - \tilde{I}_{\mu}(\mu).
\end{align}
Since $\mu \in \mathcal{O}$ is arbitrary, it must be that 
\begin{align}
\lsup{N}\sup_{(\mathbf{z}_0,\mathbf{g}_0) \in \mathcal{Y}^N} N^{-1}\log Q^N_{\mathbf{z}_0,\mathbf{g}_0}\big(\hat{\mu}^N  \in  \mathcal{O} \big) \geq - \inf_{\mu \in \mathcal{O}}\tilde{I}_{\mu}(\mu).
\end{align}

\end{proof}
\subsubsection{Uncoupled System (with no conditioning)}
In this subsection we prove Corollary \ref{Corollary Uncoupled Large Deviation Principle Unconditioned}. 

For some $\nu \in \mathfrak{Q}$, let $Q^N_{\nu } \in \mathcal{P}\big(  \mathcal{C}([0,T],\mathbb{R}^M)^N \times \mathcal{C}([0,T],\mathbb{R}^M)^N \big)$ be the joint law of the uncoupled system (with no conditioning), i.e.
\begin{equation}
Q^N_{\nu } =  \big(\beta_{\nu} \otimes P_{\mathbf{z}} \big)^{\otimes N} .
\end{equation}

We reach a corollary to Lemma \ref{Theorem Uncoupled Large Deviation Principle Conditional Time 0}. 
\begin{corollary}\label{Theorem Uncoupled Large Deviation Principle Conditional Time 0 Corollary}
Fix some $\nu \in   \mathcal{U}$. Let $\mathcal{A},\mathcal{O}\subseteq \mathcal{P}\big( \mathcal{C}([0,T],\mathbb{R}^M)^2\big) \big)$, such that $\mathcal{O}$ is open and $\mathcal{A}$ closed. Then
\begin{align}
 \lsup{N} N^{-1}\log Q^N_{\nu}\big( \tilde{\mu}^N(\mathbf{y}_{[0,T]}, \tilde{\mathbf{G}}^{\nu}_{[0,T]}) \in \mathcal{A}\big) &\leq -\inf_{\mu \in \mathcal{A}} \mathcal{R}(\mu || S_{\nu} ) \label{eq: to show LDP nu closed sets 2}
 \\
\linf{N} N^{-1}\log Q^N_{\nu } \big( \tilde{\mu}^N(\mathbf{y}_{[0,T]}, \tilde{\mathbf{G}}^{\nu}_{[0,T]}) \in \mathcal{O}\big) &\geq -\inf_{\mu \in \mathcal{O}} \mathcal{R}(\mu || S_{\nu} ) . \label{eq: to show LDP nu open sets 2}
\end{align}
Furthermore $\mu \to \mathcal{R}(\mu || S_{\nu} )$ is lower semi-continuous, and has compact level sets.
\end{corollary}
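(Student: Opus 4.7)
The plan is to observe that the corollary is an immediate consequence of Sanov's theorem for an iid product measure. Under $Q^N_{\nu} = (\beta_{\nu} \otimes P_{\mathbf{z}})^{\otimes N}$, the $N$ pairs $\lbrace (\mathbf{y}^j_{[0,T]}, \tilde{\mathbf{G}}^{\nu,j}_{[0,T]}) \rbrace_{j\in I_N}$ are, by the tensor-product definition, independent and identically distributed with common law $S_{\nu} = P \otimes \beta_{\nu}$ on the Polish space $\mathcal{C}([0,T],\mathbb{R}^M)^2$. Lemma \ref{Lemma beta nu well defined} guarantees that $\beta_{\nu}$, and hence $S_{\nu}$, is a well-defined Borel probability measure on this space.

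The first step is a direct appeal to Sanov's theorem \cite{Dembo1998}: the empirical measure of $N$ iid $S_{\nu}$-samples satisfies the LDP on $\mathcal{P}\big(\mathcal{C}([0,T],\mathbb{R}^M)^2\big)$ (equipped with the weak topology) at rate function $\mu \mapsto \mathcal{R}(\mu || S_{\nu})$. This immediately yields both the upper bound \eqref{eq: to show LDP nu closed sets 2} and the lower bound \eqref{eq: to show LDP nu open sets 2}.

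The second step is to read off the claimed regularity of the rate function. Lower semi-continuity of $\mu \mapsto \mathcal{R}(\mu || S_{\nu})$ follows from the Donsker-Varadhan variational representation of relative entropy, which exhibits it as a supremum of weakly continuous affine functionals. Compactness of its level sets is part of the content of Sanov's theorem (good rate function) and can alternatively be obtained from the tightness of $S_{\nu}$ on a Polish space via a standard exponential-tightness argument.

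I do not anticipate a serious obstacle; the uncoupled and unconditioned setting makes the product structure transparent and reduces the statement to a textbook application. An alternative derivation from the stronger Lemma \ref{Theorem Uncoupled Large Deviation Principle Conditional Time 0} via integrating the conditioning against the initial-value law $\kappa$ and invoking the chain-rule identity $\mathcal{R}(\mu || S_{\nu}) = \mathcal{R}(\mu_0 || (S_{\nu})_0) + \mathbb{E}^{\mu_0}\big[\mathcal{R}(\mu_{\cdot|0} || (S_{\nu})_{\cdot|0})\big]$ would also work, but is considerably more indirect and is unnecessary here.
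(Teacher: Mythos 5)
Your proposal is correct and follows exactly the route the paper takes: the paper's own proof of this corollary is the single line that it is a consequence of Sanov's Theorem, which is precisely the argument you give (the $N$ pairs are iid with law $S_\nu = P\otimes\beta_\nu$ under $Q^N_\nu$, so Sanov on the Polish space $\mathcal{C}([0,T],\mathbb{R}^M)^2$ yields both bounds and the goodness/lower semi-continuity of $\mu\mapsto\mathcal{R}(\mu\,\|\,S_\nu)$). Your additional remark about deriving it alternatively from Lemma \ref{Theorem Uncoupled Large Deviation Principle Conditional Time 0} via the relative-entropy chain rule is valid but, as you note, unnecessary overhead.
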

\begin{proof}
This is a consequence of Sanov's Theorem.
\end{proof}
The proof of Corollary \ref{Corollary Uncoupled Large Deviation Principle Unconditioned} now follows analogously to the proof of  \ref{Theorem Uncoupled Large Deviation Principle Conditional Time 0 0}.
\subsection{Coupled System} \label{Section Coupled System}

Girsanov's Theorem implies that
\begin{align} \label{eq: RN derivative}
\frac{dP^N_{\mathbf{J},\mathbf{z}_0}}{dP^N_{\mathbf{z}_0}}\bigg|_{\mathcal{F}_T}(\mathbf{y}) = \exp\big( N \Gamma^N_{\mathbf{J},T}(\mathbf{y})\big)
\end{align}
where $ \Gamma^N_{\mathbf{J},T}: \mathbb{R}^{MN} \to \mathbb{R}$ is
\begin{equation}
 \Gamma^N_{\mathbf{J},T}(\mathbf{y})= N^{-1}\sum_{j\in I_N \fatsemi p\in I_M} \int_0^T \sigma_s^{-2}  \big(\tilde{G}^{p,j}_s- \tau^{-1} y^{p,j}_s   \big)dy^{p,j}_s -\frac{1}{2} \sigma_s^{-2} \big(\tilde{G}^{p,j}_s- \tau^{-1}y^{p,j}_s   \big)^2 ds .
\end{equation}
We wish to specify a map $\Gamma:  \mathcal{U} \to \mathbb{R}$ with (i) as nice regularity properties as possible, and (ii) such that with unit probability
\begin{align} \label{eq: Radon-Nikodym N system}
 \Gamma^N_{\mathbf{J},T}(\mathbf{y}) = \Gamma\big( \hat{\mu}^N(\mathbf{y},\tilde{\mathbf{G}} ) \big).
\end{align}
It is well-known that the stochastic integral is not a continuous function of the driving Brownian motion. Thus we define the map $\Gamma$ to be a limit of time-discretized approximations, and we will show that this limit must always converge for any measure in $\mathcal{U}$. 

Our precise definition of $\Gamma: \mathcal{U} \to \mathbb{R}$ is as follows. We first define a time-discretized approximation to $\Gamma$. $\Gamma^{(m)} : \mathcal{U} \to \mathbb{R}^+$,
\begin{multline}
\Gamma^{(m)}(\mu) = \sum_{p\in I_M} \sum_{a=0}^{m-1} \mathbb{E}^{\mu}\bigg[ \sigma_{t^{(m)}_a}^{-2} \big(  G^p_{t_a^{(m)}} - \tau^{-1} z^{p}_{t_a^{(m)}}\big) \big( z^{p}_{t^{(m)}_{a+1}} -  z^{p}_{t^{(m)}_{a}} + \Delta_m \tau^{-1} z^{p}_{t^{(m)}_{a}} \big) \\- \frac{1}{2} \sigma_{t^{(m)}_a}^{-2} \Delta_m \big(  G^p_{t_a^{(m)}} - \tau^{-1} z^{p}_{t_a^{(m)}}\big)^2   \bigg].
\end{multline}
We now define $\Gamma: \mathcal{U} \to \mathbb{R}$ to be such that (in the case that the following limit exists)
\begin{align}
\Gamma(\mu) = \lim_{j\to \infty} \Gamma^{(m_{j,j})}(\mu), \label{Gamma mu limit}
\end{align}
where $m_{j,j}$ is a positive integer defined further below in Lemma \ref{Lemma discrete m n approximation}. If the above limit does not exist, then we define $\Gamma(\mu) = 0$ (in fact we will see that the limit always exists if $\mu \in \mathcal{U}$). It may be observed that $\Gamma$ is a well-defined measurable function.

\begin{lemma}\label{Lemma Approximate Gamma by Continuous Functions}
For every $N\geq 1$, every $(\mathbf{z}_0,\mathbf{g}_0) \in \mathcal{Y}^N$, and for $Q^N_{\mathbf{z}_0,\mathbf{g}_0}$ almost every $(\mathbf{y},\tilde{\mathbf{G}})$, the following limit exists
\begin{equation}
\lim_{j\to \infty} \Gamma^{(m_{j,j})} \big( \hat{\mu}^N(\mathbf{y},\tilde{\mathbf{G}}) \big) \label{eq: empirical measure limit}
\end{equation}
With unit probability, the Radon-Nikodym Derivative in \eqref{eq: RN derivative} is such that
\begin{align} 
\frac{dP^N_{\mathbf{J},\mathbf{z}_0}}{dP^N_{\mathbf{z}_0}}\bigg|_{\mathcal{F}_T}  = \exp\big(  \Gamma\big( \hat{\mu}^N(\mathbf{y},\tilde{\mathbf{G}}) \big) \big)
\end{align}
Also for any $\epsilon,L > 0$, there exists $k\in \mathbb{Z}^+$ such that for all $N \geq 1$,
\begin{align} \label{eq: uniform approximation RD Derivative}
\sup_{j \geq k}\sup_{\mathbf{z}_0,\mathbf{g}_0 \in \mathcal{Y}^N} N^{-1}\log Q^N_{\mathbf{z}_0,\mathbf{g}_0} \bigg( \big|  \Gamma^{(m_{j,j})} \big( \hat{\mu}^N(\mathbf{y},\tilde{\mathbf{G}}) \big)  - \Gamma \big( \hat{\mu}^N(\mathbf{y},\tilde{\mathbf{G}}) \big) \big| \geq \epsilon  \bigg) \leq - L
\end{align}
\end{lemma}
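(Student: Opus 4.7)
The plan is to split $\Gamma^N_{\mathbf{J},T}(\mathbf{y})$ into its stochastic-integral part $\mathcal{S}^N_T := N^{-1}\sum_{j,p} \int_0^T \sigma_s^{-2}(\tilde{G}^{p,j}_s - \tau^{-1}y^{p,j}_s)\,dy^{p,j}_s$ and its deterministic part $\mathcal{Q}^N_T := -\tfrac{1}{2}N^{-1}\sum_{j,p}\int_0^T \sigma_s^{-2}(\tilde{G}^{p,j}_s - \tau^{-1}y^{p,j}_s)^2\,ds$, and to observe that $\Gamma^{(m)}(\hat{\mu}^N(\mathbf{y},\tilde{\mathbf{G}}))$ is precisely the corresponding left-endpoint Riemann sum: the extra $\Delta_m \tau^{-1}z^{p}_{t^{(m)}_a}$ term in the definition of $\Gamma^{(m)}$ simply compensates for the fact that the first marginal of $\hat{\mu}^N$ has law $P^N_{\mathbf{z}_0}$ (under which $dy^{p,j}_t = \sigma_t dW^{p,j}_t$) rather than the coupled law. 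Thus the Lemma reduces to controlling two approximation errors with exponential precision, uniformly over $(\mathbf{z}_0,\mathbf{g}_0)\in\mathcal{Y}^N$.

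For the quadratic-variation error $\mathcal{Q}^N_T - (\text{Riemann sum})$, I would use the time-modulus estimates guaranteed by membership in $\mathcal{Q}_a$ and $\mathcal{U}_a$ (Lemma \ref{Lemma y exponential tightness}, Lemma \ref{Lemma Exponential Tightness mu G}, and the bound $\mathbb{E}^{\mu}[(G^p_t - G^p_s)^2]\le aC_\lambda^2|t-s|^{1/2}$ in \eqref{eq: nu bounded moment}). Combined with the exponential tightness Lemma \ref{Lemma Exponential Tightness}, one can reduce to the event $\{\hat{\mu}^N\in\mathcal{U}_a\}$ for $a$ large, on which the Riemann-sum error is deterministically $O(\Delta_m^{1/4})$. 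A union bound over $a$ then yields the required super-exponential decay of this error.

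For the martingale error $\mathcal{S}^N_T - (\text{Riemann sum})$, which is itself a martingale (indexed by $m$) in the filtration of $(W^{p,j})$, I would apply an exponential martingale/Novikov-type inequality. Its bracket process is $N^{-2}\sum_{j,p}\int_0^T \sigma_s^2 [(\tilde{G}^{p,j}_s - \tilde{G}^{p,j}_{s_{(m)}}) - \tau^{-1}(y^{p,j}_s - y^{p,j}_{s_{(m)}})]^2\,ds$, which on the good event $\mathcal{A}_c$ of Lemma \ref{Lemma bound norm of z p j t} and the high-probability operator-norm bound of Lemma \ref{Lemma bound on operator norm} is controlled by a deterministic multiple of $\Delta_m^{1/4}$ times $N^{-1}\sum_j(|y^{p,j}|^2_\infty + 1)$, which Lemma \ref{Lemma bound norm of z p j t} confirms is exponentially tight. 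Choosing the diagonal subsequence $m_{j,j}$ from Lemma \ref{Lemma discrete m n approximation} so that both $\Delta_{m_{j,j}}^{1/4}$ and the resulting Gaussian tail decays dominate any given rate $L$ gives \eqref{eq: uniform approximation RD Derivative}. The a.s.\ existence of the limit \eqref{eq: empirical measure limit} and the identification $dP^N_{\mathbf{J},\mathbf{z}_0}/dP^N_{\mathbf{z}_0} = \exp(N\Gamma(\hat{\mu}^N))$ then follow by Borel--Cantelli applied to \eqref{eq: uniform approximation RD Derivative} with $L$ arbitrarily large, together with Girsanov's identity \eqref{eq: RN derivative}.

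The main obstacle will be obtaining the exponential-tightness bound for the martingale error uniformly in $(\mathbf{z}_0,\mathbf{g}_0)\in\mathcal{Y}^N$, because the integrand $\tilde{G}^{p,j}_s - \tau^{-1}y^{p,j}_s$ is genuinely unbounded (since $\lambda$ may be unbounded). Quantitative tail control is only available through the combination of (i) concentration of the random matrix $\mathcal{J}_N$ in operator norm, (ii) the a priori energy-type bound of Lemma \ref{Lemma bound norm of z p j t} for $\sum_j (z^{p,j}_t)^2$, and (iii) the definition of $\mathcal{Y}^N$ controlling the initial empirical distribution. Chaining these in a way that keeps the exponential rate uniform in $N$ and in $(\mathbf{z}_0,\mathbf{g}_0)\in\mathcal{Y}^N$ is the delicate step; once achieved, the rest of the argument is standard stochastic-calculus approximation.
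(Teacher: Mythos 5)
Your high-level plan matches the paper's: split the Girsanov exponent into martingale and Lebesgue parts, show the Riemann-sum discretizations $\Gamma^{(m)}(\hat{\mu}^N)$ converge with super-exponentially small error by controlling the quadratic variation of the martingale error, restrict to a compact set of measures with exponentially high probability, pass to a diagonal subsequence, and finish with Borel--Cantelli. The paper factors all the quantitative work into Lemma~\ref{Lemma discrete m n approximation}, and the proof of the present lemma is then a one-line union bound $\sum_{j\ge k}\exp(-N2^j)$ over the events $\mathcal{A}_j=\{|\Gamma^{(m_{j,j})}-\Gamma^{(m_{j+1,j+1})}|\ge 2^{1-j}\}$, plus Borel--Cantelli and the Girsanov identity \eqref{eq: Radon-Nikodym N system}. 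You essentially re-derive Lemma~\ref{Lemma discrete m n approximation} inline, which is fine.

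Where you diverge, and where there is a genuine conceptual slip, is in the supporting estimates you propose for controlling the bracket of the martingale error. You reach for Lemma~\ref{Lemma bound on operator norm} (operator-norm concentration of $\mathcal{J}_N$) and the good event $\mathcal{A}_c$ of Lemma~\ref{Lemma bound norm of z p j t}. But the present lemma is stated under $Q^N_{\mathbf{z}_0,\mathbf{g}_0}$, i.e.\ the \emph{uncoupled} dynamics $(\mathbf{y},\tilde{\mathbf{G}})$, whereas Lemma~\ref{Lemma bound norm of z p j t} bounds $\sup_t\sum_j(z^{p,j}_t)^2$ for the \emph{coupled} process $\mathbf{z}$ driven by the connectivity. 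Under $Q^N_{\mathbf{z}_0,\mathbf{g}_0}$ there is no interaction to begin with; the process $\mathbf{y}$ is a collection of free diffusions, and $\tilde{\mathbf{G}}$ is, conditionally on $\mathbf{y}$, a Gaussian system determined by the empirical law $\hat{\mu}^N(\mathbf{y})$. So the route through $\|\mathcal{J}_N\|$ is not needed (and not quite available in this context). The paper's cleaner path is to restrict to the event $\{\hat{\mu}^N\in\mathcal{U}_a\}$, which has $Q^N_{\mathbf{z}_0,\mathbf{g}_0}$-probability $\ge 1-e^{-NL}$ uniformly over $\mathcal{Y}^N$ by Lemma~\ref{Lemma Exponential Tightness}; the definition of $\mathcal{U}_a$ in \eqref{eq: nu bounded moment} already contains the uniform second-moment and H\"older-modulus bounds on both $y^p$ and $G^p$ that bound the discretized QV by $N^{-1}\delta$, with $\delta\to 0^+$ as $m\to\infty$. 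No operator-norm estimate or coupled-system energy bound is required, and the uniformity over $(\mathbf{z}_0,\mathbf{g}_0)\in\mathcal{Y}^N$ comes for free from the uniformity already built into Lemma~\ref{Lemma Exponential Tightness}. Your ``main obstacle'' concern is therefore addressed more directly than you anticipate; the delicacy you worry about has already been absorbed into the construction of $\mathcal{U}_a$ and its exponential tightness.

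A minor point: your claim that the additional $\Delta_m\tau^{-1}z^p_{t^{(m)}_a}$ term in $\Gamma^{(m)}$ ``compensates for the first marginal having law $P^N_{\mathbf{z}_0}$'' isn't supported by the paper; the Girsanov Radon--Nikodym exponent \eqref{eq: RN derivative} with respect to $P^N_{\mathbf{z}_0}$ already carries $dy^{p,j}_s$ and needs no correction of that form, and the proof of Lemma~\ref{Lemma discrete m n approximation} treats only the uncompensated Riemann sum $\sum_b f_{t^{(m)}_b}(z_{t^{(m)}_{b+1}}-z_{t^{(m)}_b})$. This does not alter your argument, but the interpretive remark should be dropped.
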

\begin{proof}
Define the set
\begin{align}
\mathcal{A}_j = \big\lbrace \mu \in \mathcal{U}:  \big| \Gamma^{(m_{j,j})}(\mu) - \Gamma^{(m_{j+1,j+1})}(\mu) \big| \geq 2^{1-j} \big\rbrace
\end{align}
Thanks to a union-of-events bound, for any $N \geq 1$, and using the bound in Lemma \ref{Lemma discrete m n approximation},
\begin{align}
\sup_{\mathbf{z}_0,\mathbf{g}_0 \in \mathcal{Y}^N}  Q^N_{\mathbf{z}_0,\mathbf{g}_0} \bigg( \hat{\mu}^N \in \bigcup_{j \geq k} \mathcal{A}_j \bigg) \leq \sum_{j=k}^{\infty}\exp\big( - N 2^j \big).
\end{align}
It thus follows from the Borel-Cantelli Lemma that there exists a random $k$ such that $\hat{\mu}^N \notin \mathcal{A}_j$ for all $j\geq k$, and so the limit in \eqref{eq: empirical measure limit} exists (almost surely). \eqref{eq: uniform approximation RD Derivative} follows analogously.

\end{proof}

\begin{lemma} \label{Lemma discrete m n approximation}
(i) $\Gamma^{(m)}: \mathcal{U} \to \mathbb{R}$ is continuous. (ii) Moreover, for any $a,j \in \mathbb{Z}^+$, there exists $m_{a,j}$ such that for all $m\geq m_{a,j}$ and all $n\geq m$,
\begin{align}
\sup_{\mathbf{z}_0,\mathbf{g}_0 \in \mathcal{Y}^N} N^{-1}\log Q^N_{\mathbf{z}_0,\mathbf{g}_0} \big( \big| \Gamma^{(m)}\big( \hat{\mu}^N(\mathbf{y},\tilde{\mathbf{G}}) \big) -  \Gamma^{(n)}\big( \hat{\mu}^N(\mathbf{y},\tilde{\mathbf{G}}) \big) \geq 2^{-j} \big) \leq - 2^{a}.
\end{align}
\end{lemma}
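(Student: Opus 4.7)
My plan is to prove part (i) as a direct consequence of finite-dimensional uniform integrability, and to prove part (ii) by splitting $\Gamma^{(m)}(\hat{\mu}^N)-\Gamma^{(n)}(\hat{\mu}^N)$ into a piecewise-constant Ito-integral piece handled by the exponential supermartingale inequality, together with two drift Riemann sums that are deterministic functionals of $\hat{\mu}^N$ and are small on each compact $\mathcal{U}_b$. Throughout, I will first restrict to $\{\hat{\mu}^N\in \mathcal{U}_b\}$, absorbing its complement via the exponential tightness of Lemma~\ref{Lemma Exponential Tightness}. For part (i), $\Gamma^{(m)}(\mu)$ is a finite linear combination of $\mu$-expectations of polynomials of degree at most two in the finite collection $\{z^p_{t_a^{(m)}},G^p_{t_a^{(m)}}\}_{p,a}$; on any $\mathcal{U}_b$, the second-moment bounds in \eqref{eq: nu bounded moment} provide uniform integrability of these polynomials, and by Lemma~\ref{Lemma Equivalent Metric} $d_W$ metrizes weak convergence on $\mathcal{U}_b$, so the required expectations converge, giving continuity of $\Gamma^{(m)}$ on each $\mathcal{U}_b$ and hence on $\mathcal{U}$.

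For part (ii), expanding $\Gamma^{(m)}(\hat{\mu}^N)$ produces three types of terms: the piecewise-constant Ito integral $(\mathrm{I})_m := N^{-1}\sum_{j,p}\int_0^T f^{(m),p,j}_s\,dy^{p,j}_s$ with $f^{(m),p,j}_s := \sigma_s^{-2}(\tilde{G}^{p,j}_{t_a^{(m)}} - \tau^{-1}y^{p,j}_{t_a^{(m)}})$ on $[t_a^{(m)},t_{a+1}^{(m)})$; a first-order drift Riemann sum with integrand $\Delta_m\sigma^{-2}\tau^{-1}(\tilde{G}-\tau^{-1}y)y$; and a quadratic Riemann sum $-\tfrac{\Delta_m}{2}\sigma^{-2}(\tilde{G}-\tau^{-1}y)^2$. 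The drift pieces are functionals of $\hat{\mu}^N$ alone, so on $\{\hat{\mu}^N \in \mathcal{U}_b\}$ the H\"older-type time regularity in \eqref{eq: nu bounded moment}, together with the modulus-of-continuity bounds built into $\mathcal{Q}_b$ (via Lemma~\ref{Lemma y exponential tightness}), force their $m$–$n$ differences to be $O(\Delta_m^{1/4})$ and in particular $<2^{-j-1}$ once $m \geq m_{a,j}$. For the genuinely stochastic difference $M^{(m,n)}_T := (\mathrm{I})_m - (\mathrm{I})_n$, under $Q^N_{\mathbf{z}_0,\mathbf{g}_0}$ the $y^{p,j}$ are Brownian motions independent of $\tilde{G}$, and $f^{(m),p,j}$ is adapted, so $M^{(m,n)}$ is a continuous martingale in $T$ with
\[
\langle M^{(m,n)}\rangle_T = N^{-1}\sum_p \int_0^T \mathbb{E}^{\hat{\mu}^N}\bigl[(f^{(m),p}_s - f^{(n),p}_s)^2\bigr]\sigma_s^2\,ds.
\]
On $\{\hat{\mu}^N \in \mathcal{U}_b\}$, the same H\"older bounds applied to $G^p$ and $z^p$ (together with $|t_a^{(m)}(s)-t_b^{(n)}(s)|\leq \Delta_m$ because $n\geq m$) yield a deterministic upper bound of order $N^{-1}K_b\Delta_m^{1/2}$. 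Localizing by $\tau := \inf\{t: \langle M^{(m,n)}\rangle_t > N^{-1}K_b\Delta_m^{1/2}\}\wedge T$ and applying the exponential supermartingale $\exp\bigl(\lambda M^{(m,n)}_{t\wedge\tau} - \tfrac{\lambda^2}{2}\langle M^{(m,n)}\rangle_{t\wedge\tau}\bigr)$ combined with Chernoff's inequality, optimized over $\lambda$, yields
\[
Q^N_{\mathbf{z}_0,\mathbf{g}_0}\bigl(|M^{(m,n)}_{T\wedge\tau}|\geq 2^{-j-1}\bigr) \leq 2\exp\!\bigl(-N\cdot 2^{-2j-3}/(K_b\Delta_m^{1/2})\bigr).
\]
Since $M^{(m,n)}_T = M^{(m,n)}_{T\wedge\tau}$ on $\{\hat{\mu}^N \in \mathcal{U}_b\}$, choosing $m_{a,j}$ large enough that $\Delta_m^{1/2}\leq 2^{-2j-3}/(2^{a+1}K_b)$ for all $m\geq m_{a,j}$ gives the required exponential rate $\leq -2^{a+1}$, and combining with the tightness estimate completes the proof after fixing $b$ so that $\sup_{\mathcal{Y}^N}N^{-1}\log Q^N_{\mathbf{z}_0,\mathbf{g}_0}(\hat{\mu}^N\notin \mathcal{U}_b)\leq -2^{a+1}$.

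The main obstacle is producing the deterministic quadratic-variation upper bound for the exponential inequality. The squared integrand $(f^{(m)} - f^{(n)})^2$ is quartic in the time marginals of $G$ and $z$, so without restricting to $\mathcal{U}_b$ its empirical average need not be small; and the event $\{\hat{\mu}^N\in\mathcal{U}_b\}$ is not adapted to the filtration $\mathcal{F}^W_t$, which blocks a direct application of the supermartingale inequality. The stopping-time localization at $\tau$ is the device that circumvents this: on $\{\hat{\mu}^N \in \mathcal{U}_b\}$ the stopping never triggers (so the stopped and unstopped martingales agree at time $T$), while on its (exponentially improbable) complement the stopped quadratic variation is deterministically bounded by construction, so the exponential inequality applies unconditionally.
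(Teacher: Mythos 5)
Your proposal is correct and follows essentially the same route as the paper: both restrict to the good event $\{\hat{\mu}^N\in\mathcal{U}_b\}$ via exponential tightness, rewrite $\Gamma^{(m)}-\Gamma^{(n)}$ as a piecewise-constant It\^o integral of $f^p_{t^{(m)}}-f^p_{t^{(n)}}$ against $dy^p$ plus controllable Riemann remainders, bound the quadratic variation by $O(N^{-1}\Delta_m^{1/4})$ on that event using the H\"older moduli built into $\mathcal{U}_b$ and $\mathcal{Q}_b$, and conclude via a Gaussian tail estimate (you via exponential-supermartingale plus stopping-time localization, the paper via the Dambis--Dubins--Schwarz time change, which are two formalizations of the same bound). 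The only cosmetic nit is that the modulus from $\mathcal{Q}_b$ gives $\Delta_m^{1/4}$ rather than the $\Delta_m^{1/2}$ you quoted, which does not affect the conclusion.
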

\begin{proof}
(i) The continuity of $\Gamma^{(m)}$ is almost immediate from the definition.

(ii) For any $t\in [0,T]$, write $t^{(m)} = \sup\lbrace t^{(m)}_b \; : t^{(m)}_b \leq t \rbrace$. Starting with the discrete approximation to the stochastic integral, we can thus write
\begin{align}
\sum_{b=0}^{m-1}\sigma_{t^{(m)}_a}^{-2} \big(  G^p_{t_b^{(m)}} - \tau^{-1} z^{p}_{t_b^{(m)}}\big) \big( z^{p}_{t^{(m)}_{b+1}} -  z^{p}_{t^{(m)}_{b}} \big) 
= \int_0^T \sigma^{-2}_{t^{(m)}}\big(   G^{p}_{t^{(m)}} - \tau^{-1} z^p_{t^{(m)}} \big) dz^{p}_{t}.
\end{align}
Hence,
\begin{multline}
 \sum_{b=0}^{m-1} \mathbb{E}^{\mu}\bigg[ \sigma_{t^{(m)}_b}^{-2} \big(  G^p_{t_b^{(m)}} - \tau^{-1} z^{p}_{t_b^{(m)}}\big) \big( z^{p}_{t^{(m)}_{b+1}} -  z^{p}_{t^{(m)}_{a}}   \big) \bigg] - \\
  \sum_{ab=0}^{n-1} \mathbb{E}^{\mu}\bigg[ \sigma_{t^{(n)}_b}^{-2} \big(  G^p_{t_b^{(n)}} - \tau^{-1} z^{p}_{t_b^{(n)}}\big) \big( z^{p}_{t^{(n)}_{b+1}} -  z^{p}_{t^{(n)}_{b}}   \big) \bigg] \\
  = \mathbb{E}^{\mu}\bigg[  \int_0^T\bigg\lbrace \sigma^{-2}_{t^{(m)}}\big(   G^{p}_{t^{(m)}} - \tau^{-1} z^p_{t^{(m)}} \big)- \sigma^{-2}_{t^{(n)}}\big(   G^{p}_{t^{(n)}} - \tau^{-1} z^p_{t^{(n)}} \big)  \bigg\rbrace dz^{p}_{t} \bigg] \\
  =  \mathbb{E}^{\mu}\bigg[  \int_0^T  \sum_{p\in I_M} (f^p_{t^{(m)}} - f^p_{t^{(n)}}) dz_t^p \bigg]
 \end{multline}
 where $ f^p_t = \sigma_t^{-2} \big(   G^p_t - \tau^{-1} z^p_t \big)$. Writing
 \begin{align}
 f^{p,j}_t = \sigma_t^{-2} \big(   G^{p,j}_t - \tau^{-1} z^{p,j}_t \big),
 \end{align}
 we obtain that
 \begin{align}
  \mathbb{E}^{\hat{\mu}^N}\bigg[  \int_0^T  \sum_{p\in I_M} (f^p_{t^{(m)}} - f^p_{t^{(n)}}) dz_t^p \bigg] = N^{-1}\sum_{j\in I_N \fatsemi p\in I_M} \int_0^T \big(f^{p,j}_{t^{(m)}} - f^{p,j}_{t^{(n)}} \big)dy^{p,j}_t.
  \end{align}
  The quadratic variation of this stochastic integral is
  \begin{align}
  (QV)^{(m,n),N}_t = N^{-2}\sum_{j\in I_N \fatsemi p\in I_M} \int_0^t\big(f^{p,j}_{s^{(m)}} - f^{p,j}_{s^{(n)}} \big)^2 \sigma_s^2 ds 
  \end{align}
  By definition of the set $\mathcal{U}_a$, if $\hat{\mu}^N \in \mathcal{U}_a$, then for any $\delta > 0$, one can find $m_{\delta}$ such that as long as $m,n \geq m_{\delta}$, necessarily
  \[
 (QV)^{(m,n),N}_T \leq N^{-1} \delta.
  \]
  Then writing $w(\cdot)$ to be a standard Brownian Motion, using the Dambin-Dubins-Schwarz \cite{Karatzas1991} time-rescaled representation of the stochastic integral, as long as $(m,n) \geq m_{\delta}$,
 \begin{align}
 \mathbb{P}\bigg(  \hat{\mu}^N \in \mathcal{U}_a \; , \; \bigg|  \int_0^T  \sum_{p\in I_M} (f^p_{t^{(m)}} - f^p_{t^{(n)}}) dz_t^p \bigg| \geq   \frac{\epsilon}{2} \bigg)
 \leq &\mathbb{P}\big( \big| w\big( N^{-1} \delta \big) \big| \geq \epsilon \big) \\ =& \exp\big( - N \epsilon^2 / (8\delta) \big) 
 \leq \exp( - NL ),
 \end{align}
as long as we choose $\delta$ sufficiently small.

The other terms in
\[ 
 \Gamma^{(m)}\big( \hat{\mu}^N(\mathbf{y},\tilde{\mathbf{G}}) \big) -  \Gamma^{(n)}\big( \hat{\mu}^N(\mathbf{y},\tilde{\mathbf{G}}) \big) 
 \]
 are treated similarly (observe that they are just Riemann Sums, so its straightforward to control their difference from the limiting integral). 
  
\end{proof}

We now prove Theorems \ref{Theorem Coupled Large Deviation Principle} and \ref{Theorem Coupled Large Deviation Principle Unconditioned}.
\begin{proof}
In the case of connectivity-independent initial conditions (Case 2 of the Assumptions), the theorem follows from Corollary \ref{Corollary Uncoupled Large Deviation Principle Unconditioned}. Since the relative entropy is only zero when its two arguments are identical, any zero must be a fixed point of the operator $\Phi$. It is proved in the following Lemma that there is a unique zero.

For the rest of this proof, we prove the theorem in the case of connectivity-dependent initial conditions. We start by proving that for any $\epsilon > 0$, there must exist a measure $\mu \in \mathcal{U}$ such that
\begin{align}
\lsup{N}N^{-1} \log \mathbb{P}\big( d_W(\hat{\mu}^N(\mathbf{z},\mathbf{G}) , \mu) \leq \epsilon \big) = 0. \label{eq: proposition end corollary}
\end{align}
Write $\mathfrak{U} = \mathcal{U}_a$, where $a$ is large enough that
\[
\lsup{N} \sup_{(\mathbf{z}_0,\mathbf{g}_0) \in \mathcal{Y}^N}N^{-1}\log Q^N_{\mathbf{z}_0,\mathbf{g}_0}\big( \hat{\mu}^N(\mathbf{y},\tilde{\mathbf{G}}) \in \mathcal{U}_a \big) < - C.
\]
where $C$ is the upperbound for $\Gamma$ in Lemma \ref{Lemma Boundedness of Gamma}. This is possible thanks to the Exponential Tightness. Thanks to the Radon-Nikodym derivative identity in \eqref{eq: Radon-Nikodym N system}, we thus find that
\begin{align}
\lsup{N} N^{-1} \log \mathbb{P}\big( \hat{\mu}^N(\mathbf{z},\tilde{\mathbf{G}}) \notin \mathfrak{U} \big) < 0.
\end{align}
Thus for \eqref{eq: proposition end corollary} to hold, it suffices that we prove that there exists $\mu \in \mathfrak{U}$ such that
\begin{align}
\lsup{N}N^{-1} \log \mathbb{P}\big(\hat{\mu}^N(\mathbf{z},\mathbf{G})  \in \mathfrak{U} \; , \; d_W(\hat{\mu}^N(\mathbf{z},\mathbf{G}) , \mu) \leq \epsilon \big) = 0. \label{eq: proposition end corollary 2}
\end{align}
Since $\mathfrak{U}$ is compact, for any $\epsilon > 0$, we can obtain a finite covering of $\mathfrak{U}$ of the form
\begin{align}
\mathfrak{U} \subseteq \bigcup_{i=1}^{\mathcal{N}_{\epsilon}} B_{\epsilon}(\mu_i),
\end{align}
where $\mu_i \in \mathfrak{U}$. By a union of events bound,
\begin{align}
 0 = & \lim_{N\to\infty} N^{-1} \log \mathbb{P}\big( \hat{\mu}^N(\mathbf{z},\tilde{\mathbf{G}}) \in \mathfrak{U} \big) \\
 \leq & \max_{1\leq i \leq \mathcal{N}_{\epsilon}} \bigg\lbrace  \lsup{N} N^{-1} \log \mathbb{P}\big( \hat{\mu}^N(\mathbf{z},\tilde{\mathbf{G}}) \in B_{\epsilon}(\mu_i) \big) \bigg\rbrace \label{eq: max over many}
\end{align}
If our proposition in \eqref{eq: proposition end corollary 2} were to be false, then \eqref{eq: max over many} would be strictly negative, which would be a contradiction.

Write $\mu_{(k)} \in \mathfrak{U}$ to be such that 
\begin{align}
\lsup{N} N^{-1} \log \mathbb{P}\big( d_W(\hat{\mu}^N(\mathbf{z},\mathbf{G}) ,  \mu_{(k)}) \geq k^{-1} \big)  = 0.
\end{align}
Let $\mu \in \mathfrak{U}$ be any measure such that for some subsequence $(p_k)_{k\geq 1}$, $\lim_{k\to\infty}\mu_{(p_k)} = \mu$ (this must be possible because $\mathfrak{U}$ is compact).

We next claim that
\begin{align}
\lim_{\epsilon \to 0^+} \linf{N} N^{-1}\log  \mathbb{P}\big( d_W\big(\hat{\mu}^N(\mathbf{z},\mathbf{G}) ,  \mu \big)  \leq \epsilon \big) = -   \mathcal{I}(\mu) 
+ \Gamma(\mu)   \label{eq: to shown J mu minimum}
\end{align}
Indeed writing $\mathcal{A}_{\epsilon} = \big\lbrace  d_W\big(\hat{\mu}^N(\mathbf{z},\mathbf{G}) ,  \mu \big)  \leq \epsilon  \big\rbrace$,
\begin{align}
  \mathbb{P}\big( d_W\big(\hat{\mu}^N(\mathbf{z},\mathbf{G}) ,  \mu \big)  \leq \epsilon \big) =& \mathbb{E}^{\gamma}\bigg[ \int_{\mathbb{R}^{MN}} P^N_{\mathbf{J},\mathbf{x}}(\mathcal{A}_{\epsilon}) \rho^N_{\mathbf{J}}(\mathbf{x}) d\mathbf{x} \bigg] \\
  =& \mathbb{E}^{\gamma}\bigg[ \int_{\mathbb{R}^{MN}} \mathbb{E}^{P^N_{\mathbf{x}}}\big[ \exp\big( N \Gamma(\hat{\mu}^N) \big)  \chi\lbrace \mathcal{A}_{\epsilon}\rbrace \big]\rho^N_{\mathbf{J}}(\mathbf{x}) d\mathbf{x} \bigg] \\
    =& \int_{\mathbb{R}^{MN}}  \mathbb{E}^{\gamma}\bigg[\mathbb{E}^{P^N_{\mathbf{x}}}\big[ \exp\big( N \Gamma(\hat{\mu}^N) \big)  \chi\lbrace \mathcal{A}_{\epsilon}\rbrace \big]\rho^N_{\mathbf{J}}(\mathbf{x}) \bigg] d\mathbf{x}  \\
       =& \int_{\mathbb{R}^{MN}}  \mathbb{E}^{\gamma}\bigg[ \mathbb{E}^{\gamma}\bigg[\mathbb{E}^{P^N_{\mathbf{x}}}\big[ \exp\big( N \Gamma(\hat{\mu}^N) \big)  \chi\lbrace \mathcal{A}_{\epsilon}\rbrace \big]\rho^N_{\mathbf{J}}(\mathbf{x}) \; \bigg| \; \mathbf{G}_0 \bigg] \bigg]d\mathbf{x}  
\end{align}
and in this last step we first perform the conditional expectation, for $\gamma$ conditioned on the values of $\lbrace G^{p,j}_0 \rbrace_{j\in I_N \fatsemi p\in I_M}$.

Now, recall that
\[
\rho^N_{\mathbf{J}}(\mathbf{z}_0) = (Z^N_{\mathbf{J}})^{-1} \chi\big\lbrace \hat{\mu}^N(\mathbf{z}_0,\mathbf{G}_0) \in B_{\delta_N}(\kappa) \big\rbrace .
\]
Furthermore, writing
\[
u_N =  N^{-1}\log \mathbb{E}[Z^N_{\mathbf{J}}],
\]
our assumption on the initial condition dictates that for any $\delta > 0$,
\begin{equation}
 \lsup{N} N^{-1} \log \mathbb{P}\big( \big| N^{-1}\log Z^N_{\mathbf{J}} - u_N \big| \geq \delta \big) < 0.
\end{equation}
Next, we claim that
\begin{align}
\lim_{\epsilon \to 0^+} \inf_{\nu \in \mathfrak{U} \cap \mathcal{A}_{\epsilon}}\Gamma(\nu) = \Gamma(\mu).\label{eq: Approximate Gamma by Continuous Functions}
\end{align}
Indeed \eqref{eq: Approximate Gamma by Continuous Functions} is a consequence of Lemma \ref{Lemma Approximate Gamma by Continuous Functions}: this Lemma implies that $\Gamma$ can be approximated arbitrarily well by continuous functions over $\mathfrak{U}$.

We thus obtain that 
\begin{align}
&\lim_{\epsilon \to 0^+} \linf{N} \inf_{(\mathbf{z}_0,\mathbf{G}_0)}N^{-1}\log  \big( d_W\big(\hat{\mu}^N (\mathbf{z},\mathbf{G}) ,  \mu \big)  \leq \epsilon \big) \\ =& \Gamma(\mu) +  \lim_{\epsilon \to 0^+} \linf{N} \big\lbrace- u_N + N^{-1} \log \int_{\mathbb{R}^{MN}}  \mathbb{E}^{\gamma}\big[ Q^N_{\mathbf{x},\mathbf{G}_0}\big( \mathcal{A}_{\epsilon} \big) \big] \chi \big\lbrace \hat{\mu}^N(\mathbf{z}_0, \mathbf{G}_0) \in B_{\delta_N}(\kappa) \big\rbrace d\mathbf{x} \big\rbrace\\
=&   \Gamma(\mu) - \lim_{\epsilon \to 0^+}\inf_{\nu \in \mathcal{A}_{(\epsilon)}} \mathcal{I}(\nu),
\end{align}
since (by definition)
\[
N^{-1} \log \int_{\mathbb{R}^{MN}}  \mathbb{E}^{\gamma}\big[ \chi \big\lbrace \hat{\mu}^N(\mathbf{z}_0, \mathbf{G}_0) \in B_{\delta_N}(\kappa) \big] d\mathbf{z}_0 = u_N,
\]
and we have employed the uniform lower bound in \eqref{LDP lower bound}. The lower semi-continuity of $\mathcal{I}$ implies that
\[
\lim_{\epsilon \to 0^+}\inf_{\nu \in \mathcal{A}_{(\epsilon)}} \mathcal{I}(\nu) = \mathcal{I}(\mu).
\]
We thus obtain \eqref{eq: to shown J mu minimum}, as required.

Next, we must show that $\mathcal{J}(\mu) = \mathcal{I}(\mu) - \Gamma(\mu)$ (recall the definition of $ \mathcal{J}(\mu)$ in \eqref{eq: J rate function}. Now
\begin{align}
\frac{dS_{\mu,\mathbf{z}_0,\mathbf{g}_0}}{dQ_{\mu,\mathbf{z}_0,\mathbf{g}_0}}\bigg|_{\mathcal{F}_T} = \exp\bigg( \sum_{p\in I_M} \int_0^T \sigma_s^{-2}  \big(g^{p}_s- \tau^{-1} z^p_s   \big)dz^{p}_s -\frac{1}{2}\sigma_s^{-2}  \big(g^{p}_s- \tau^{-1}z^p_s   \big)^2 ds \bigg)
\end{align}
Substituting this identity into the proposed rate function definition in \eqref{eq: J rate function},
\begin{align}
 \mathbb{E}^{\kappa}\bigg[ \mathcal{R}\big(\mu_{\mathbf{z}_0,\mathbf{g}_0} || S_{\mu,\mathbf{z}_0,\mathbf{g}_0} \big) \bigg] = \mathcal{I}(\mu) - \Gamma(\mu),
 \end{align}
as required.

The above reasoning dictates that there must be at least one $\xi$ such that $\mathcal{J}(\xi) = 0$. In fact, it is proved in Lemma \ref{Lemma at most one Fixed Point} that there can only be one measure $\xi$ such that $\mathcal{J}(\xi) = 0$. Furthermore it follows from Lemma \ref{Lemma at most one Fixed Point}, over small enough time increments, the mapping $\Phi_t$ must be a contraction. This implies \eqref{eq: convergence xi n approximations}.
\end{proof}

\begin{lemma}\label{Lemma Boundedness of Gamma}
There exists a constant $C > 0$ such that
\begin{align}
\lsup{N}N^{-1}\log \mathbb{P}\big(  \Gamma^N_{\mathbf{J},T}(\mathbf{z}) \geq C  \big) < 0.
\end{align}
\end{lemma}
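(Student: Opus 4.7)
The plan is to exploit a Girsanov-style cancellation: substituting the SDE $dz^{p,j}_t = (G^{p,j}_t - \tau^{-1} z^{p,j}_t)\,dt + \sigma_t\, dW^{p,j}_t$ into $\Gamma^N_{\mathbf{J},T}$, the integral $\int \sigma_s^{-2}(G-\tau^{-1}z)\,dz$ splits into a Riemann piece $\int \sigma_s^{-2}(G-\tau^{-1}z)^2\,ds$ and a stochastic piece $\int \sigma_s^{-1}(G-\tau^{-1}z)\,dW$. Combining the Riemann piece with the $-\tfrac12 \sigma_s^{-2}(G-\tau^{-1}z)^2$ already present yields
\[
\Gamma^N_{\mathbf{J},T}(\mathbf{z}) = \tfrac{1}{2} N^{-1}\!\!\sum_{j,p} \int_0^T \sigma_s^{-2}\big(G^{p,j}_s-\tau^{-1}z^{p,j}_s\big)^2\,ds \;+\; N^{-1}\!\!\sum_{j,p} \int_0^T \sigma_s^{-1}\big(G^{p,j}_s-\tau^{-1}z^{p,j}_s\big)\,dW^{p,j}_s,
\]
and I will bound the two terms separately on a common good event.

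To bound the Lebesgue term I will observe that $\sum_j (G^{p,j}_s)^2 \leq \|\mathcal{J}_N\|^2 \sum_k \lambda(z^{p,k}_s)^2$, together with the Lipschitz bound $\lambda(x)^2 \leq 2 C_\lambda^2 x^2 + 2\lambda(0)^2$. Lemma \ref{Lemma bound on operator norm} supplies a constant $c$ such that $\|\mathcal{J}_N\|\leq c$ with probability $1 - e^{-NL}$; the assumption that $\kappa$ has finite second moment, combined with \eqref{eq: bounded moment 1}--\eqref{eq: bounded moment 2}, ensures that $\sup_p N^{-1}\sum_j (z^{p,j}_0)^2 \leq \mathbb{E}^\kappa[(z^p_0)^2] + 1$ also with super-exponential probability under $\rho^N_{\mathbf{J}}$; and Lemma \ref{Lemma bound norm of z p j t} then propagates these two bounds forward to yield a deterministic constant $\ell$ with $\sup_{t\in[0,T]} N^{-1}\sum_j (z^{p,j}_t)^2 \leq \ell$ off an event of exponentially small probability. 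On the intersection $\mathcal{E}$ of these events the Lebesgue term is bounded by a deterministic constant $C_1$.

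For the martingale piece $M^N_T$ I will use the Dambin--Dubins--Schwarz time change, exactly as in the proof of Lemma \ref{Lemma bound norm of z p j t}. Its quadratic variation satisfies
\[
[M^N]_T = N^{-2}\sum_{j,p}\int_0^T \sigma_s^{-2}(G^{p,j}_s - \tau^{-1}z^{p,j}_s)^2\,ds \leq C_1'\, N^{-1}
\]
on $\mathcal{E}$, so $M^N_T = w([M^N]_T)$ for a standard Brownian motion $w$, and the Gaussian tail together with the reflection principle gives $\mathbb{P}(|M^N_T|\geq 1,\,\mathcal{E}) \leq \exp(-N/(2 C_1'))$. Taking $C = C_1/2 + 1$ and intersecting with $\mathcal{E}$ via a union bound yields the claim. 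The only non-routine point is the first paragraph's reliance on Lemma \ref{Lemma bound norm of z p j t}, whose hypotheses in turn require one to verify the super-exponential boundedness of the initial second moments under $\rho^N_{\mathbf{J}}$ --- this is where assumptions \eqref{eq: bounded moment 1}--\eqref{eq: bounded moment 2} enter, and it is the main (mild) obstacle.
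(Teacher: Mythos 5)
Your proof is correct and, broadly, it mirrors the paper's own strategy: control the disorder via the operator-norm bound (Lemma~\ref{Lemma bound on operator norm}), control the state via Lemma~\ref{Lemma bound norm of z p j t}, and then kill the stochastic integral via the Dambis--Dubins--Schwarz time-change with a Gaussian tail. Nevertheless, the algebraic decomposition you use is genuinely cleaner, and arguably fills a small gap in the paper's presentation. The paper's proof simply discards the nonpositive term $-\tfrac{1}{2}\sigma_s^{-2}(G-\tau^{-1}z)^2\,ds$ and applies DDS directly to the remaining $\int \sigma_s^{-2}(G-\tau^{-1}z)\,dz$. But under the coupled law (which is the one actually relevant, since the lemma is invoked to control the error term $\mathbb{E}^{P^N_{\mathbf{z}_0}}[e^{N\Gamma}\chi\{\Gamma>C\}] = P^N_{\mathbf{J},\mathbf{z}_0}(\Gamma>C)$ arising from the Girsanov change of measure), $dz^{p,j}_t = (G^{p,j}_t - \tau^{-1}z^{p,j}_t)\,dt + \sigma_t\,dW^{p,j}_t$, so $\int \sigma^{-2}(G-\tau^{-1}z)\,dz$ is a semimartingale and \emph{not} a local martingale, and the paper's DDS step is valid only after one has already performed the Riemann/martingale split you perform explicitly. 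Your exhibiting $\Gamma^N_{\mathbf{J},T}$ as $\tfrac12\int\sigma^{-2}(G-\tau^{-1}z)^2\,ds$ plus a genuine martingale makes the structure transparent, and bounding each piece on the same good event gives the result with no further fuss.

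Two small imprecisions to flag. First, the bound on the initial second moments $N^{-1}\sum_j(z^{p,j}_0)^2$ does not really follow from assumptions~\eqref{eq: bounded moment 1}--\eqref{eq: bounded moment 2} (those concern $Z^N_{\mathbf{J}}$), but from the fact that the density $\rho^N_{\mathbf{J}}$ is supported on $\{d_W(\hat\mu^N(\mathbf{z}_0,\mathbf{G}_0),\kappa)\leq\delta_N\}$; and since $d_W$ is a $1$-Wasserstein constraint, passing from it to control of empirical second moments requires a short additional argument (the same implicit step appears in the paper, so this is not a fault specific to you, but it is worth naming correctly). Second, the quadratic variation $[M^N]_T$ is path-dependent and the bound $[M^N]_T\leq C_1'N^{-1}$ holds only on $\mathcal{E}$; strictly one should localize at the first exit time from the good event before invoking DDS, exactly as the paper does in the proof of Lemma~\ref{Lemma bound norm of z p j t}. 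Neither remark affects the conclusion.
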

\begin{proof}
For any $\ell > 0$,
\begin{multline}
\lsup{N}N^{-1}\log \mathbb{P}\bigg(  \Gamma^N_{\mathbf{J},T}(\mathbf{z}) \geq C  \bigg) \leq
\max\bigg\lbrace \lsup{N}N^{-1}\log \mathbb{P}\big( \norm{\mathcal{J}_N} > \ell  \big)  , \\  \lsup{N}N^{-1}\log \mathbb{P}\big( \norm{\mathcal{J}_N} \leq \ell ,  \Gamma^N_{\mathbf{J},T}(\mathbf{z}) \geq C  \big)  \bigg\rbrace
\end{multline}
Thanks to Lemma \ref{Lemma bound on operator norm}, $ \lsup{N}N^{-1}\log \mathbb{P}\big( \norm{\mathcal{J}_N} > \ell  \big) $ converges to $-\infty$ as $\ell \to \infty$. It thus suffices that we prove that, for abitrary $\ell >0$, there exists $C_{\ell} > 0$ such that
\begin{align}
 \lsup{N}N^{-1}\log \mathbb{P}\big( \norm{\mathcal{J}_N} \leq \ell ,  \Gamma^N_{\mathbf{J},T}(\mathbf{z}) \geq C_{\ell}  \big)  < 0.
\end{align}
Now, leaving out the negative-semi-definite terms, we find that 
\begin{align}
\Gamma^N_{\mathbf{J},T}(\mathbf{z}) \leq N^{-1}\sum_{j\in I_N \fatsemi p\in I_M} \int_0^T \sigma_s^{-2}  \big(\tilde{G}^{p,j}_s- \tau^{-1} y^{p,j}_s   \big)dy^{p,j}_s
\end{align}
Furthermore, writing $h^p_s = \sigma_s^{-2}  \big(\tilde{G}^{p,j}_s- \tau^{-1} y^{p,j}_s   \big)$, and assuming that $ \norm{\mathcal{J}_N} \leq \ell$, one finds that 
\begin{align}
\sum_{j\in I_N}(h^{p,j}_s)^2 \leq & 2 \sigma_s^{-4} \sum_{j\in I_N} \big\lbrace (\tilde{G}^{p,j}_s)^2 + \tau^{-2} (y^{p,j}_s)^2 \big\rbrace \\
\leq & 2 \sigma_s^{-4} \sum_{j\in I_N} \big\lbrace \ell \lambda(y^{p,j}_s)^2+ \tau^{-2} (y^{p,j}_s)^2 \big\rbrace \\
\leq & 2 \sigma_s^{-4} \sum_{j\in I_N} \big\lbrace \ell C_{\lambda}^2 (y^{p,j}_s)^2+ \tau^{-2} (y^{p,j}_s)^2 \big\rbrace .
\end{align}
We thus find that, thanks to Lemma , for any $L>0$ there exists a constant $\bar{C}_L > 0$ such that
\begin{align}
\lsup{N} N^{-1}\log \mathbb{P}\big(  \sup_{p\in I_M}\sum_{j\in I_N}(h^{p,j}_s)^2 \geq N \bar{C}_L \big) \leq -L.
\end{align}
Write
\[
\mathcal{H}_N = \bigg\lbrace  \sup_{p\in I_M}\sum_{j\in I_N}(h^{p,j}_s)^2 \leq N \bar{C}_L \bigg\rbrace .
\]
We thus find that, 
\begin{multline}
\lsup{N} N^{-1}\log \mathbb{P}\big( \norm{\mathcal{J}_N} \leq \ell \;, \; \mathcal{H}_N \; ,\; \Gamma^N_{\mathbf{J},T}(\mathbf{z}) \geq C_{\ell}  \big) \\
\leq \max\bigg\lbrace \lsup{N} N^{-1}\log \mathbb{P}\big( \mathcal{H}_N^c \big), \\
\lsup{N} N^{-1}\log \mathbb{P}\big( \norm{\mathcal{J}_N} \leq \ell \;, \; \mathcal{H}_N \; ,\; \Gamma^N_{\mathbf{J},T}(\mathbf{z}) \geq C_{\ell}  \big) \bigg\rbrace \\
\leq  \max\bigg\lbrace -L , 
\lsup{N} N^{-1}\log \mathbb{P}\big( \norm{\mathcal{J}_N} \leq \ell \;, \; \mathcal{H}_N \; ,\; \Gamma^N_{\mathbf{J},T}(\mathbf{z}) \geq C_{\ell}  \big) \bigg\rbrace
\end{multline}
Furthermore, using the Dambins-Dubins Schwarz Theorem \cite{Karatzas1991}, and writing $w(t)$ to be 1D Brownian Motion,
\begin{multline}
\lsup{N} N^{-1}\log \mathbb{P}\big( \norm{\mathcal{J}_N} \leq \ell \;, \; \mathcal{H}_N \; ,\; \Gamma^N_{\mathbf{J},T}(\mathbf{z}) \geq C_{\ell}  \big)  \\
\leq \lsup{N} N^{-1}\log \mathbb{P}\big( \sup_{s\in [0,T]} \big| w\big( \bar{C}N^{-1} s \big) \big| \geq C_{\ell} \big) \leq - L,
\end{multline}
as long as $C_{\ell}$ is sufficiently large, using standard properties of Brownian Motion.
\end{proof}

We now prove that the rate function $\mathcal{J}$ has a unique minimizer (i.e. we prove Lemma \ref{Lemma unique minimizer rate function}).
\begin{lemma}\label{Lemma at most one Fixed Point}
There exists a unique fixed point $\xi$ of $\Phi$ in $\mathcal{U}$. Furthermore $\xi$ is such that for any $\mu \in \mathcal{U}$, writing $\xi_{(1)} = \mu$ and $\xi_{(n+1)} = \Phi(\xi_{(n)})$, it holds that
\begin{align}
\xi = \lim_{n\to\infty} \xi_{(n)} \label{convergence xi (n)}
\end{align}
\end{lemma}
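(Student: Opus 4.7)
The plan is to show that some iterate $\Phi^n$ is a strict contraction on $\mathcal{U}$ with respect to an $L^2$-Wasserstein-type distance on the $z$-marginals; Banach's fixed point theorem applied to $\Phi^n$ then yields both the uniqueness of the fixed point and the convergence in \eqref{convergence xi (n)}. Existence of at least one fixed point $\xi$ is already established in the proof of Theorem \ref{Theorem Coupled Large Deviation Principle}, so I only need to address uniqueness and convergence. My first observation is that $\Phi(\mu)$ is determined entirely by $\mu^{(1)}$, since the covariance and conditional-mean formulas \eqref{eq: W definition beta}--\eqref{eq: conditional gaussian 1 beta} only involve expectations under the first marginal; hence uniqueness of the $z$-marginal forces uniqueness of the full joint law. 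Accordingly I will work with
\[
D_T^2(\mu,\nu) := \inf_{\zeta}\mathbb{E}^{\zeta}\!\left[\sum_{p\in I_M}\int_0^T (z^{\mu,p}_s - z^{\nu,p}_s)^2\,ds\right],
\]
the infimum being over couplings $\zeta$ of $\mu^{(1)},\nu^{(1)}$ with $z^{\mu}_0 = z^{\nu}_0$ almost surely (this is always possible because every element of $\mathcal{U}$ has time-$0$ marginal $\kappa$).

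Next I would couple $\Phi(\mu)$ and $\Phi(\nu)$ as follows: share initial data $(z_0,g_0)\sim\kappa$ and Brownian motions $(W^p)_{p\in I_M}$, and take $(G^\mu_{[0,T]},G^\nu_{[0,T]})$ to be jointly Gaussian conditional on $g_0$, with cross-covariance constructed from an $\epsilon$-optimal $\zeta$ via the recipe in the proof of Lemma \ref{beta mapping continuous}. Standard Gaussian conditioning identities (the conditional variance of $G^\mu_s - G^\nu_s$ is bounded by the unconditional one) combined with $|\lambda(x)-\lambda(y)|\leq C_\lambda|x-y|$ give a conditional-variance bound of the form $C_\lambda^2 \mathbb{E}^{\zeta}[(z^{\mu,p}_s-z^{\nu,p}_s)^2]$. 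The conditional-mean difference $\mathfrak{m}^p_s(\mu,g_0)-\mathfrak{m}^p_s(\nu,g_0)$ is linear in $g_0$ by \eqref{eq: conditional gaussian 1 beta}, with coefficients controlled by $\mathfrak{V}_0^{-1}$ and by $\mathbb{E}^{\zeta}[(z^{\mu,p}_s - z^{\nu,p}_s)^2]^{1/2}$ through a Lipschitz bound on $\mu\mapsto\mathbb{E}^{\mu}[\lambda(z^p_s)\lambda(z^a_0)]$; squaring and averaging over $g_0$ absorbs the resulting $(1+\|g_0\|)^2$-factor into a finite constant, yielding
\[
\mathbb{E}^{\kappa}\!\bigl[\mathbb{E}\bigl[(G^{\mu,p}_s - G^{\nu,p}_s)^2\,\big|\,g_0\bigr]\bigr] \;\leq\; K\,\mathbb{E}^{\zeta}\!\bigl[(z^{\mu,p}_s-z^{\nu,p}_s)^2\bigr],
\]
with $K$ depending only on $C_\lambda$, $\det\mathfrak{V}_0$, and $\mathbb{E}^{\kappa}[\|g_0\|^2]<\infty$.

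Because initial conditions and Brownian motions are shared, the SDE \eqref{eq: R nu definition} is linear in $z$ with forcing $G^{\mu,p}_r$, so variation-of-parameters and Cauchy--Schwarz yield, writing $F(r) := \sum_p \mathbb{E}^{\zeta}[(z^{\mu,p}_r - z^{\nu,p}_r)^2]$,
\[
F_{\Phi\mu,\Phi\nu}(s) \;=\; \mathbb{E}^\kappa\!\left[\sum_p\Bigl(\int_0^s e^{-(s-r)/\tau}(G^{\mu,p}_r - G^{\nu,p}_r)\,dr\Bigr)^2\right] \;\leq\; K_1\int_0^s F(r)\,dr,
\]
for a finite constant $K_1$. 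Iterating this pointwise estimate $n$ times in Picard-Lindel\"of style gives
\[
F_{\Phi^n\mu,\Phi^n\nu}(s) \;\leq\; K_1^n\int_0^s \frac{(s-r)^{n-1}}{(n-1)!}F(r)\,dr,
\]
whence $D_T^2(\Phi^n\mu,\Phi^n\nu) \leq \frac{(K_1T)^n}{n!}\,D_T^2(\mu,\nu)$. Since $(K_1T)^n/n!\to 0$, for $n$ sufficiently large $\Phi^n$ is a strict contraction. Applying Banach's theorem on $(\mathcal{U}_a, D_T)$ with $a$ large enough that $\Phi$ preserves $\mathcal{U}_a$ (which follows from the regularity estimates developed in Section \ref{Section Regularity Compactness}, in particular Lemmas \ref{Lemma y exponential tightness}--\ref{Lemma Exponential Tightness mu G} and \ref{Lemma beta nu well defined}) gives a unique fixed point of $\Phi^n$ on $\mathcal{U}_a$, hence the unique fixed point $\xi$ of $\Phi$, together with the $D_T$-convergence $\xi_{(n)}\to\xi$; Lemma \ref{Lemma Equivalent Metric} then transfers this to weak convergence, establishing \eqref{convergence xi (n)}.

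The hard part will be the conditional Gaussian estimate on $\mathbb{E}[(G^{\mu,p}_s-G^{\nu,p}_s)^2\,|\,g_0]$: one must combine the conditional-variance contribution and the conditional-mean contribution, absorb the $(1+\|g_0\|)^2$-factor uniformly in $\mu,\nu\in\mathcal{U}$, and verify that the resulting Lipschitz constant $K$ depends only on $\kappa$, $C_\lambda$, and $\det\mathfrak{V}_0$ (not on $\mu,\nu$). A secondary technicality is confirming that $\Phi$ preserves some $\mathcal{U}_a$ so that Banach's theorem applies on a complete metric space; this should follow from the moment and continuity estimates already in hand, although careful bookkeeping of the constants is required.
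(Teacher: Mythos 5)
Your proposal is correct and follows essentially the same mechanism as the paper: couple the two iterates by sharing initial data and Brownian motions, represent the two Gaussian fields jointly via the construction underlying Lemma~\ref{beta mapping continuous}, exploit the Lipschitz bound $|\lambda(x)-\lambda(y)|\leq C_\lambda|x-y|$ to control the field difference by the $z$-difference, and then use the linear SDE \eqref{eq: R nu definition} and Gronwall to get a contraction in an $L^2$-Wasserstein distance on the $z$-marginals. Where you differ is in packaging: the paper proves $d^{(2)}_t(\tilde\Phi_t(\mu),\tilde\Phi_t(\nu))\leq cC_\lambda\,t\,d^{(2)}_t(\mu,\nu)$ and gets contraction only on a short time interval, then chains time segments (which implicitly relies on causality of the Gaussian construction, i.e.\ that $\beta_\mu$ restricted to $[0,t^*]$ depends on $\mu$ only through its $[0,t^*]$-marginal), whereas you iterate the pointwise Gronwall estimate $n$ times to obtain the $((K_1T)^n/n!)$ bound and apply Banach to $\Phi^n$ on the full interval in one shot. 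That is a legitimate and arguably cleaner route; it also buys you a small amount of robustness, since you never need to argue that a fixed point restricted to $[0,t^*]$ is a fixed point of the truncated map. A second place where you are more careful than the paper's own write-up: the paper's proof of this lemma works directly with the unconditional laws $\beta_\mu,\beta_\nu$, whereas in the connectivity-dependent Case~1 the map $\Phi$ is actually built from the conditional laws $\beta_{\mu,\mathbf{g}_0}$ of \eqref{eq: mps average beta}--\eqref{eq: Vps average beta}; you explicitly invoke the conditional estimate (the analogue of \eqref{eq: uniform d (2) conditoined}), handle the conditional-mean contribution that is linear in $g_0$, and absorb the resulting $(1+\|g_0\|)^2$ factor by averaging against $\kappa$ using $\mathbb{E}^\kappa[\|g_0\|^2]<\infty$, which is exactly the bookkeeping the paper's proof skips over. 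Your remaining caveats (that $D_T$ only separates $z$-marginals, so uniqueness of the joint law must be argued via the fact that $\Phi(\mu)$ depends on $\mu$ only through $\mu^{(1)}$; and that $\Phi$ must preserve some $\mathcal{U}_a$ to apply Banach on a complete space) are correctly identified and both do go through with the lemmas you cite.
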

\begin{proof}
We start by considering the following restricted map $\tilde{\Phi}: \mathfrak{Q} \to \mathfrak{Q}$. This is such that $\tilde{\Phi}(\mu) = \nu^{(1)}$, where $\nu = \Phi(\alpha)$ for any $\alpha \in \mathcal{U}$ such that $\alpha^{(1)} = \mu$. Define $d^{(2)}_t: \mathfrak{Q} \times \mathfrak{Q} \to \mathbb{R}^+$ analogously.

We are going to demonstrate that there is a constant $c > 0$ such that for all $\mu,\nu \in \mathfrak{Q}$,
\begin{align}
d^{(2)}_t \big( \tilde{\Phi}_t(\mu), \tilde{\Phi}_t(\nu) \big) \leq c \sqrt{t} d^{(2)}_t(\mu,\nu).
\end{align}
For any $\mu,\nu \in  \mathfrak{Q}$, we construct a particular $\zeta$ that is within $\eta \ll 1$ of realizing the infimum in the definition of the Wasserstein distance in \eqref{eq: zeta Wasserstein L squares def}. To do this, we employ the construction of Lemma \ref{beta mapping continuous}. Let $\mathbf{G}^{\mu} , \mathbf{G}^{\nu} $ be  $\mathcal{C}([0,T],\mathbb{R}^{M})$-valued random variables (in the same probability space), with joint probability law $\beta_{\mu,\nu}$. Then for Brownian motions $\big( W^p_t \big)_{p\in I_M}$, define
\begin{align}
dz^{\nu,p}_t &= \big( - \tau^{-1} z^{\nu,p}_t + G^{\nu,p}_t \big) dt + \sigma_t dW^p_t  \\
dz^{\mu,p}_t &= \big( - \tau^{-1} z^{\mu,p}_t + G^{\mu,p}_t \big) dt + \sigma_t dW^p_t .
\end{align}
The initial conditions are identical: $z^{\nu,p}_0 = z^{\mu,p}_0$. We immediately see that
\begin{align}
\frac{d}{dt}\big( z^{\nu,p}_t - z^{\mu,p}_t \big) = - \tau^{-1} \big( z^{\nu,p}_t - z^{\mu,p}_t \big) + G^{\nu,p}_t - G^{\mu,p}_t,
\end{align}
and hence
\begin{align}
\frac{d}{dt}\big( z^{\nu,p}_t - z^{\mu,p}_t \big)^2 &= -2 \tau^{-1} \big( z^{\nu,p}_t - z^{\mu,p}_t \big)^2 +2 \big( z^{\nu,p}_t - z^{\mu,p}_t \big) \big(G^{\nu,p}_t - G^{\mu,p}_t\big) \text{ and thus }\\
\big( z^{\nu,p}_t - z^{\mu,p}_t \big)^2 &= \int_0^t \exp\big(2 (s-t) / \tau \big) 2 \big( z^{\nu,p}_s - z^{\mu,p}_s \big) \big(G^{\nu,p}_s - G^{\mu,p}_s\big) ds.
\end{align}
It follows from this that there exists a constant $c > 0$ such that for all $t\in [0,T]$,
\begin{align}
d^{(2)}_t \big( \tilde{\Phi}_t(\mu) , \tilde{\Phi}_t(\nu) \big) &\leq  c t d^{(2)}_t\big( \beta_{\mu},\beta_{\nu}\big) \\
&\leq c C_{\lambda} t  d^{(2)}_t(\mu,\nu), \label{Lemma bound tilde Phi}
\end{align}
using Lemma \ref{beta mapping continuous}. Thus for small enough $t$, there is a unique fixed point of $\tilde{\Phi}_t$ (the mapping upto time $t$). Iterating this argument, we find a unique fixed point for $\tilde{\Phi}$. The uniqueness for $\tilde{\Phi}$ in turn implies uniqueness for $\Phi$, thanks to the identity in Lemma \ref{beta mapping continuous}. 

To see why \eqref{convergence xi (n)} holds. First consider arbitrary $\nu_{(1)} \in \mathfrak{Q}$, and define $\nu_{(n+1)} = \tilde{\Phi}(\nu_{(n)})$. The above bound in \eqref{Lemma bound tilde Phi} implies that necessarily $\big(\nu_{(n)} \big)_{n\geq 1}$ is Cauchy. It then immediate follows that for any $\xi_{(1)} \in \mathcal{U}$ with first marginal equal to $\nu_{(1)}$, and writing $\xi_{(n+1)} = \Phi(\xi_{(n)})$, it must be that $\big(\xi_{(n)} \big)_{n\geq 1}$ is Cauchy.

Finally we note that $d^{(2)}$ metrizes weak convergence, thanks to Lemma \ref{Lemma Equivalent Metric}.

\end{proof}

\appendix
\section{Bounding Fluctuations of the Noise}
For the processes $(\mathbf{y}^j_{[0,T]})_{j\in I_N}$ that are defined in \eqref{eq: y dynamics}, define the empirical measure
\begin{align}
\hat{\mu}^N(\mathbf{y}) = N^{-1}\sum_{j\in I_N} \delta_{\mathbf{y}^j_{[0,T]}} \in  \mathcal{P}\big( \mathcal{C}([0,T], \mathbb{R}^M) \big) .
\end{align}
Next, we bound the probability of the empirical being in the set $\mathfrak{Q}_a$, defined in \eqref{eq: set Q L}, which we recall
\begin{multline} \label{eq: set Q L 2}
\mathcal{Q}_{\mathfrak{a}} = \bigg\lbrace \mu \in \mathcal{P}\big( \mathcal{C}([0,T], \mathbb{R}^M) \big) \; : \;  \sup_{m \geq \mathfrak{a}} \sup_{0\leq i \leq m}\mathbb{E}^{\mu}\big[ \sup_{M \in I_M} (w^p_{t^{(m)}_{i+1}} - w^p_{t^{(m)}_i})^2\big] > \Delta_m^{1/4} \text{ and } \\
\mu \in \mathcal{K}_{\mathfrak{a}} \text{ and }\sup_{p\in I_M} \mathbb{E}^{\mu}[ \sup_{t\in [0,T]}(y^p_t)^2 \big] \leq \mathfrak{a} \bigg\rbrace
\end{multline}
where $\Delta_m = T / m$ and $t^{(m)}_i = iT/m$. The main result of this section is the following.
\begin{lemma} \label{Lemma Brownians Exponentially Tight}
For any $L > 0$, there exists $\mathfrak{a} \in \mathbb{Z}^+$ such that for all $N\geq 1$, 
\begin{align}
\sup_{(\mathbf{z}_0,\mathbf{g}_0)} N^{-1} \log P^N_{\mathbf{z}_0}\big( \hat{\mu}^N(\mathbf{y}) \notin \mathcal{Q}_{\mathfrak{a}} \big) \leq - L.
\end{align}
\end{lemma}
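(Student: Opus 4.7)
The plan is to decompose the event $\{\hat{\mu}^N(\mathbf{y}) \notin \mathcal{Q}_{\mathfrak{a}}\}$ via a union bound into the three sub-events corresponding to the three defining conditions of $\mathcal{Q}_{\mathfrak{a}}$ in \eqref{eq: set Q L}, and to bound each by $\exp(-NL)$ for $L$ that becomes arbitrarily large once $\mathfrak{a}$ is chosen large. Conditionally on $\mathbf{z}_0$, the paths $\{\mathbf{y}^j_{[0,T]}\}_{j\in I_N}$ are independent under $P^N_{\mathbf{z}_0}$, and the martingale parts $M^{p,j}_t := \int_0^t \sigma_s dW^{p,j}_s$ are iid-in-$j$, centered Gaussian with quadratic variation in $[\underline{\sigma}^2 t, \bar{\sigma}^2 t]$, which makes the classical Chernoff technique directly available.

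For the bound $\sup_p \mathbb{E}^{\hat{\mu}^N(\mathbf{y})}[\sup_t (y^p_t)^2] \leq \mathfrak{a}$, I would use $(y^{p,j}_t)^2 \leq 2(z^{p,j}_0)^2 + 2\sup_t (M^{p,j}_t)^2$. The initial contribution $N^{-1}\sum_j (z^{p,j}_0)^2$ is uniformly bounded for $(\mathbf{z}_0,\mathbf{g}_0) \in \mathcal{Y}^N$ via proximity to $\kappa$ (with its finite second moment). For the martingale part, Doob's maximal inequality together with the Dambins--Dubins--Schwarz time-change gives $\mathbb{E}[\exp(\theta \sup_t (M^{p,j}_t)^2)] < \infty$ for some $\theta>0$, and the iid Chernoff bound then yields $\exp(-NL)$ once $\mathfrak{a}$ is sufficiently large.

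The modulus condition is the main obstacle. For each pair $(m,i)$ with $m \geq \mathfrak{a}$ and $0 \leq i < m$, the random variables $X^j_{m,i} := \sup_{p \in I_M}(y^{p,j}_{t^{(m)}_{i+1}} - y^{p,j}_{t^{(m)}_i})^2$ are iid across $j$, each dominated by $\sum_p(\Delta y^{p,j})^2$ with $\Delta y^{p,j}$ Gaussian of variance at most $\bar{\sigma}^2 \Delta_m$. Taking $\theta = 1/(4\bar{\sigma}^2 \Delta_m)$ gives $\mathbb{E}[\exp(\theta X^j_{m,i})] \leq 2^{M/2}$, so Chernoff yields
\[
N^{-1}\log P^N_{\mathbf{z}_0}\Big( N^{-1}\sum_j X^j_{m,i} \geq \Delta_m^{1/4} \Big) \leq \tfrac{M}{2}\log 2 - \tfrac{1}{4\bar{\sigma}^2}(m/T)^{3/4}.
\]
Since the Chernoff exponent grows like $m^{3/4}$ while the union bound over $i \in \{0,\ldots,m-1\}$ contributes only a polynomial factor $m$, summing over all $m \geq \mathfrak{a}$ still produces a total bound of $\exp(-NL)$ once $\mathfrak{a}$ is large. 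This $m^{3/4}$ margin is exactly what makes the slack $\Delta_m^{1/4}$ (rather than $\Delta_m$) in the definition of $\mathcal{Q}_{\mathfrak{a}}$ precisely the right choice to absorb the combinatorial union-bound factors.

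Finally, the compactness condition $\hat{\mu}^N(\mathbf{y}) \in \mathcal{K}_{\mathfrak{a}}$ should follow from the previous two: $\mathcal{K}_{\mathfrak{a}}$ (from Lemma~\ref{Lemma compact K L}) is a Prokhorov-compact subset of $\mathcal{P}(\mathcal{C}([0,T],\mathbb{R}^M))$ characterized by a uniform second-moment tail bound and a uniform modulus-of-continuity estimate, both of which are encoded in the other two conditions by Arzel\`a--Ascoli.
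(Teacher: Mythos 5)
Your proof follows the same overall strategy as the paper: a union bound over the three defining conditions of $\mathcal{Q}_{\mathfrak{a}}$, each controlled to order $e^{-NL}$. Your treatment of the second-moment condition (split off $(z^{p,j}_0)^2$, bound the martingale part via a Chernoff/exponential-moment estimate) and of the modulus condition (Chernoff at scale $\theta = 1/(4\bar{\sigma}^2\Delta_m)$, exponent $\sim m^{3/4}$ beating the polynomial union-bound factor over $i$ and summable over $m$) matches the paper in substance, the latter corresponding to the paper's Lemma~\ref{Lemma bound fluctuations of Brownian}.

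The gap is in your handling of the compactness condition $\hat{\mu}^N(\mathbf{y}) \in \mathcal{K}_{\mathfrak{a}}$. You assert that this ``should follow from the previous two'' conditions via Arzel\`a--Ascoli, on the grounds that $\mathcal{K}_{\mathfrak{a}}$ is characterized by moment and modulus bounds. That is not how $\mathcal{K}_{\mathfrak{a}}$ is defined: Lemma~\ref{Lemma compact K L} constructs it as $\mathcal{K}_L=\bigcap_{m\geq m_L}\mathcal{L}_{m,\delta_m}$, where $\mathcal{L}_{m,\delta}$ is the set of laws assigning at least mass $\delta$ to a specific compact path set $\mathcal{C}_{m^{-1}}$, and the $\mathcal{C}_{m^{-1}}$ and $\delta_m$ are chosen in the course of that proof and never expressed through the $L^2$-type quantities appearing in the other two conditions of $\mathcal{Q}_{\mathfrak{a}}$. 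Even granting that your conditions 2 and 3 force membership in \emph{some} Prokhorov-compact family, that does not place $\hat{\mu}^N(\mathbf{y})$ inside the particular set $\mathcal{K}_{\mathfrak{a}}$. No such set-theoretic implication is needed, though: the union-bound structure only requires a separate exponential estimate on the event $\{\hat{\mu}^N(\mathbf{y})\notin\mathcal{K}_{\mathfrak{a}}\}$, and Lemma~\ref{Lemma compact K L} supplies exactly that bound, uniformly over $(\mathbf{z}_0,\mathbf{g}_0)\in\mathcal{Y}^N$. Replace your Arzel\`a--Ascoli paragraph with a direct invocation of Lemma~\ref{Lemma compact K L} applied at level $L=\mathfrak{a}$, and the argument is complete.
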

\begin{proof}

Employing a union-of-events bound, or any $(\mathbf{z}_0,\mathbf{g}_0)\in \mathcal{Y}^N$,
\begin{multline}
N^{-1} \log P^N_{\mathbf{z}_0}\big( \hat{\mu}^N(\mathbf{y}) \notin \mathcal{Q}_{\mathfrak{a}} \big) \leq N^{-1}\log \bigg\lbrace P^N_{\mathbf{z}_0}\bigg( \sup_{p\in I_M} N^{-1}\sum_{j\in I_N} \sup_{t\in [0,T]}(y^{p,j}_t)^2  > \mathfrak{a} \bigg) \\ + 
P^N_{\mathbf{z}_0}\bigg( \sup_{0\leq t \leq \Delta_m} \sum_{j\in I_N}\sup_{0 \leq i \leq m-1} \sup_{p\in I_M} \big| y^{p,j}_{t + t^{(m)}_{i}} - y^{p,j}_{t^{(m)}_i} \big|^2 \geq Na \Delta_m \bigg)
 +  P^N_{\mathbf{z}_0}\big( \hat{\mu}^N(\mathbf{y}) \notin \mathcal{K}_{\mathfrak{a}} \big)  
 \bigg\rbrace . \label{eq: threeway decomposition}
 \end{multline}

 With a view to bounding the first term on the RHS, since $y^{p,j}_0 = z^{p,j}_0$,
 \[
(y^{p,j}_t)^2 \leq 2 \big( y^{p,j}_t - y^{p,j}_0 \big)^2 + 2 (z^{p,j}_0)^2 .
 \]
Thus for a positive constant $b > 0$,
\begin{multline}
\mathbb{E}^{P^N_{\mathbf{z}_0}}\bigg[ \exp\bigg( b\sup_{p\in I_M}  \sum_{j\in I_N} \sup_{t\in [0,T]}(y^{p,j}_t)^2    \bigg)\bigg] \\
\leq \mathbb{E}^{P^N_{\mathbf{z}_0}}\bigg[ \exp\bigg( 2b\sup_{p\in I_M} \sum_{j\in I_N}  (z^{p,j}_0)^2 +  2b\sum_{j\in I_N} \sup_{t\in [0,T]}(y^{p,j}_t -z^{p,j}_0)^2    \bigg)\bigg].
\end{multline}
Thus, thanks to Chernoff's Inequality,
\begin{align}
N^{-1}\log P^N_{\mathbf{z}_0}\bigg( \sup_{p\in I_M} N^{-1}\sum_{j\in I_N} \sup_{t\in [0,T]}(y^{p,j}_t)^2  > \mathfrak{a} \bigg)
\leq  \frac{2b}{N} \sup_{p\in I_M} \sum_{j\in I_N}  (z^{p,j}_0)^2 \\+ N^{-1}\log  \mathbb{E}^{P^N_{\mathbf{z}_0}}\bigg[\exp\bigg( 2b\sum_{j\in I_N} \sup_{t\in [0,T]}(y^{p,j}_t -z^{p,j}_0)^2    \bigg)\bigg] - b\mathfrak{a}.
\end{align}
The first term on the RHS is bounded for all $N$ and all $(\mathbf{z}_0,\mathbf{g}_0) \in \mathcal{Y}^N$. For the second term on the RHS, standard theory on stochastic processes implies that the exponential moment exists, as long as $b$ is small enough. Thus, taking
 $\mathfrak{a} \to \infty$, the RHS can be made arbitrarily small. We thus find that
 \begin{align}
 \lim_{\mathfrak{a}\to\infty} \sup_{N\geq 1}\sup_{(\mathbf{z}_0,\mathbf{g}_0) \in \mathcal{Y}^N} N^{-1} \log P^N_{\mathbf{z}_0}\bigg( \sup_{p\in I_M} N^{-1}\sum_{j\in I_N} \sup_{t\in [0,T]}(y^{p,j}_t)^2  > \mathfrak{a} \bigg) = -\infty . \label{eq: bound supremum temporary}
 \end{align}
The Lemma now follows from applying \eqref{eq: bound supremum temporary}, Lemma \ref{Lemma bound fluctuations of Brownian} and Lemma \ref{Lemma compact K L} to \eqref{eq: threeway decomposition}.
%
\end{proof}

The following result is well-known. Nevertheless we sketch a quick proof for clarity.
\begin{lemma} \label{Lemma compact K L}
For any $L > 0$, there exists a compact set $\mathcal{K}_{L}$ such that for all $N \geq 1$,
\begin{align}
\sup_{(\mathbf{z}_0,\mathbf{g}_0) \in \mathcal{Y}^N}N^{-1}\log P^N_{\mathbf{z}_0}\big( \hat{\mu}^N(\mathbf{y}) \notin \mathcal{K}_L \big) \leq - L
\end{align}
\end{lemma}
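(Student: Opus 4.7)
My plan is to build $\mathcal{K}_L$ via Prokhorov's theorem, so it suffices to find a family of compact sets in $\mathcal{C}([0,T],\mathbb{R}^M)$ that the paths $\mathbf{y}^j$ concentrate on with exponentially high probability, uniformly in $(\mathbf{z}_0,\mathbf{g}_0) \in \mathcal{Y}^N$. By Arzel\`a-Ascoli, for each pair $(c,\boldsymbol{\delta})$ with $\boldsymbol{\delta} = (\delta_k)_{k\geq 1}$ decreasing to $0$, the set
\begin{equation*}
K(c,\boldsymbol{\delta}) = \Big\{ y \in \mathcal{C}([0,T],\mathbb{R}^M) : \sup_{p\in I_M}|y^p_0| \leq c, \ \sup_{p\in I_M}\sup_{|s-t|\leq \delta_k} |y^p_s-y^p_t| \leq k^{-1} \ \forall k \geq 1 \Big\}
\end{equation*}
is compact. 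I will choose sequences $(c_n,\boldsymbol{\delta}^{(n)})$ carefully below, define $K_n := K(c_n,\boldsymbol{\delta}^{(n)})$, and set
\begin{equation*}
\mathcal{K}_L = \big\{ \mu \in \mathcal{P}(\mathcal{C}([0,T],\mathbb{R}^M)) : \mu(K_n) \geq 1 - n^{-1} \text{ for all } n \geq n_L \big\},
\end{equation*}
which is closed and tight, hence compact.

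The key estimate is a Chernoff bound on $\hat{\mu}^N(\mathbf{y})(K_n^c) = N^{-1}\sum_{j\in I_N} \chi\{\mathbf{y}^j \notin K_n\}$. I split
\begin{equation*}
\chi\{\mathbf{y}^j \notin K_n\} \leq \chi\{\sup_p |z^{p,j}_0| > c_n\} + \chi\{\mathbf{y}^j - \mathbf{z}^j_0 \notin \tilde K_n\},
\end{equation*}
where $\tilde K_n$ captures the modulus-of-continuity conditions centered at zero. The first term is deterministic given $\mathbf{z}_0$, and the constraint $(\mathbf{z}_0,\mathbf{g}_0)\in\mathcal{Y}^N$ forces the empirical first marginal to be within $\tilde\delta_N$ of $\kappa$ in Wasserstein; combined with the assumed finite second moment of $\kappa$, choosing $c_n \to \infty$ sufficiently fast yields $N^{-1}\sum_j \chi\{\sup_p |z^{p,j}_0| > c_n\} \leq (2n)^{-1}$ for all large $N$ uniformly over $\mathcal{Y}^N$. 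The second term is a sum of i.i.d.\ Bernoulli variables (since $\mathbf{y}^j - \mathbf{z}^j_0 = \int_0^\cdot \sigma_s dW^j_s$ is independent of $\mathbf{z}_0$ and identically distributed across $j$), and by standard Gaussian modulus-of-continuity estimates (e.g.\ L\'evy's modulus), one can choose $\delta^{(n)}_k$ small enough to make $q_n := \mathbb{P}(\mathbf{y}^j-\mathbf{z}^j_0 \notin \tilde K_n)$ arbitrarily small.

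With these in hand, Chernoff gives, for any $b>0$,
\begin{align*}
P^N_{\mathbf{z}_0}\big(\hat{\mu}^N(\mathbf{y})(K_n^c) > n^{-1}\big) &\leq P^N_{\mathbf{z}_0}\Big(\sum_j \chi\{\mathbf{y}^j-\mathbf{z}^j_0 \notin \tilde K_n\} \geq N/(2n)\Big) \\
&\leq \big( q_n e^b + 1-q_n \big)^N \exp\big(-bN/(2n)\big),
\end{align*}
and by picking $b$ depending on $n$ (with $q_n \ll e^{-b}$) one obtains $N^{-1}\log(\cdot) \leq -n$, say. A union bound over $n \geq n_L$ yields
\begin{equation*}
\sup_{(\mathbf{z}_0,\mathbf{g}_0) \in \mathcal{Y}^N} N^{-1}\log P^N_{\mathbf{z}_0}\big(\hat{\mu}^N(\mathbf{y}) \notin \mathcal{K}_L\big) \leq -L
\end{equation*}
upon taking $n_L$ sufficiently large. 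The only mildly subtle point is the uniformity over $\mathcal{Y}^N$, which is why I separate the $\mathbf{z}_0$-dependent and $\mathbf{z}_0$-independent contributions; once that split is made the argument is essentially a classical exponential tightness computation for Gaussian path ensembles, and there is no serious obstacle.
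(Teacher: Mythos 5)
Your proof is correct and follows essentially the same route as the paper: construct nested compact path sets, form $\mathcal{K}_L$ as a countable intersection of Prokhorov-type tightness constraints, establish the per-level exponential decay via a Chernoff/union-of-events bound on the fraction of escaping trajectories (with the $\mathcal{Y}^N$ constraint controlling the $\mathbf{z}_0$-dependent part uniformly), and finish with a union bound over levels. The only cosmetic difference is that you split the escape indicator into an initial-value part and a Brownian modulus-of-continuity part before applying Chernoff, whereas the paper's Chernoff is written directly on $\chi\{\mathbf{y}^j\notin\mathcal{C}_{\epsilon}\}$ with the bad-initial-condition fraction $u^N_\epsilon$ absorbed into the Bernoulli parameter; both decompositions lead to the same estimate.
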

\begin{proof}
The following property follows straightforwardly from properties of the stochastic integral (noting that the diffusion coefficient is uniformly bounded): for any $\epsilon > 0$, there exists a compact set $\mathcal{C}_{\epsilon} \subset \mathcal{C}([0,T],\mathbb{R}^M)$ such that for all $j \in I_N$ such that $\| z_0^j \| \leq \epsilon^{-1}$,
\begin{align}
\sup_{j\in I_N}P^N_{\mathbf{z}_0}\big( y^j_{[0,T]} \notin \mathcal{C}_{\epsilon} \big) \leq \epsilon.
\end{align}
Write
\begin{align}
u^N_{\epsilon} = \sup_{(\mathbf{z}_0,\mathbf{g}_0) \in \mathcal{Y}^N}N^{-1}\sum_{j\in I_N} \chi\lbrace \| \mathbf{y}^j_{0} \| \geq \epsilon^{-1} \rbrace ,
\end{align}
and note that our assumptions on $\mathcal{Y}^N$ dictates that
\begin{align}
\lim_{\epsilon \to 0^+} \lim_{N\to\infty} u_{\epsilon}^N = 0.
\end{align}
For any $ m\in \mathbb{Z}^+$, define the set $\mathcal{L}_{m,\delta} \subset \mathcal{P}\big( \mathcal{C}([0,T],\mathbb{R}^M) \big)$ to be such that
\begin{align}
\mathcal{L}_{m,\delta} = \big\lbrace \mu \in   \mathcal{P}\big( \mathcal{C}([0,T],\mathbb{R}^M) \big) \; : \; \mu(\mathcal{C}_{m^{-1}}) \geq \delta \big) \big\rbrace 
\end{align}
We claim that for any $m\geq 1$, there exists $\delta_m > 0$ such that
\begin{align} \label{eq: exponential inequality claim}
\sup_{N\geq 1} \sup_{(\mathbf{z}_0,\mathbf{g}_0) \in \mathcal{Y}^N} N^{-1} \log P^N_{\mathbf{z}_0}\big( \hat{\mu}^N\big(\mathbf{y}\big) \notin  \mathcal{L}_{m,\delta_m} \big) \leq - m
\end{align}
To see this, employing a Chernoff Inequality, for a constant $b > 0$, for any $(\mathbf{z}_0,\mathbf{g}_0) \in \mathcal{Y}^N$,
\begin{align}
N^{-1} \log P^N_{\mathbf{z}_0}\big( \hat{\mu}^N\big(\mathbf{y}\big) \notin  \mathcal{L}_{m,\delta} \big) \leq &
\mathbb{E}^{P^N_{\mathbf{z}_0}}\bigg[ \exp\bigg( b\sum_{j\in I_N}\chi\lbrace \mathbf{y}^j_{[0,T]} \notin \mathcal{C}_{m^{-1}} \rbrace - Nb \delta \bigg) \bigg] \\
\leq& - b\delta + N^{-1} \log \big\lbrace (\epsilon + u^N_{\epsilon})\big( \exp(b) -1 \big) +1  \big\rbrace^N \\
=&  - b\delta + \log \big\lbrace(\epsilon + u^N_{\epsilon})\big( \exp(b) -1 \big) +1 \big\rbrace
\end{align}
Taking $\epsilon$ to be sufficiently small, and $b$ sufficiently large, we obtain \eqref{eq: exponential inequality claim}.

Now, for an integer $m_L$ to be specified further below, define $\mathcal{K}_L = \bigcap_{m \geq m_L}  \mathcal{L}_{m,\delta_m}$. Prokhorov's Theorem implies that $\mathcal{K}_L$ is compact. Employing a union-of-events bound, we obtain that
\begin{align}
P^N_{\mathbf{z}_0}\big( \hat{\mu}^N(\mathbf{y}) \notin \mathcal{K}_L \big) &\leq \sum_{m\geq m_L} \exp( - mN) \\
 &\leq \exp(-m_LN) \sup_{n \geq 1}\sum_{j=0}^{\infty} \exp( - jN).
\end{align}
We thus find that, for large enough $m_L$,
\[
\sup_{N\geq 1} N^{-1}\log \mathbb{P}\big( \hat{\mu}^N(\mathbf{y}) \notin \mathcal{K}_L \big) \leq - L,
\]
as required.
\end{proof}

\begin{lemma}\label{Lemma bound fluctuations of Brownian}
There exists a constant $\mathfrak{C}$ such that for any positive integer $m$ and any $a > 0$, writing $\Delta_m = Tm^{-1}$ and $t^{(m)}_i = Ti/m$, for any $N\geq 1$,
\begin{align}
\sup_{(\mathbf{z}_0,\mathbf{g}_0) \in \mathcal{Y}^N} N^{-1} \log P^N_{\mathbf{z}_0}\bigg( \sup_{0\leq t \leq \Delta_m} \sum_{j\in I_N}\sup_{0 \leq i \leq m-1} \sup_{p\in I_M} \big| y^{p,j}_{t + t^{(m)}_{i}} - y^{p,j}_{t^{(m)}_i} \big|^2 \geq Na \Delta_m \bigg) \leq  
\mathfrak{C} + \log m - \frac{a}{4} 
\end{align}

\end{lemma}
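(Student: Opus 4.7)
The plan is to use Chernoff's inequality combined with classical Gaussian maximal estimates, exploiting that $\{y^{p,j}\}_{j\in I_N}$ are independent under $P^N_{\mathbf{z}_0}$ and that the \emph{increments} of $y^{p,j}$ are independent of the initial value $z^{p,j}_0$ (so the uniformity over $\mathcal{Y}^N$ is immediate). Since all summands are non-negative, the first move is to exchange the outer suprema:
\begin{align*}
\sup_{t\in[0,\Delta_m]}\sum_{j\in I_N}\sup_{i,p}\bigl|y^{p,j}_{t+t^{(m)}_i}-y^{p,j}_{t^{(m)}_i}\bigr|^2 \leq \sum_{j\in I_N}\sup_{t\in[0,\Delta_m],\,i,\,p}\bigl|y^{p,j}_{t+t^{(m)}_i}-y^{p,j}_{t^{(m)}_i}\bigr|^2.
\end{align*}

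Next, for a parameter $\theta$ of order $1/\Delta_m$ to be chosen, apply Chernoff's inequality. Independence across $j$ factorizes the exponential moment into $N$ identical copies. Within each factor, use the crude bound $e^{\sup_{i,p} X_{i,p}}\leq \sum_{i,p}e^{X_{i,p}}$ to reduce to a single index pair:
\begin{align*}
\mathbb{E}\Bigl[\exp\bigl(\theta\sup_{t,i,p}|\cdot|^2\bigr)\Bigr] \leq mM\cdot\max_{i,p}\mathbb{E}\Bigl[\exp\bigl(\theta\sup_{t\in[0,\Delta_m]}|y^{p,j}_{t+t^{(m)}_i}-y^{p,j}_{t^{(m)}_i}|^2\bigr)\Bigr].
\end{align*}
For each $(i,p)$ the process $s\mapsto y^{p,j}_{t^{(m)}_i+s}-y^{p,j}_{t^{(m)}_i}$ is a continuous Gaussian martingale with quadratic variation bounded by $\bar\sigma^2 s$, so by Dambis--Dubins--Schwarz (or directly by the reflection principle applied to a time-changed Brownian motion) one has the sub-Gaussian tail $\mathbb{P}(\sup_{s\leq\Delta_m}|\cdot|\geq x)\leq 4\exp(-x^2/(2\bar\sigma^2\Delta_m))$. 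Choosing $\theta=(4\bar\sigma^2\Delta_m)^{-1}$ and evaluating via the layer-cake representation $\mathbb{E}[e^{\theta X^2}]=1+\int_0^\infty\theta e^{\theta u}\mathbb{P}(X^2\geq u)\,du$ bounds the per-factor moment by a universal constant $K$.

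Putting everything together, the full exponential moment is at most $(mMK)^N$, so
\begin{align*}
N^{-1}\log P^N_{\mathbf{z}_0}(\cdots) \leq -\theta a\Delta_m + \log(mMK) = -\frac{a}{4\bar\sigma^2} + \log m + \log(MK),
\end{align*}
which is the stated bound once $\mathfrak{C}$ is chosen to absorb $\log(MK)$ together with the factor $\bar\sigma^{-2}$ (or by reducing $\theta$ slightly so that the coefficient of $a$ is exactly $1/4$). The main obstacle is essentially bookkeeping: verifying that the exponential moment of $\sup_{s\leq\Delta_m}B_s^2/\Delta_m$ is uniformly bounded, which is classical, and that the argument is indifferent to $\mathbf{z}_0\in\mathcal{Y}^N$ because only Brownian increments appear. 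No appeal to the structure of $\mathcal{Y}^N$ is needed beyond this.
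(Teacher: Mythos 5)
Your proof is correct in substance but follows a genuinely different route from the paper's. The paper treats $f^N_t = \sum_{j}\sup_{i,p}(y^{p,j}_{t+t^{(m)}_i}-y^{p,j}_{t^{(m)}_i})^2$ as a nonnegative submartingale in $t$, exponentiates, and invokes Doob's submartingale inequality; this collapses the $\sup_t$ so that in the end one only needs the exponential moment of a single Gaussian increment $y^{p,j}_{t^{(m)}_{i+1}}-y^{p,j}_{t^{(m)}_i}$. You instead first push $\sup_t$ inside the sum (valid since the summands are nonnegative: for each $j$, $\sup_{i,p}(\cdot)\leq\sup_{t,i,p}(\cdot)$ pointwise, and this bound is uniform in $t$), then Chernoff, then union-bound over $(i,p)$, and then use reflection/Dambis--Dubins--Schwarz plus the layer-cake to get a universal bound on $\mathbb{E}\exp\bigl(\theta(\sup_{s\leq\Delta_m}|M_s|)^2\bigr)$. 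The paper's route avoids the reflection principle, yours avoids having to verify the submartingale property on the joint filtration of all the time-shifted increments; both are legitimate. One small inaccuracy in your closing remark: you obtain coefficient $-a/(4\bar\sigma^2)$ rather than $-a/4$, and neither of your suggested fixes works in general --- a multiplicative factor on $a$ cannot be absorbed into the additive $\mathfrak{C}$, and you cannot push $\theta$ up to $1/(4\Delta_m)$ when $\bar\sigma^2\geq 2$ without losing finiteness of the Gaussian moment. The paper itself appears to have silently normalized $\bar\sigma=1$ at this point (its intermediate constant $2^{1/2}$ forces this), so the discrepancy is shared with the original; and since the lemma is only used in Lemma \ref{Lemma Brownians Exponentially Tight} where all that matters is that the bound tends to $-\infty$ as $a\to\infty$ at fixed $m$, either constant $-a/4$ or $-a/(4\bar\sigma^2)$ serves equally well.
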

\begin{proof}
Define, for $t \in [0, \Delta_m)$,
\[
f^N_t =   \sum_{j\in I_N}\sup_{0 \leq i \leq m-1} \sup_{p\in I_M} \big( y^{p,j}_{t + t^{(m)}_i} - y^{p,j}_{t^{(m)}_i} \big)^2 
\]
Notice that $t \to f^N_t$ is a submartingale. Thus, writing  $a=(4\Delta_m)^{-1}$, $\exp\big(a f^N_t\big)$ is a submartingale. Therefore, thanks to Doob's Submartingale Inequality,
\begin{align}
\mathbb{P}\big( f^N_t \geq N x \big) &\leq \mathbb{E}\bigg[ \exp\big( a f^N_T - a Nx \big) \bigg] \\
&\leq \big\lbrace mM\big( 1-2\Delta_m \bar{\sigma} a \big)^{-1/2} \big\rbrace^N \exp(-aNx) \\
&= \big\lbrace mM 2^{1/2} \big\rbrace^N \exp\big(-Nx / (4\Delta_m) \big) 
\end{align}
\end{proof}
\bibliographystyle{plain}
\bibliography{SGBib2.bib}

\end{document}